\documentclass[final,12pt]{article}
\usepackage{amsfonts,color,morefloats,pslatex,a4wide}
\usepackage{amsmath, amssymb, mathrsfs}
\usepackage{mathtools}
\usepackage{amsthm}
\usepackage[utf8]{inputenc}
\usepackage{bm}
\usepackage{array}
\usepackage{booktabs}
\usepackage{lscape}
\usepackage{graphicx}
\usepackage{amssymb}
\usepackage{longtable}
\usepackage{threeparttable}
\usepackage{adjustbox}
\usepackage{url}
\usepackage{makecell}
\usepackage{caption}
\usepackage{amsmath,bm}
\usepackage[compress]{cite}

\newtheorem{maintheorem}{Theorem}[section]

\newtheorem{theorem}[maintheorem]{Theorem}
\newtheorem{lemma}[maintheorem]{Lemma}

\newtheorem{definition}[maintheorem]{Definition}
\newtheorem{example}[maintheorem]{Example}
\newtheorem{remark}{Remark}

\usepackage[hidelinks]{hyperref}

\usepackage{caption}
\usepackage{booktabs}
\usepackage{float}
\usepackage{enumitem}

\DeclareMathOperator{\Hull}{Hull}
\DeclareMathOperator{\Sum}{Sum}
\DeclareMathOperator{\lcm}{lcm}
\DeclareMathOperator{\dimension}{dim}

\makeatletter
\renewcommand{\thanks}[1]{\g@addto@macro\@thanks{\footnotetext{#1}}}

\newcommand{\Rmnum}[1]{\expandafter\@slowromancap\romannumeral #1@}
\makeatother

\title{Hulls and sums of separable constacyclic codes over $\mathbb{F}_q \times (\mathbb{F}_q+v\mathbb{F}_q)$ and new quantum codes}

\author{
	Yu Qian\textsuperscript{1}, 
	Yu Wang\textsuperscript{1,$\ddagger$}, 
	Liqi Wang\textsuperscript{2}
	\thanks{
	\textsuperscript{1}School of Mathematics and Statistics, Hefei University, Hefei 230601, China,\\
	\textsuperscript{2}School of Mathematics, Hefei University of Technology, Hefei 230009, China.\\
		Emails: 13856012927@163.com, 
		\textsuperscript{$\ddagger$}wangyu351@hfuu.edu.cn (Corresponding author), 
		liqiwangg@163.com.
	}
	\date{}
}

\begin{document}
	
	\maketitle
	
	\begin{abstract}
		\vspace*{.3cm}
		\noindent
		We establish the generator polynomials of the Euclidean and Hermitian duals of separable constacyclic codes over $\mathcal{S} = \mathbb{F}_q \times (\mathbb{F}_q+v\mathbb{F}_q)$, with $q$ an odd prime power and $v^2=v$, and we derive the generator polynomials of their Gray images, respectively. The generator polynomials of the Euclidean hulls and Hermitian hulls of separable constacyclic codes over $\mathcal{S}$ and their Gray images are presented, respectively. Furthermore, we provide the generator polynomials of the Euclidean sums and Hermitian sums of separable constacyclic codes and their Gray images, respectively. Finally, we propose two methods to yield quantum error-correcting codes (QECCs) from the hulls and sums of separable constacyclic codes over $\mathcal{S}$, and generate new QECCs that outperform the existing ones in terms of parameters.
		
		\noindent\textbf{Keywords} Hulls and sums, Separable constacyclic codes, Quantum construction X method.

	\end{abstract}
	
	\section{Introduction}\
	
	Quantum error correction is a key technology in quantum computing that protects quantum information from being disrupted by noise. Its core principle lies in using redundant qubits to store information and enhancing system stability through error detection and correction procedures. There are some famous methods to construct QECCs from classical error-correcting codes over finite field, such as: Calderbank–Shor–Steane (CSS) construction \cite{Calderbank1998}, Hermitian construction \cite{Ashikhmin2001} and quantum construction X \cite{Lison2014}.
	
	In recent years, an increasing number of scholars have focused their attention on  constacyclic codes over finite non-chain rings and utilized them to generate QECCs. In 2018, Gao et al. \cite{Gao2018} employed $u$-constacyclic codes over $\mathbb{F}_p+ u\mathbb{F}_p(u^2=1)$ to obtain several novel QECCs by the CSS construction. Simultaneously,  
	Ma et al. \cite{Ma2018} characterized the basic framework of $(\alpha+\beta v+\gamma v^2)$-constacyclic codes over $\mathbb{F}_q+v\mathbb{F}_q+v^2\mathbb{F}_q(v^3=v)$, and produced some new non-binary QECCs. Gao et al. \cite{Gao2019} utilized cyclic codes over $\mathbb{F}_q+v_1\mathbb{F}_q+\cdots+v_r\mathbb{F}_q(v_{i}^2=v_{i},v_{i}v_{j}=v_{j}v_{i}=0)$ to yield previously unknown $q$-ary QECCs. Wang et al. \cite{Wang2021} derived QECCs  from constacyclic codes over $\mathbb{F}_{q^2}+v\mathbb{F}_{q^2}(v^2=v)$ with the help of the Hermitian construction, and they got many QECCs superior to the known ones. Further, Gao et al. \cite{Gao2023} proposed an extension of the construction of  QECCs to entanglement-assisted quantum error-correcting codes (EAQECCs) over the same ring in \cite{Wang2021} and generated some new maximum distance separable (MDS) EAQECCs. Recently, some scholars have started to investigate the hulls and sums of constacyclic codes. Tian et al. \cite{Tian2024} produced many well-performing QECCs via the Euclidean and Hermitian hulls of constacyclic codes over the same ring in \cite{Wang2021}. Next, Pandey et al. \cite{Pandey2024} used the Euclidean hulls and sums of cyclic codes over the same ring in \cite{Gao2019} to generate QECCs and EAQECCs. Moreover, Yadav et al. \cite{Yadav2024} introduced the Hermitian hull of constacyclic codes over $\mathbb{F}_{q^2}+u\mathbb{F}_{q^2}+u^2\mathbb{F}_{q^2}+\cdots+u^{e-1}\mathbb{F}_{q^2}(u^e=1)$ to construct several optimal QECCs. Very recently, Zhang et al. \cite{Zhang2025} investigated three methods for generating QECCs from cyclic codes over $\mathbb{F}_{l}[\gamma]/\langle \gamma^2 \rangle$, i.e., CSS construction and the quantum construction X of the Euclidean and Hermitian duals, and produced a set of QECCs. The constructions of QECCs have seen notable progress (see \cite{Wang2023,Zhang2023,Li2025} and related works).
	
	Meanwhile, researchers have attempted to generalize the investigation of contacyclic codes over the single ring to the mixed alphabets, and tried to use them  to construct QECCs and EAQECCs. In 2021, Dinh et al. \cite{Dinh2021-2} studied the fundamental properties of $\mathbb{F}_q\mathcal{R}\mathcal{S}$-constacyclic codes, where $\mathcal{R} = \mathbb{F}_q + u\mathbb{F}_q(u^2=u)$ and $\mathcal{S} = \mathbb{F}_q + u\mathbb{F}_q + v\mathbb{F}_q(u^2=u, v^2=v, uv=vu=0)$. Then they produced some QECCs by the CSS construction. Later, {\c{C}}al{\i}{\c{s}}kan et al. \cite{Caliskan2023-1} discussed the Euclidean dual-containing codes over $\mathbb{F}_p \times (\mathbb{F}_p + v\mathbb{F}_p)(v^2=v)$. Then they listed a large number of new QECCs. Recently, Wang et al. \cite{Wang2025} introduced the basic framework of separable $\mathbb{F}_{q^2}\mathcal{R}$-$u$-constacyclic codes, and obtained a series of QECCs with better parameters via the Hermitian construction, where $\mathcal{R} = \mathbb{F}_{q^2} + v\mathbb{F}_{q^2}(v^2=v)$. After that, Qian et al. \cite{Qian2026} generalized $\mathcal{R}$ in \cite{Wang2025} to $R_r$, where $R_r = \mathbb{F}_{q} + v_{1}\mathbb{F}_{q} + \cdots + v_{r}\mathbb{F}_{q}(v_{i}^2 = v_{i}, v_{i}v_{j} = v_{j}v_{i}, 1 \leq i \neq j \leq r)$, and then they established two methods to produce numerous high-performance QECCs stem from separable constacyclic codes over $\mathbb{F}_{q}R_r$, including the CSS and Hermitian methods. Next, regarding the quantum construction X method over mixed alphabets, Hu et al. \cite{Hu2025} generated some good QECCs from cyclic codes over $\mathbb{F}_2 \times (\mathbb{F}_2 + v\mathbb{F}_2)$, where $v^2=v$. Subsequently, Shukla et al. \cite{Shukla2026} investigated the hulls of separable cyclic codes over $\mathbb{Z}_p\mathbb{Z}_p[v]$, where $p$ is an arbitrary prime and $v^2=v$. They used the hulls of non-separable cyclic codes over $\mathbb{Z}_2\mathbb{Z}_2[v]$ to generate EAQECCs with superior parameters. Further research can be found in \cite{Li2020,Dinh2021-1,Ashraf2022,Ali2025, Prakash2025}.
	
	Motived by the important conclusions and innovative methods in \cite{Tian2024,Wang2025,Hu2025}, we study the generator polynomials of the Euclidean and Hermitian hulls and sums of separable constacyclic codes and their Gray images over $\mathcal{S} = \mathbb{F}_q \times (\mathbb{F}_q + v\mathbb{F}_q)(v^2=v)$, and aim to derive new QECCs from the Euclidean and Hermitian hulls and sums of separable constacyclic codes over such ring.  
	
	The remainder of this paper is arranged as follows: Sect. 2 recalls the fundamental framework of constacyclic codes over $\mathbb{F}_q + v\mathbb{F}_q$ and $\mathcal{S}$. Sect. 3 first studies the Euclidean and Hermitian duals of separable $\mathcal{S}$-$t$-constacyclic codes and their Gray images, where $t=(\lambda, \beta(1-2v))$. Then the generator polynomials of the hulls and their Gray images are discussed under the Euclidean and Hermitian inner products. Further the generator polynomials of the sums and their Gray images are also provided. In Sect. 4, as an application, some QECCs with superior parameters are presented from the Euclidean and Hermitian hulls and sums of separable constacyclic codes over $\mathcal{S}$, respectively. Sect. 5 presents a summary of the whole paper.

	\section{Preliminaries}\
	
	\subsection{Constacyclic codes over $\mathcal{R}$}
	
	Suppose that \(\mathcal{R} = \mathbb{F}_q+v\mathbb{F}_q\), with \(q\) an odd prime power and \(v^2 = v\).
	Given that $\mathscr{C}_n$ is a $\mathcal{R}$-linear code of length $n$. Let
	
	\[
	C_{n,1} = \left\{ \boldsymbol{\ell}_1 \in \mathbb{F}_{q}^n \mid \exists \boldsymbol{\ell}_2 \in \mathbb{F}_{q}^n, \, (1 - v)\boldsymbol{\ell}_1 + v\boldsymbol{\ell}_2 \in \mathscr{C}_n \right\},
	\]
	
	\[
	C_{n,2} = \left\{ \boldsymbol{\ell}_2 \in \mathbb{F}_{q}^n \mid \exists \bm{\ell}_1 \in \mathbb{F}_{q}^n, \, (1 - v)\boldsymbol{\ell}_1 + v\boldsymbol{\ell}_2 \in \mathscr{C}_n \right\}.
	\]
	Thus, $C_{n,1}$ and $C_{n,2}$ are linear codes of length $n$ over $\mathbb{F}_{q}$. And
	
	\[
	\mathscr{C}_n = (1 - v)C_{n,1} \oplus vC_{n,2}.
	\]
	Therefore,
	$|\mathscr{C}_n| = |C_{n,1}||C_{n,2}|.$
	
	Denote by \( GL_2(\mathbb{F}_{q}) \) the collection of all \( 2 \times 2 \) invertible matrices over \( \mathbb{F}_{q} \).
	A Gray map from \( \mathcal{R} \) to \( \mathbb{F}_{q}^{2} \) is given as follows:
	
	\[
	\phi_M : \mathcal{R} \to \mathbb{F}_{q}^2,
	\]
	\[
	\gamma = (1 - v)\gamma_1 + v\gamma_2 \mapsto (\gamma_1, \gamma_2)M = \boldsymbol{\gamma}M,
	\]
	where \( M \in GL_2(\mathbb{F}_{q}) \). For a given matrix \( M \),
	the Gray map \( \phi_M \) is obviously bijective
	and can be generalized naturally to \( \mathcal{R}^n \):
	
	\[
	\phi_M : \mathcal{R}^n \to \mathbb{F}_{q}^{2n},
	\]
	\[
	\boldsymbol{\xi} = (\xi_0, \xi_1, \ldots, \xi_{n-1}) \mapsto (\phi_M({\xi}_0), \phi_M({\xi}_1), \ldots, \phi_M({\xi}_{n-1})).
	\]

	\begin{definition}
		Suppose $\mathscr{C}_n$ is a $\mathcal{R}$-linear code of length $n$ and $\mu$ is a unit of $\mathcal{R}$. If for any codeword \(\mathbf{c} = (c_0, c_1, \ldots, c_{n-1}) \in \mathscr{C}_n\) implying
		$(\mu c_{n-1}, c_0, c_1, \ldots, c_{n-2}) \in \mathscr{C}_n$,
		thus \(\mathscr{C}_n\) is a $\mathcal{R}$-\(\mu\)-constacyclic code. A $\mathcal{R}$-\(\mu\)-constacyclic code with \(\mu = 1\) is a cyclic code; while for \(\mu = -1\), it is a negacyclic code.
	\end{definition}
	
	Now, consider $\mu = \beta(1-2v)$, where $\beta \in \mathbb{F}_{q}\setminus \{0\}$. Similar to the discussion in \cite{Wang2021}, the following conclusions are obtained.
	
	\begin{lemma}
		Assume that $\mathscr{C}_n =(1 - v)C_{n,1} \oplus vC_{n,2}$ is a $\mathcal{R}$-$\mu$-constacyclic code of length $n$. Thus,
		
		\begin{enumerate}
			\item[(i)] $\mathscr{C}_n$ is a $\mathcal{R}$-$\mu$-constacyclic code if and only if $C_{n,1}$ and $C_{n,2}$ are $\beta$-constacyclic and $(-\beta)$-constacyclic codes of length $n$ over $\mathbb{F}_{q}$, respectively.
			
			\item[(ii)] There is a unique polynomial $g(x) = (1 - v)g_1(x) + vg_2(x)$ that satisfies $g(x) \mid (x^n - \mu)$ and for which $\mathscr{C}_n = \langle g(x) \rangle$, where $C_{n,1} = \langle g_1(x) \rangle $ and $C_{n,2} = \langle g_2(x) \rangle $. Moreover, $|\mathscr{C}_n| = q^{2n - \deg(g_1(x)) - \deg(g_2(x))}$.
			
			\item[(iii)] Take $M = \begin{pmatrix} \vartheta_1 \beta & \vartheta_1 \\ - \vartheta_2 \beta & \vartheta_2 \end{pmatrix}$, where $\vartheta_1, \vartheta_2 \in \mathbb{F}_{q}\setminus \{0\}$. Then $\phi_M(\mathscr{C}_n) = \langle g_1(x)g_2(x) \rangle$ is a $\beta^2$-constacyclic code of length $2n$ over $\mathbb{F}_{q}$.
			
		\end{enumerate}
	\end{lemma}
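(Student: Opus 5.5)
The plan is to use the idempotent decomposition $1=(1-v)+v$ throughout. The key identities are
\[
(1-v)\mu=(1-v)\beta, \qquad v\mu=-\beta v,
\]
which follow from $v(1-2v)=-v$ and $(1-v)(1-2v)=1-v$. Hence, on the $(1-v)$- and $v$-components, the unit $\mu$ acts as $\beta$ and $-\beta$ respectively.

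\emph{Part (i).} Write a codeword as $\mathbf c=(1-v)\boldsymbol\ell_1+v\boldsymbol\ell_2$ and let $\tau_\mu$ be the $\mu$-constacyclic shift. Using the identities above, I would verify directly that
\[
\tau_\mu(\mathbf c)=(1-v)\tau_\beta(\boldsymbol\ell_1)+v\,\tau_{-\beta}(\boldsymbol\ell_2).
\]
Since the decomposition $\mathscr C_n=(1-v)C_{n,1}\oplus vC_{n,2}$ is unique, $\tau_\mu(\mathscr C_n)\subseteq\mathscr C_n$ holds if and only if $\tau_\beta(C_{n,1})\subseteq C_{n,1}$ and $\tau_{-\beta}(C_{n,2})\subseteq C_{n,2}$. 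This gives both directions.

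\emph{Part (ii).} By (i) and the classical theory over $\mathbb F_q$, we have $C_{n,1}=\langle g_1\rangle$ with $g_1\mid x^n-\beta$, and $C_{n,2}=\langle g_2\rangle$ with $g_2\mid x^n+\beta$, where both $g_i$ are monic and unique. Put $g=(1-v)g_1+vg_2$. Then $(1-v)g=(1-v)g_1$ and $vg=vg_2$, so
\[
\langle g\rangle=(1-v)\langle g_1\rangle\oplus v\langle g_2\rangle=\mathscr C_n.
\]
For divisibility, write $x^n-\beta=g_1h_1$ and $x^n+\beta=g_2h_2$. Then
\[
x^n-\mu=(1-v)(x^n-\beta)+v(x^n+\beta)=g\cdot\bigl((1-v)h_1+vh_2\bigr).
\]
For uniqueness, any such $g$ with monic components determines its components as the monic generators of $C_{n,1}$ and $C_{n,2}$, which are unique. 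The cardinality follows from $|\mathscr C_n|=|C_{n,1}||C_{n,2}|$ together with $\dim C_{n,i}=n-\deg g_i$.

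\emph{Part (iii).} This is the main step. I would take the Gray image in the block arrangement $\phi_M(\mathbf c)=(\boldsymbol\ell_1,\boldsymbol\ell_2)M$ applied blockwise, namely
\[
\phi_M(\mathbf c)=\bigl(\vartheta_1\beta\boldsymbol\ell_1-\vartheta_2\beta\boldsymbol\ell_2,\ \vartheta_1\boldsymbol\ell_1+\vartheta_2\boldsymbol\ell_2\bigr).
\]
This arrangement agrees with the coordinatewise one up to a fixed coordinate permutation, which is the convention of \cite{Wang2021}. In $\mathbb F_q[x]/\langle x^{2n}-\beta^2\rangle$ this vector corresponds to the polynomial
\[
\vartheta_1(x^n+\beta)\ell_1(x)+\vartheta_2(x^n-\beta)\ell_2(x).
\]
Write $\ell_i=g_if_i$. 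Since $x^n+\beta=g_2h_2$ and $x^n-\beta=g_1h_1$, both terms lie in $\langle g_1g_2\rangle$. Moreover $g_1g_2\mid(x^n-\beta)(x^n+\beta)=x^{2n}-\beta^2$, so $\langle g_1g_2\rangle$ is a $\beta^2$-constacyclic code of length $2n$ and dimension $2n-\deg g_1-\deg g_2$. This gives the inclusion $\phi_M(\mathscr C_n)\subseteq\langle g_1g_2\rangle$.

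Since $M$ is invertible, $\phi_M$ is injective, so by (ii)
\[
|\phi_M(\mathscr C_n)|=|\mathscr C_n|=q^{2n-\deg g_1-\deg g_2}=|\langle g_1g_2\rangle|.
\]
Hence the inclusion is an equality, which also shows that $\phi_M(\mathscr C_n)$ is $\beta^2$-constacyclic. The only delicate point is fixing the block convention for the Gray map so that multiplication by $x^n$ encodes the second block. Once that is settled, the computation is routine.
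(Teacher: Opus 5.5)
Your proof is correct. The paper gives no argument of its own for this lemma; it only says the conclusions follow ``similar to the discussion in \cite{Wang2021}''. So your proposal is a self-contained proof rather than a variant of one in the text.

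Parts (i) and (ii) are the standard idempotent-decomposition arguments. For (iii), a common route is two-step. First one verifies that the Gray map intertwines the shifts, $\phi_M\circ\tau_\mu=\sigma_{\beta^2}\circ\phi_M$, where $\sigma_{\beta^2}$ is the $\beta^2$-constacyclic shift on $\mathbb{F}_q^{2n}$. This gives constacyclicity of the image for every $\mathcal{R}$-$\mu$-constacyclic code, independently of generators. Then one identifies $g_1g_2$ as the generator separately. Your polynomial identity $\vartheta_1(x^n+\beta)\ell_1(x)+\vartheta_2(x^n-\beta)\ell_2(x)$, combined with the count $|\phi_M(\mathscr{C}_n)|=|\mathscr{C}_n|=|\langle g_1g_2\rangle|$, delivers both the generator and the $\beta^2$-constacyclicity at once. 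Injectivity rests on $\det M=2\beta\vartheta_1\vartheta_2\neq 0$, which is where the hypothesis that $q$ is odd enters.

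Your remark about the block arrangement is not merely cosmetic. The paper literally extends $\phi_M$ coordinatewise, $\phi_M(\boldsymbol{\xi})=(\phi_M(\xi_0),\ldots,\phi_M(\xi_{n-1}))$. Under that convention, part (iii) holds only up to a fixed coordinate permutation. For example, take $q=3$, $n=2$, $\beta=\vartheta_1=\vartheta_2=1$, $g_1=x-1$, $g_2=x^2+1$. The interleaved image is spanned by $(2,2,1,1)$, which is not even cyclic. The block image is spanned by $(2,1,2,1)$, which is exactly $\langle g_1g_2\rangle$. So adopting the block convention, as you do, is precisely what makes the equality in (iii) literally true.
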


	\subsection{Constacyclic codes over $\mathcal{S}$}

	Let $\mathcal{S}=\mathbb{F}_q\mathcal{R}=\mathbb{F}_q \times (\mathbb{F}_q+v\mathbb{F}_q) = \left\{ (a, b) \ \middle|\ a \in \mathbb{F}_q, b \in \mathbb{F}_q+v\mathbb{F}_q \right\}$. 
	
	\begin{definition}
		Assume that $\mathcal{C}$ is a non-empty subset of $\mathbb{F}_{q}^m \times \mathcal{R}^n$. If $\mathcal{C}$ is a $\mathcal{R}$-submodule, then $\mathcal{C}$ is an $\mathcal{S}$-linear code of length $(m,n)$.
	\end{definition}

	\begin{definition}
		Suppose $\mathcal{C}$ is an $\mathcal{S}$-linear code of length $(m, n)$. Define $\mathrm{C}_m = \tau_m(\mathcal{C})$ and $\mathscr{C}_n = \tau_n(\mathcal{C})$, where the maps $\tau_m$ and $\tau_n$ are given by $\tau_m(\boldsymbol{\eta}, \boldsymbol{\xi}) = \boldsymbol{\eta}$ and $\tau_n(\boldsymbol{\eta}, \boldsymbol{\xi}) = \boldsymbol{\xi}$. Then $\mathrm{C}_m$ is a linear code of length $m$ over $\mathbb{F}_q$ and $\mathscr{C}_n$ is a $\mathcal{R}$-linear code of length $n$. The code $\mathcal{C}$ is separable if $\mathcal{C} = \mathrm{C}_m \times \mathscr{C}_n$. In this case, $|\mathcal{C}| = |\mathrm{C}_m| \cdot |\mathscr{C}_n|$.
		
	\end{definition}

	\begin{definition}
		Suppose $\mathcal{C}$ is an $\mathcal{S}$-linear code of length $(m,n)$ and $t=(\lambda,\mu)$ is a unit of $\mathcal{S}$, where $\lambda$ and $\mu$ are units of $\mathbb{F}_{q}$ and $\mathcal{R}$, respectively. If for any codeword \(\mathbf{c} = (c_{1,0}, c_{1,1}, \ldots, c_{1,m-1}, c_{2,0},c_{2,1}, 
		\\\ldots, c_{2,n-1}) \in \mathcal{C}\) implying
		$(\lambda c_{1,m-1}, c_{1,0}, \ldots, c_{1,m-2}, \mu c_{2,n-1}, c_{2,0}, \ldots, c_{2,n-2}) \in \mathcal{C}$, thus \(\mathcal{C}\) is an $\mathcal{S}$-\(t\)-constacyclic code.
	\end{definition}

	From \cite{Qian2026}, let $r=1$ in $\mathbb{F}_{q}R_r$, then $\mathcal{S} = \mathbb{F}_{q}R_1$, $e_0 = 1-v$ and $e_1 = v$. Further, suppose $\xi_{0}=\beta$ and $\xi_{1}=-\beta$, thus $t=(\lambda, \beta(1-2v))$. In what follows, assume that $t=(\lambda,\mu)$, where $\mu = \beta(1-2v)$ and $\lambda, \beta \in \mathbb{F}_{q}\setminus \{0\}$. Therefore,  the following lemmas are all obviously true.

	\begin{lemma}(\cite{Qian2026})
		Suppose \( \mathcal{C} \) is an $\mathcal{S}$-$t$-constacyclic code of length $(m, n)$. Thus,
		\[ \mathcal{C} = \langle (g_0(x),0), (y(x),(1 - v)g_1(x) + vg_2(x)) \rangle, \] where $g_0(x), y(x)\in \mathbb{F}_{q}[x]/\langle x^m - \lambda \rangle$, $g_i(x) \in \mathbb{F}_q[x]$, $i=1,2$, and \( g_0(x) \mid (x^m - \lambda)\), \(g_1(x) \mid (x^n - \beta) \), \(g_2(x) \mid (x^n + \beta) \).	
	\end{lemma}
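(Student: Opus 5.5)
The strategy is to identify $\mathcal{C}$ with an $\mathcal{R}[x]$-submodule of $\mathbb{F}_q[x]/\langle x^m-\lambda\rangle \times \mathcal{R}[x]/\langle x^n-\mu\rangle$ and then read off its generators from two pieces: the kernel and the image of the projection $\tau_n$. Here a word $(\boldsymbol{\eta},\boldsymbol{\xi})$ is sent to $(\eta(x),\xi(x))$, and $\mathcal{R}[x]$ acts with $v$ acting as $0$ on the first component, so that $v\cdot(a,b)=(0,vb)$. This is the same convention used for $\mathbb{F}_qR_r$ in \cite{Qian2026}. With this identification, the $t$-constacyclic shift is multiplication by $x$ in both components. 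Hence $\mathcal{C}$ is an $\mathcal{R}[x]$-submodule, and $\tau_n(\mathcal{C})=\mathscr{C}_n$ is an $\mathcal{R}$-$\mu$-constacyclic code of length $n$.

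First, apply Lemma 2.2 (ii) to $\mathscr{C}_n$. It gives a unique $g(x)=(1-v)g_1(x)+vg_2(x)$ with $\mathscr{C}_n=\langle g(x)\rangle$. By Lemma 2.2 (i) and (ii), $g_1(x)\mid x^n-\beta$ and $g_2(x)\mid x^n+\beta$. Choose some $y(x)$ with $(y(x),g(x))\in\mathcal{C}$; one exists because $g(x)\in\tau_n(\mathcal{C})$.

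Second, consider the kernel $K=\{(a(x),0)\in\mathcal{C}\}$. Since $v$ kills the first component, $(1-v)$ acts on it as the identity. So $K$ is closed under $\mathbb{F}_q$-linear combinations and under multiplication by $x$, and its first components form an ideal of $\mathbb{F}_q[x]/\langle x^m-\lambda\rangle$. This is a principal ideal ring, so $K=\langle (g_0(x),0)\rangle$ with $g_0(x)\mid x^m-\lambda$, taking the monic divisor generating the ideal.

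Finally, show that these two elements generate $\mathcal{C}$. Take $(a(x),b(x))\in\mathcal{C}$. Since $b\in\mathscr{C}_n=\langle g\rangle$, we can write $b=r(x)g(x)$ with $r(x)\in\mathcal{R}[x]$. Then
\[
(a,b)-r(x)\cdot(y,g)=(a-\bar r\,y,\,0)\in K,
\]
where $\bar r$ is the image of $r$ under $v\mapsto 0$, so this difference is a multiple of $(g_0,0)$. The reverse inclusion holds because both generators lie in $\mathcal{C}$.

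The main subtlety is making the module action precise so that the mixed alphabet behaves correctly. Concretely, the first coordinate must be an $\mathcal{R}$-module through $v\mapsto 0$, so that multiplying by $r(x)$ acts on $y$ by $\bar r$. Under that convention the rest is the standard two-step argument. One can also reduce $y(x)$ modulo $g_0(x)$, although the statement as given does not require this normalization.
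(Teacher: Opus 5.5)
Your proof is correct. The paper gives no argument of its own for this lemma. It only specializes the structure theorem for $\mathbb{F}_qR_r$-codes in \cite{Qian2026} to $r=1$, $e_0=1-v$, $e_1=v$, $\xi_0=\beta$, $\xi_1=-\beta$, and declares the result obvious.

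You instead prove the $r=1$ case directly, using the standard kernel/image argument for mixed-alphabet constacyclic codes:
\begin{itemize}
\item the image $\tau_n(\mathcal{C})$ is an ideal of $\mathcal{R}[x]/\langle x^n-\mu\rangle$, so Lemma 2.2 supplies $g=(1-v)g_1+vg_2$ with $g_1\mid x^n-\beta$ and $g_2\mid x^n+\beta$;
\item the kernel $\{(a,0)\in\mathcal{C}\}$ is an ideal of the principal ideal ring $\mathbb{F}_q[x]/\langle x^m-\lambda\rangle$, which gives $g_0$;
\item subtracting $r(x)\cdot(y,g)$ reduces any codeword to that kernel.
\end{itemize}

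Your version is self-contained. It also makes explicit the $\mathcal{R}[x]$-module structure on $\mathbb{F}_q[x]/\langle x^m-\lambda\rangle\times\mathcal{R}[x]/\langle x^n-\mu\rangle$, which this paper never specifies. The citation, by contrast, gives generality in $r$.

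One small point: you do not need your claim that the convention $v\mapsto 0$ matches \cite{Qian2026}. The argument only uses that $r\mapsto\bar r$ is a ring homomorphism $\mathcal{R}[x]\to\mathbb{F}_q[x]$ restricting to the identity on $\mathbb{F}_q[x]$. So it works verbatim with $v\mapsto 1$ as well.
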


	\begin{lemma}(\cite{Qian2026})
		Suppose $\mathcal{C}$ is a separable $\mathcal{S}$-$t$-constacyclic code of length $(m, n)$. Thus $C_m$ is a $\lambda$-constacyclic code of length $m$ over $\mathbb{F}_{q}$ and $\mathscr{C}_n$ is a $\mathcal{R}$-\(\mu\)-constacyclic code of length $n$.
		
	\end{lemma}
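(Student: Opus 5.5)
The plan is to check the constacyclic shift one component at a time, using separability ($\mathcal{C}=\mathrm{C}_m\times\mathscr{C}_n$) to separate the two coordinate blocks. Let $T_t$ denote the $t$-constacyclic shift on $\mathbb{F}_q^m\times\mathcal{R}^n$. Write $T_\lambda$ for the $\lambda$-constacyclic shift on $\mathbb{F}_q^m$ and $T_\mu$ for the $\mu$-constacyclic shift on $\mathcal{R}^n$. By the definition of an $\mathcal{S}$-$t$-constacyclic code, $T_t(\boldsymbol{\eta},\boldsymbol{\xi})=(T_\lambda(\boldsymbol{\eta}),T_\mu(\boldsymbol{\xi}))$. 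Linearity is immediate, since the projections $\tau_m$ and $\tau_n$ are linear and so $\mathrm{C}_m$ and $\mathscr{C}_n$ are linear codes. It therefore remains to show that $\mathrm{C}_m$ is closed under $T_\lambda$ and $\mathscr{C}_n$ is closed under $T_\mu$.

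For $\mathrm{C}_m$, take $\boldsymbol{\eta}\in\mathrm{C}_m=\tau_m(\mathcal{C})$. Then there is some $\boldsymbol{\xi}$ with $(\boldsymbol{\eta},\boldsymbol{\xi})\in\mathcal{C}$. Since $\mathcal{C}$ is $t$-constacyclic, $(T_\lambda(\boldsymbol{\eta}),T_\mu(\boldsymbol{\xi}))\in\mathcal{C}$. Applying $\tau_m$ gives $T_\lambda(\boldsymbol{\eta})\in\mathrm{C}_m$. Hence $\mathrm{C}_m$ is a $\lambda$-constacyclic code of length $m$ over $\mathbb{F}_q$.

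For $\mathscr{C}_n$, the same argument with $\tau_n$ gives $T_\mu(\boldsymbol{\xi})\in\mathscr{C}_n$ for every $\boldsymbol{\xi}\in\mathscr{C}_n$. We also need $\mathscr{C}_n$ to be an $\mathcal{R}$-submodule, which follows because $\tau_n$ is $\mathcal{R}$-linear. So $\mathscr{C}_n$ is an $\mathcal{R}$-$\mu$-constacyclic code. By Lemma 2.2(i), this also yields the finer description: $\mathscr{C}_n=(1-v)C_{n,1}\oplus vC_{n,2}$ with $C_{n,1}$ $\beta$-constacyclic and $C_{n,2}$ $(-\beta)$-constacyclic.

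Closure under the shift does not actually use separability, because projections of a shift-invariant code are already shift-invariant. Separability is what guarantees that $\mathcal{C}$ is recovered as $\mathrm{C}_m\times\mathscr{C}_n$. If the converse direction is wanted as well, take $\mathrm{C}_m$ $\lambda$-constacyclic and $\mathscr{C}_n$ $\mu$-constacyclic. Then $T_t(\mathrm{C}_m\times\mathscr{C}_n)=T_\lambda(\mathrm{C}_m)\times T_\mu(\mathscr{C}_n)\subseteq \mathrm{C}_m\times\mathscr{C}_n$, and this is where separability is essential.

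The only point requiring care is confirming that the $\mathcal{R}$-action on the $\mathbb{F}_q^m$ block is compatible with the projection. I would take the action to be $v\cdot\boldsymbol{\eta}=0$ (equivalently, multiplication through $\mathcal{R}\to\mathbb{F}_q$, $v\mapsto 0$), so that $\tau_m$ is a module homomorphism. With that convention, nothing beyond the definitions is needed.
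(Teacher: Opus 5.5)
Your proof is correct. The paper gives no argument of its own for this lemma: it imports the statement from \cite{Qian2026} by specializing $r=1$, $e_0=1-v$, $e_1=v$, $\xi_0=\beta$, $\xi_1=-\beta$, and calls the consequences ``obviously true.''

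The comparison is therefore between citing the general $\mathbb{F}_qR_r$ result and your direct check from the definitions of $\tau_m,\tau_n$ and of an $\mathcal{S}$-$t$-constacyclic code. Your route is self-contained and slightly sharper. Projecting the shifted word $(T_\lambda(\boldsymbol{\eta}),T_\mu(\boldsymbol{\xi}))$ with $\tau_m$ and $\tau_n$ shows the conclusion holds for every $\mathcal{S}$-$t$-constacyclic code, separable or not. You also correctly locate where separability matters: in the converse, where $\mathrm{C}_m\times\mathscr{C}_n$ with constacyclic factors is shown to be $t$-constacyclic. An argument that leans on separability would instead place $(\mathbf{c}_1,\mathbf{0})$ in $\mathcal{C}$ and shift it. That also works, but it hides the fact that the hypothesis is unnecessary for this direction. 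What the citation buys the paper is uniformity over all $r$, at the cost of giving the reader no argument.

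Your worry about the $\mathcal{R}$-action on $\mathbb{F}_q^m$ is harmless. Any standard convention (e.g.\ acting through $a+vb\mapsto a$) restricts to ordinary scalar multiplication by $\mathbb{F}_q$, which is all you need for $\mathrm{C}_m$ to be $\mathbb{F}_q$-linear. Independently of that choice, $\tau_n$ is $\mathcal{R}$-linear, so $\mathscr{C}_n$ is an $\mathcal{R}$-submodule.
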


	\begin{lemma}(\cite{Qian2026}) 
		Suppose $\mathcal{C} = \langle (g_0(x), 0), (y(x), (1 - v)g_1(x) + vg_2(x)) \rangle$ is a separable $\mathcal{S}$-$t$-constacyclic code. Then the following equivalences hold:
		
		\begin{enumerate}
			
			\item[(i)] $g_0(x)|y(x);$
			
			\item[(ii)] $y(x)=0;$
			
			\item[(iii)] $\mathcal{C} = \langle (g_0(x), 0), (0, (1 - v)g_1(x) + vg_2(x)) \rangle;$
			
			\item[(iv)] $C_m = \langle g_0(x) \rangle$, $\mathscr{C}_n = \langle (1 - v)g_1(x) + vg_2(x) \rangle.$
			
		\end{enumerate}
		
	\end{lemma}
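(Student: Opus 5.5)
We prove the equivalences by the cycle (i)$\Rightarrow$(ii)$\Rightarrow$(iii)$\Rightarrow$(iv)$\Rightarrow$(i), using separability $\mathcal{C}=C_m\times\mathscr{C}_n$ with $C_m=\tau_m(\mathcal{C})$ and $\mathscr{C}_n=\tau_n(\mathcal{C})$. By the preceding lemma, $C_m$ is a $\lambda$-constacyclic code over $\mathbb{F}_q$ and $\mathscr{C}_n$ is an $\mathcal{R}$-$\mu$-constacyclic code. We also use that in the generator form of Lemma 2.4, $g_0(x)$ generates the subcode $\{a : (a,0)\in\mathcal{C}\}$, and $y(x)$ is determined only modulo $g_0(x)$. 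So we normalize $y(x)$ to be either $0$ or of degree $<\deg g_0(x)$, as in the standard description from \cite{Qian2026}.

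\textbf{(i)$\Rightarrow$(ii).} If $g_0(x)\mid y(x)$, write $y=a g_0$. Replacing the second generator by $(y,(1-v)g_1+vg_2)-a(x)(g_0,0)$ does not change $\mathcal{C}$. Hence, by uniqueness of the normalized $y$, we get $y(x)=0$.

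\textbf{(ii)$\Rightarrow$(iii)} is immediate by substitution.

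\textbf{(iii)$\Rightarrow$(iv).} Projecting the generators of (iii) gives $\tau_m(\mathcal{C})=\langle g_0(x)\rangle$ and $\tau_n(\mathcal{C})=\langle (1-v)g_1(x)+vg_2(x)\rangle$. This follows because $\tau_m,\tau_n$ are $\mathcal{R}[x]$-module homomorphisms and the second coordinate of $(g_0,0)$ vanishes, as does the first coordinate of $(0,\cdot)$.

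\textbf{(iv)$\Rightarrow$(i).} This is the step carrying the content, and it is where separability is used. From the generator $(y,(1-v)g_1+vg_2)\in\mathcal{C}$ we get $y(x)\in\tau_m(\mathcal{C})=C_m=\langle g_0(x)\rangle$. Since $g_0\mid x^m-\lambda$, membership in $\langle g_0\rangle$ inside $\mathbb{F}_q[x]/\langle x^m-\lambda\rangle$ means $g_0(x)\mid y(x)$. Here separability is really needed: $C_m$ equals the torsion part $\{a:(a,0)\in\mathcal{C}\}$, which is generated by $g_0$.

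To justify this identification, take any $a\in C_m$. Separability gives $(a,0)\in C_m\times\mathscr{C}_n=\mathcal{C}$. Hence the projection $C_m$ coincides with the kernel-type subcode generated by $g_0(x)$, and (iv) is consistent with (i) and with the normalization.

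The main obstacle is being careful about this identification: that $g_0(x)$ in Lemma 2.4 generates exactly $\{a:(a,0)\in\mathcal{C}\}$, and that $y$ is normalized modulo $g_0$. I will state these facts explicitly, citing the construction in \cite{Qian2026}. The rest of the argument is routine projection bookkeeping.
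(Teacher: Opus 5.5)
The paper gives no proof of this lemma. It is quoted, with $r=1$, from the cited work of Qian et al.\ on $\mathbb{F}_qR_r$ and declared obviously true. Your cycle (i)$\Rightarrow$(ii)$\Rightarrow$(iii)$\Rightarrow$(iv)$\Rightarrow$(i) is a correct self-contained proof. You are right to make the canonical-form conventions explicit, because the printed statement suppresses them and is false without them. In particular, the normalisation $\deg y<\deg g_0$ is needed for (i)$\Rightarrow$(ii). For a proper divisor $g_0$ of $x^m-\lambda$, taking $y=g_0$ gives the separable code $\langle g_0\rangle\times\langle (1-v)g_1+vg_2\rangle$, which satisfies (i), (iii) and (iv) but not (ii). Once the normalisation is assumed, you can replace the appeal to uniqueness of $y$ with a one-line degree argument: a nonzero multiple of $g_0$ has degree at least $\deg g_0$.

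One point needs correcting: none of your four implications uses separability. In (iv)$\Rightarrow$(i), the membership $y=\tau_m\bigl(y,(1-v)g_1+vg_2\bigr)\in C_m=\langle g_0\rangle$ comes straight from (iv). So your sentence saying separability is really needed at that step is misplaced. The equivalence of (i)--(iv) holds for every $\mathcal{S}$-$t$-constacyclic code in canonical form. What separability adds is that the four conditions actually \emph{hold}. This is how the paper uses the lemma later: to write every separable code as $\langle (g_0,0),(0,(1-v)g_1+vg_2)\rangle$ with $C_m=\langle g_0\rangle$ and $\mathscr{C}_n=\langle (1-v)g_1+vg_2\rangle$.

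Your paragraph justifying the identification is exactly this missing implication. Separability puts $(a,0)\in\mathcal{C}$ for every $a\in C_m$. Hence $C_m\subseteq\{a:(a,0)\in\mathcal{C}\}=\langle g_0\rangle\subseteq C_m$, and therefore $g_0\mid y$. Present it as a separate step, separable $\Rightarrow$ (i), that feeds into the cycle, rather than as a gloss on (iv)$\Rightarrow$(i). This step is where it becomes essential that $g_0$ generates exactly $\{a:(a,0)\in\mathcal{C}\}$.
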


	A Gray map \( \Phi_M \) from \( \mathcal{S} \) to \( \mathbb{F}_{q}^{3} \) is now defined as follows:
	
	\[
	\Phi_M : \mathcal{S} \to \mathbb{F}_{q}^{3},
	\]
	\[
	(\eta, \xi) \mapsto (\eta, \phi_M(\xi)),
	\]
	where \( M \in GL_2(\mathbb{F}_{q}) \).
	
	Let \( \boldsymbol{\eta} = (\eta_0, \eta_1, \dots, \eta_{m-1}) \in \mathbb{F}_{q}^m \), \( \boldsymbol{\xi} = (\xi_0, \xi_1, \dots, \xi_{n-1}) \in \mathcal{R}^n \). The Gray map \( \Phi_M \) can also be extended to \( \mathbb{F}_{q}^m \times \mathcal{R}^n \) naturally as follows:
	
	\[
	\Phi_M : \mathbb{F}_{q}^m \times \mathcal{R}^n \to \mathbb{F}_{q}^{m+2n},
	\]
	\[
	(\eta_0, \eta_1, \dots, \eta_{m-1}, \xi_0, \xi_1, \dots, \xi_{n-1}) \mapsto (\eta_0, \eta_1, \dots, \eta_{m-1}, \phi_M({\xi}_0), \phi_M({\xi}_1), \dots, \phi_M({\xi}_{n-1})).
	\]
	That is, $\Phi_M(\boldsymbol{\eta}, \boldsymbol{\xi}) = (\boldsymbol{\eta}, \phi_M(\boldsymbol{\xi}))$. From the linearity of $\phi_M$, it follows readily that $\Phi_M$ is also linear.
	
	For any element \( (\boldsymbol{\eta}, \boldsymbol{\xi}) \in \mathbb{F}_{q}^m \times \mathcal{R}^n \), define its Lee weight as 
	\[ w_L(\boldsymbol{\eta}, \boldsymbol{\xi}) = w_H(\boldsymbol{\eta}) + w_L(\boldsymbol{\xi}) = w_H(\boldsymbol{\eta}) + w_H(\phi_M(\boldsymbol{\xi})). \]
	 
	Additionally, the Lee distance between any two elements \( \boldsymbol{\varepsilon}_1, \boldsymbol{\varepsilon}_2 \) of \( \mathbb{F}_{q}^m \times \mathcal{R}^n \) is defined as
	\[ d_L(\boldsymbol{\varepsilon}_1, \boldsymbol{\varepsilon}_2) = w_L(\boldsymbol{\varepsilon}_1 - \boldsymbol{\varepsilon}_2) = w_H(\Phi_M(\boldsymbol{\varepsilon}_1 - \boldsymbol{\varepsilon}_2)). \]

	\begin{lemma}
		Suppose $\mathcal{C} = C_m \times \mathscr{C}_n$ is a separable $\mathcal{S}$-linear code of length $(m,n)$. Then $\Phi_M(\mathcal{C}) = C_m \times \phi_M(\mathscr{C}_n)$. In addition, $d(\Phi_M(\mathcal{C})) = \min\{d(C_m), d(\phi_M(\mathscr{C}_n))\}$.
	\end{lemma}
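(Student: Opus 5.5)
The plan is to prove the two claims in turn: first the set identity $\Phi_M(\mathcal{C}) = C_m \times \phi_M(\mathscr{C}_n)$, straight from the definitions, and then the distance formula from the additivity of Hamming weight across the two blocks of coordinates. For the identity, separability gives $\mathcal{C} = C_m \times \mathscr{C}_n$, and by construction $\Phi_M(\boldsymbol{\eta},\boldsymbol{\xi}) = (\boldsymbol{\eta},\phi_M(\boldsymbol{\xi}))$. Hence
\[
\Phi_M(\mathcal{C}) = \{(\boldsymbol{\eta},\phi_M(\boldsymbol{\xi})) : \boldsymbol{\eta}\in C_m,\ \boldsymbol{\xi}\in\mathscr{C}_n\} = C_m\times\phi_M(\mathscr{C}_n).
\]
Both inclusions are immediate, because $\boldsymbol{\eta}$ and $\boldsymbol{\xi}$ range independently over $C_m$ and $\mathscr{C}_n$. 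That independence is exactly what separability provides.

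For the distance, I would work with nonzero codewords, which is legitimate because $\Phi_M$ is linear. For any codeword $(\boldsymbol{a},\boldsymbol{b})\in C_m\times\phi_M(\mathscr{C}_n)$ we have $w_H(\boldsymbol{a},\boldsymbol{b}) = w_H(\boldsymbol{a}) + w_H(\boldsymbol{b})$.

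The lower bound comes first. If $(\boldsymbol{a},\boldsymbol{b})\neq 0$, then at least one of $\boldsymbol{a}$, $\boldsymbol{b}$ is nonzero. Its weight alone is already at least $d(C_m)$ or at least $d(\phi_M(\mathscr{C}_n))$, respectively. Either way the weight of the codeword is at least $\min\{d(C_m), d(\phi_M(\mathscr{C}_n))\}$.

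The upper bound uses the fact that $0\in C_m$ and $0\in\phi_M(\mathscr{C}_n)$, since both are linear codes. Take a minimum-weight codeword $\boldsymbol{a}$ of $C_m$; then $(\boldsymbol{a},0)$ lies in the product and has weight $d(C_m)$. Likewise, for a minimum-weight codeword $\boldsymbol{b}$ of $\phi_M(\mathscr{C}_n)$, the word $(0,\boldsymbol{b})$ lies in the product and has weight $d(\phi_M(\mathscr{C}_n))$. Together these show the minimum is attained, which gives equality.

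No step here is a real obstacle. The only point needing care is the degenerate case where one component code is zero, so that its minimum distance is undefined or taken as $\infty$. I would state the convention $d(\{0\})=\infty$, under which the formula still holds. The argument also relies on linearity, to reduce minimum distance to minimum nonzero weight, and on separability, to allow pairing an arbitrary $\boldsymbol{a}$ with $\boldsymbol{b}=0$.
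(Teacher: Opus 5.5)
Your proposal is correct and follows the paper's proof: the set identity $\Phi_M(\mathcal{C}) = C_m \times \phi_M(\mathscr{C}_n)$ comes directly from $\Phi_M(\boldsymbol{\eta},\boldsymbol{\xi}) = (\boldsymbol{\eta},\phi_M(\boldsymbol{\xi}))$ together with separability, which is the same argument as the paper's two explicit inclusions. For the distance formula the paper only says it is easily seen, and your argument fills in that step in the standard way: weight additivity over the two coordinate blocks gives the lower bound, the codewords $(\boldsymbol{a},0)$ and $(0,\boldsymbol{b})$ give the upper bound, and the convention $d(\{0\})=\infty$ handles the degenerate case.
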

	
	\begin{proof}
		For any $\mathbf{c} \in \Phi_M(\mathcal{C})$, then there exists a $(\mathbf{c}_1, \mathbf{c}_2) \in \mathcal{C} = C_m \times \mathscr{C}_n$ such that $\mathbf{c} = \Phi_M(\mathbf{c}_1, \mathbf{c}_2)$. By the definition of $\Phi_M$:
		\[\mathbf{c} = (\mathbf{c}_1, \phi_M(\mathbf{c}_2)),\]
		where $\mathbf{c}_1 \in C_m$, $\mathbf{c}_2 \in \mathscr{C}_n$, so $\phi_M(\mathbf{c}_2) \in \phi_M(\mathscr{C}_n)$.
		Hence, $\mathbf{c} \in C_m \times \phi_M(\mathscr{C}_n)$, i.e., $\Phi_M(\mathcal{C}) \subset C_m \times \phi_M(\mathscr{C}_n)$.
		
		On the other hand, for every $\mathbf{c} = (\mathbf{c}_1, \mathbf{c}_2) \in C_m \times \phi_M(\mathscr{C}_n)$, then
		$\mathbf{c}_1 \in C_m$ and there exists an $\mathbf{a} \in \mathscr{C}_n$ such that $\mathbf{c}_2= \phi_M(\mathbf{a})$.
		Since $\mathcal{C} = C_m \times \mathscr{C}_n$, $(\mathbf{c}_1, \mathbf{a}) \in \mathcal{C}$. By $\Phi_M$,
		\[\Phi_M(\mathbf{c}_1, \mathbf{a}) = (\mathbf{c}_1, \phi_M(\mathbf{a})) = (\mathbf{c}_1, \mathbf{c}_2) = \mathbf{c}.\]
		Therefore, $\mathbf{c} \in \Phi_M(\mathcal{C})$, i.e., $C_m \times \phi_M(\mathscr{C}_n) \subset \Phi_M(\mathcal{C})$.
		
		In conclusion, $\Phi_M(\mathcal{C}) = C_m \times \phi_M(\mathscr{C}_n)$.
		Moreover, one can easily see that \[d(\Phi_M(\mathcal{C})) = \min\{d(C_m), d(\phi_M(\mathscr{C}_n))\}.\]
	\end{proof}

	\begin{lemma}
		Suppose $\mathcal{C} = \langle (g_0(x), 0), (0, (1 - v)g_1(x) + vg_2(x)) \rangle$ is a separable $\mathcal{S}$-$t$-constacyclic code of length $(m, n)$ and $M = \begin{pmatrix} \vartheta_1 \beta & \vartheta_1 \\ - \vartheta_2 \beta & \vartheta_2 \end{pmatrix}$, $\beta, \vartheta_1, \vartheta_2 \in \mathbb{F}_{q}\setminus \{0\}$. Then 
		\begin{enumerate}
			
			\item[(i)] \( \Phi_M(\mathcal{C}) = \langle (g_0(x),0), (0,g_1(x)g_2(x)) \rangle \) is a $(\lambda,\beta^2)$-constacyclic code of length $m+2n$ over $\mathbb{F}_{q};$
			
			\item[(ii)] $C_m = \langle g_0(x) \rangle$, $\phi_M(\mathscr{C}_n) = \langle g_1(x)g_2(x) \rangle;$
			
			\item[(iii)] $|\Phi_M(\mathcal{C})| = q^{m + 2n - \deg(g_0(x)) -  \deg(g_1(x)) - \deg(g_2(x))}.$
			
		\end{enumerate}

	\end{lemma}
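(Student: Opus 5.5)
The plan is to reduce everything to the two components, using the separable product structure together with the earlier lemmas. By the equivalence lemma for separable codes (from \cite{Qian2026}), $\mathcal{C}=\langle (g_0(x),0),(0,(1-v)g_1(x)+vg_2(x))\rangle$ separable gives $C_m=\langle g_0(x)\rangle$ and $\mathscr{C}_n=\langle (1-v)g_1(x)+vg_2(x)\rangle$. The lemma just proved states $\Phi_M(\mathcal{C})=C_m\times\phi_M(\mathscr{C}_n)$. So the first component is $\langle g_0(x)\rangle$, a $\lambda$-constacyclic code of length $m$.

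For the second component I will invoke part (iii) of the lemma on $\mathcal{R}$-$\mu$-constacyclic codes, with the same matrix $M=\begin{pmatrix}\vartheta_1\beta&\vartheta_1\\-\vartheta_2\beta&\vartheta_2\end{pmatrix}$. It gives $\phi_M(\mathscr{C}_n)=\langle g_1(x)g_2(x)\rangle$, a $\beta^2$-constacyclic code of length $2n$. Here $g_1\mid x^n-\beta$ and $g_2\mid x^n+\beta$, so $g_1g_2\mid x^{2n}-\beta^2$, which is consistent. This proves (ii).

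For (i), I will write $\Phi_M(\mathcal{C})$ as the direct product $\langle g_0(x)\rangle\times\langle g_1(x)g_2(x)\rangle$. In the mixed polynomial notation this is the code generated by $(g_0(x),0)$ and $(0,g_1(x)g_2(x))$ in $\mathbb{F}_q[x]/\langle x^m-\lambda\rangle\times\mathbb{F}_q[x]/\langle x^{2n}-\beta^2\rangle$. It is closed under the mixed shift (the $\lambda$-shift on the first $m$ coordinates and the $\beta^2$-shift on the last $2n$), because each factor is closed under its own shift. This is exactly the meaning of a $(\lambda,\beta^2)$-constacyclic code of length $m+2n$.

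For (iii), the cardinality is $|C_m|\cdot|\phi_M(\mathscr{C}_n)|$. Since $\phi_M$ is a bijection, $|\phi_M(\mathscr{C}_n)|=|\mathscr{C}_n|=q^{2n-\deg g_1-\deg g_2}$ by part (ii) of the $\mathcal{R}$-lemma, and $|C_m|=q^{m-\deg g_0}$. Multiplying gives the stated exponent.

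The only delicate point is making precise what "$(\lambda,\beta^2)$-constacyclic code of length $m+2n$" means. The paper uses it as a mixed-alphabet notion, namely a double constacyclic structure, not an ordinary constacyclic code of length $m+2n$. I will state that convention explicitly before checking shift-invariance componentwise. The rest is direct citation of the earlier results.
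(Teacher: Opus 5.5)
Your proposal is correct and follows essentially the paper's route: (ii) comes from Lemma 2.8(iv) together with Lemma 2.2(iii), and (iii) from $|\Phi_M(\mathcal{C})|=|C_m|\cdot|\phi_M(\mathscr{C}_n)|$ plus the bijectivity of $\phi_M$. The only difference is in (i): the paper simply cites Theorem 3.15 of Wang et al.\ (2025), whereas you derive the $(\lambda,\beta^2)$-constacyclic structure directly from the product decomposition $\Phi_M(\mathcal{C})=C_m\times\phi_M(\mathscr{C}_n)$ of Lemma 2.9 and componentwise shift-invariance, which is a self-contained proof of the same fact.
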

	
	\begin{proof}
		(i) As in the proof of Theorem 3.15 in \cite{Wang2025}, \( \Phi_M(\mathcal{C}) = \langle (g_0(x),0), (0,g_1(x)g_2(x)) \rangle \) is a $(\lambda,\beta^2)$-constacyclic code of length $m+2n$ over $\mathbb{F}_{q}$. 
		
		(ii) This conclusion is immediate from Lemmas 2.8 and 2.2.

		(iii) Since $|\Phi_M(\mathcal{C})| = |\mathrm{C}_m| \cdot |\phi_M(\mathscr{C}_n)|$, and the bijectivity of $\phi_M$,
		\[|\Phi_M(\mathcal{C})| = |\mathrm{C}_m| \cdot |\mathscr{C}_n| = q^{m + 2n - \deg(g_0(x)) -  \deg(g_1(x)) - \deg(g_2(x))}.\]
	\end{proof}

	\section{Hulls and sums of $\mathcal{S}$-$t$-constacyclic codes}
	
	Here, suppose $t=(\lambda,\mu)$, where $\mu = \beta(1-2v)$ and $\lambda, \beta \in \mathbb{F}_{q}\setminus \{0\}$.
	
	\subsection{Euclidean and Hermitian dual codes and their Gray images}
	
	Let $q=e^l$ with $e$ an odd prime and $l > 0$ an integer.
	
	\begin{definition}({\cite{Fan2017}})
		For any two elements
		\( \mathbf{x} = (x_0, x_1, \ldots, x_{m-1}) \),
		\( \mathbf{y} = (y_0, y_1, \ldots, y_{m-1}) \) in \( \mathbb{F}_q^m \),
		and for any integer \( k \), \( 0 \leq k < l \), their \( k \)-Galois inner product is:
		\[
		\langle \mathbf{x}, \mathbf{y} \rangle_k = \sum_{i=0}^{m-1} x_i y_i^{e^k}.
		\]
		
	\end{definition}
	
	Consider a linear code \( C \) of length \( m \) over \( \mathbb{F}_q \), its \( k \)-Galois dual code is given by:
	\[
	C^{\perp_k} = \left\{ (x_0, x_1, \dots, x_{m-1}) \in \mathbb{F}_q^m \ \middle|\ \sum_{i=0}^{m-1} x_i c_i^{e^k} = 0, \forall \, (c_0, c_1, \dots, c_{m-1}) \in C \right\}.
	\]
	
	In fact, for $k = 0$, the $k$-Galois inner product is the Euclidean inner product, while for $k = \frac{l}{2}$ ($l$ even), it is the Hermitian inner product.
	
	Specifically, the Euclidean dual of \(C\) over $\mathbb{F}_{q}$ is as follows:
	\[
	C^{\perp_E} = \left\{ (x_0, x_1, \ldots, x_{m-1}) \in \mathbb{F}_q^m \ \middle|\ \sum_{i=0}^{m-1} x_i c_i = 0, \forall \, (c_0, c_1, \ldots, c_{m-1}) \in C \right\}.
	\]

	Define the Hermitian dual of \(C\) over \(\mathbb{F}_{q}\) as:
	\[
	C^{\perp_H} = \left\{ (x_0, x_1, \ldots, x_{m-1}) \in \mathbb{F}_{q}^m \ \middle|\ \sum_{i=0}^{m-1} x_i c_i^p = 0, \forall \, (c_0, c_1, \ldots, c_{m-1}) \in C \right\},
	\]
	where $p=e^k$ and $q=e^l=e^{2k}=p^2$.

	In what follows, their properties over $\mathcal{R}$ and $\mathcal{S}$ are discussed, respectively.
	
	According to Lemma 4.2 and Theorem 5.2 in \cite{Qian2026}, the following two lemmas hold.

	\begin{lemma}(\cite{Qian2026})
		Suppose \( \mathcal{C}^{\perp_E} \) is the Euclidean dual of an $\mathcal{S}$-linear code \( \mathcal{C} \). Then
		$\Phi_M(\mathcal{C}^{\perp_E}) = \Phi_M(\mathcal{C})^{\perp_E},$
		where $MM^\top=diag(\theta_1, \theta_2)$, $M^\top$ is the transpose matrix of $M$ and $\theta_i \in  \mathbb{F}_{q}\setminus \{0\}$, $i=1,2$.
	\end{lemma}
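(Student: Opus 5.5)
The plan is to reduce the statement to the two coordinate blocks and to a single matrix identity. The first step is to make precise what the Euclidean inner product on $\mathbb{F}_q^m\times\mathcal{R}^n$ is, namely
\[
\langle(\boldsymbol{\eta},\boldsymbol{\xi}),(\boldsymbol{\eta}',\boldsymbol{\xi}')\rangle=\sum_{i=0}^{m-1}\eta_i\eta_i'+\sum_{j=0}^{n-1}\xi_j\xi_j'.
\]
As in \cite{Qian2026}, $\mathcal{C}^{\perp_E}$ is the set of vectors orthogonal to all of $\mathcal{C}$ for this form.

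The key identity concerns a single symbol of $\mathcal{R}$. Write $\xi=(1-v)a_1+va_2$ and $\xi'=(1-v)b_1+vb_2$. Since $(1-v)$ and $v$ are orthogonal idempotents, $\xi\xi'=(1-v)a_1b_1+va_2b_2$. On the other hand,
\[
\phi_M(\xi)\cdot\phi_M(\xi')=(a_1,a_2)MM^\top(b_1,b_2)^\top=\theta_1a_1b_1+\theta_2a_2b_2 .
\]
Summing over coordinates, $\phi_M(\boldsymbol{\xi})\cdot\phi_M(\boldsymbol{\xi}')=\theta_1 A+\theta_2 B$, where $\boldsymbol{\xi}\cdot\boldsymbol{\xi}'=(1-v)A+vB$.

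Next I prove $\Phi_M(\mathcal{C}^{\perp_E})\subseteq\Phi_M(\mathcal{C})^{\perp_E}$. Take $(\boldsymbol{\eta},\boldsymbol{\xi})\in\mathcal{C}^{\perp_E}$ and $(\boldsymbol{\eta}',\boldsymbol{\xi}')\in\mathcal{C}$. The $\mathcal{S}$-product has the form $x+(1-v)A+vB=0$ with $x\in\mathbb{F}_q$. This says $x+A=0$ and $x+B=0$, but by itself it does not force $\theta_1A+\theta_2B=-x$. To get independent conditions I use $\mathcal{R}$-linearity: multiplying the codeword of $\mathcal{C}$ by $v$ and by $1-v$ stays inside $\mathcal{C}$ as an $\mathcal{R}$-submodule. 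The action of $v$ on the $\mathbb{F}_q$-part must be fixed as in \cite{Qian2026}. Under the standard convention, $v$ acts on $\mathbb{F}_q^m$ through the projection $\mathcal{R}\to\mathbb{F}_q$, $v\mapsto 0$. Then $(1-v)$ fixes the $\eta'$-part, giving $x+A=0$, while $v$ kills it, giving $B=0$. Combining these with the full condition yields $x=0$, $A=0$ and $B=0$, hence $\theta_1A+\theta_2B+x=0$.

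The reverse inclusion follows from counting. $\Phi_M$ is a bijection, so $|\Phi_M(\mathcal{C}^{\perp_E})|=|\mathcal{C}^{\perp_E}|$. It then suffices to show $|\mathcal{C}|\cdot|\mathcal{C}^{\perp_E}|=q^{m+2n}$, which holds because the bilinear form on $\mathbb{F}_q^m\times\mathcal{R}^n$ is nondegenerate, $\mathcal{R}\cong\mathbb{F}_q\times\mathbb{F}_q$ is a Frobenius ring, and the standard MacWilliams cardinality argument applies. Alternatively, I can reverse the argument above using the invertibility of $\theta_1$, $\theta_2$ and $M$.

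I expect the main obstacle to be pinning down the exact module structure and inner product from \cite{Qian2026} so that the orthogonality conditions separate into the three components $x$, $A$ and $B$. Once they do, everything reduces to the idempotent decomposition together with $MM^\top=\mathrm{diag}(\theta_1,\theta_2)$.
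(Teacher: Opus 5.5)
\textbf{There is a genuine gap in the reverse inclusion.} The paper gives no proof of this lemma; it imports it from Lemma 4.2 of \cite{Qian2026}. So your argument has to stand on its own.

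\emph{Your forward inclusion is fine under your conventions.} The trouble starts with the count. Your form $\sum_i\eta_i\eta_i'+\sum_j\xi_j\xi_j'$ is not $\mathcal{R}$-bilinear when $v$ acts as $0$ on the $\mathbb{F}_q^m$-block. Indeed $\langle v\cdot\mathbf{x},\mathbf{y}\rangle=v\,\boldsymbol{\xi}\cdot\boldsymbol{\xi}'$, whereas $v\langle\mathbf{x},\mathbf{y}\rangle$ carries the extra term $v\,\boldsymbol{\eta}\cdot\boldsymbol{\eta}'$. So the Frobenius/MacWilliams count is not available.

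\emph{The count is in fact false.} Your own forward computation shows that orthogonality holds iff $x=A=B=0$ for every codeword. Hence $\mathcal{C}^{\perp_E}=\tau_m(\mathcal{C})^{\perp_E}\times\tau_n(\mathcal{C})^{\perp_E}$, and
$|\mathcal{C}|\,|\mathcal{C}^{\perp_E}|=q^{m+2n}|\mathcal{C}|/(|\tau_m(\mathcal{C})|\,|\tau_n(\mathcal{C})|)$.
This is strictly less than $q^{m+2n}$ whenever $\mathcal{C}$ is not separable.

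\emph{Concrete counterexample.} Take $m=n=1$ and $\mathcal{C}=\mathcal{R}\cdot(1,1-v)=\{(a,a(1-v)) : a\in\mathbb{F}_q\}$. Then $\mathcal{C}^{\perp_E}=\{0\}\times v\mathbb{F}_q$ has size $q$, while $\Phi_M(\mathcal{C})^{\perp_E}\subseteq\mathbb{F}_q^3$ has size $q^2$.

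\emph{Reversing the argument fails for the same reason.} Suppose $\Phi_M(\mathbf{x})\perp\Phi_M(\mathcal{C})$. Testing against $v\mathbf{c}$ gives $B=0$. But $(1-v)\mathbf{c}$ still carries both $\boldsymbol{\eta}'$ and the $(1-v)$-part of $\boldsymbol{\xi}'$, so you only obtain $x+\theta_1A=0$, not $x=A=0$.

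\emph{Changing conventions does not rescue it.} With the genuinely $\mathcal{R}$-bilinear form $(1-v)\,\boldsymbol{\eta}\cdot\boldsymbol{\eta}'+\boldsymbol{\xi}\cdot\boldsymbol{\xi}'$ the count does hold. However, orthogonality then only gives $x+A=0$ and $B=0$. In the same example, take $\mathbf{x}=(-a_1,(1-v)a_1)\in\mathcal{C}^{\perp_E}$ and $\mathbf{c}=(a,a(1-v))$. Then $\Phi_M(\mathbf{x})\cdot\Phi_M(\mathbf{c})=(\theta_1-1)a_1a$, which is nonzero when $\theta_1\neq 1$.

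So, read for arbitrary (possibly non-separable) $\mathcal{S}$-linear codes, the statement cannot be proved as written. The obstacle you anticipated is not really the choice of inner product; it is non-separability.

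\textbf{The fix is to add the hypothesis that $\mathcal{C}=C_m\times\mathscr{C}_n$ is separable.} This is the only setting in which the paper actually uses the lemma, for example in Theorems 3.11 and 3.17. Under it your proof closes in either of two ways.
\begin{enumerate}
\item[(a)] Count via $\mathcal{C}^{\perp_E}=C_m^{\perp_E}\times\mathscr{C}_n^{\perp_E}$. This gives $|\Phi_M(\mathcal{C}^{\perp_E})|=q^{m+2n}/|\mathcal{C}|=|\Phi_M(\mathcal{C})^{\perp_E}|$.
\item[(b)] Argue directly. The vectors $(\boldsymbol{\eta}',\mathbf{0})$, $(\mathbf{0},(1-v)\boldsymbol{\xi}')$ and $(\mathbf{0},v\boldsymbol{\xi}')$ now lie in $\mathcal{C}$ separately. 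So $\Phi_M(\mathbf{x})\perp\Phi_M(\mathcal{C})$ yields $x=0$, $\theta_1A=0$ and $\theta_2B=0$. Invertibility of $\theta_1,\theta_2$ then gives $\mathbf{x}\in\mathcal{C}^{\perp_E}$.
\end{enumerate}
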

	
	\begin{lemma}(\cite{Qian2026})
		Suppose \( \mathcal{C}^{\perp_H} \) is the Hermitian dual of an $\mathcal{S}$-linear code \( \mathcal{C} \). Then
		$\Phi_M(\mathcal{C}^{\perp_H}) = \Phi_M(\mathcal{C})^{\perp_H},$
		where $MM^\dagger=diag(\theta_1, \theta_2)$, $M^\dagger$ is the conjugate transpose matrix of $M$ and $\theta_i \in \mathbb{F}_{p^2}\setminus \{0\}$, $i=1,2$.
	\end{lemma}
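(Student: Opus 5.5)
Since $\Phi_M$ acts as the identity on the $\mathbb{F}_q^m$ block and as $\phi_M$ on the $\mathcal{R}^n$ block, I would reduce the statement to a matrix identity for the Hermitian form on $\mathbb{F}_q^2$. Then I would assemble the two blocks.

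The first step is to describe $\mathcal{C}^{\perp_H}$ componentwise. For $\mathbf{a}=(\boldsymbol{\eta},\boldsymbol{\xi})$ and $\mathbf{b}=(\boldsymbol{\eta}',\boldsymbol{\xi}')$ in $\mathbb{F}_q^m\times\mathcal{R}^n$, I would take the Hermitian inner product on $\mathcal{S}$ to be the one from \cite{Qian2026}: the $\mathbb{F}_q$-part is $\sum\eta_i\eta_i'^{\,p}$, and the $\mathcal{R}$-part is computed through the idempotent decomposition $\xi_j=(1-v)\xi_{j,1}+v\xi_{j,2}$. Writing $\boldsymbol{\gamma}_j=(\xi_{j,1},\xi_{j,2})$, orthogonality on the $\mathcal{R}$-part becomes a condition on $\sum_j\boldsymbol{\gamma}_j\,\overline{\boldsymbol{\gamma}'_j}^{\top}$, where the bar denotes the Frobenius $x\mapsto x^p$.

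The second step is the key computation. Since $\phi_M(\xi_j)=\boldsymbol{\gamma}_jM$, we get
\[
\langle \phi_M(\boldsymbol{\xi}),\phi_M(\boldsymbol{\xi}')\rangle_H=\sum_j \boldsymbol{\gamma}_j M M^{\dagger}\overline{\boldsymbol{\gamma}'_j}^{\top}
=\sum_j\bigl(\theta_1\xi_{j,1}\xi_{j,1}'^{\,p}+\theta_2\xi_{j,2}\xi_{j,2}'^{\,p}\bigr),
\]
using the hypothesis $MM^\dagger=\mathrm{diag}(\theta_1,\theta_2)$.

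The third step handles the $\mathcal{R}$-module structure. Because $\mathcal{C}$ is an $\mathcal{R}$-submodule, it is closed under multiplication by $1-v$ and by $v$. So $\mathbf{a}$ is orthogonal to all of $\mathcal{C}$ exactly when it is orthogonal to all of $(1,1-v)\mathcal{C}$ and $(1,v)\mathcal{C}$, and I would use this to split the condition into its two idempotent parts. The weights $\theta_1,\theta_2$ are nonzero, so they can be absorbed: I would rescale a test vector's $\mathcal{R}$-component by the unit $\theta_1^{-1}(1-v)+\theta_2^{-1}v$, after taking the appropriate $p$-th roots, which are available because Frobenius is bijective on $\mathbb{F}_q$. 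This gives that $\mathbf{a}\in\mathcal{C}^{\perp_H}$ if and only if $\Phi_M(\mathbf{a})$ is Hermitian-orthogonal to $\Phi_M(\mathcal{C})$. I expect this step to be the main obstacle. The mixed sum over the $\mathbb{F}_q$ block and the $\mathcal{R}$ block does not factor for a non-separable $\mathcal{C}$, so the rescaling must be applied only to the $\mathcal{R}$ coordinates while $\mathcal{C}$ stays invariant. To justify this, I would use that multiplying by $\theta_1'(1-v)+\theta_2'v$ is an $\mathcal{R}$-module action preserving $\mathcal{C}$, provided the scalar action of $\mathcal{R}$ on the $\mathbb{F}_q$ block is taken via $v\mapsto 0$ or $v\mapsto 1$ as in \cite{Qian2026}. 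If that fails, I would instead cite directly the computation of Theorem 5.2 of \cite{Qian2026}.

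The final step is to conclude by counting. The previous step gives $\Phi_M(\mathcal{C}^{\perp_H})\subseteq\Phi_M(\mathcal{C})^{\perp_H}$. Since $\Phi_M$ is an $\mathbb{F}_q$-linear bijection, $|\Phi_M(\mathcal{C}^{\perp_H})|=|\mathcal{C}^{\perp_H}|=q^{m+2n}/|\mathcal{C}|$. The Hermitian dual in $\mathbb{F}_q^{m+2n}$ satisfies $|\Phi_M(\mathcal{C})^{\perp_H}|=q^{m+2n}/|\Phi_M(\mathcal{C})|=q^{m+2n}/|\mathcal{C}|$, so the two sets have the same size. Therefore $\Phi_M(\mathcal{C}^{\perp_H})=\Phi_M(\mathcal{C})^{\perp_H}$.
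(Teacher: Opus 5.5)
Your step~2 identity is correct and is the right core computation. Step~3, however, cannot be completed, because for non-separable codes the statement is false. A unit $u=\alpha(1-v)+\gamma v$ acts on the $\mathbb{F}_q$ block by $\alpha$ (if $v$ acts there as $0$). That is the same scalar by which it acts on the $(1-v)$-component, which is exactly the component a non-separable code couples to the $\mathbb{F}_q$ block. Testing against $u\mathcal{C}$ therefore only reproduces scalar multiples of relations you already have. It can never turn a relation in which that component carries weight $1$ into one in which it carries weight $\theta_1$.

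Concretely, let $m=n=1$ and $\mathcal{C}=\{(c,(1-v)c): c\in\mathbb{F}_q\}$ (use $(c,vc)$ and $\theta_2$ if $v$ acts on the $\mathbb{F}_q$ block as $1$). Then $\Phi_M(\mathcal{C})=\{(c,cM_{11},cM_{12})\}$, and your step~2 identity gives
\[
\Phi_M^{-1}\bigl(\Phi_M(\mathcal{C})^{\perp_H}\bigr)=\bigl\{(a,(1-v)x_1+vx_2) : a+\theta_1x_1=0\bigr\}.
\]
This set changes with $\theta_1$: compare $M=I$ with $M=\mathrm{diag}(\vartheta,1)$, $\vartheta^{p+1}=-1$, which exists since the norm $\mathbb{F}_{p^2}^*\to\mathbb{F}_p^*$ is onto. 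But $\mathcal{C}^{\perp_H}$ does not depend on $M$. So whatever Hermitian form on $\mathcal{S}$ is used, the claimed equality fails for some admissible $M$, and falling back on the citation cannot rescue the general case.

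Step~4 has a related gap: $|\mathcal{C}^{\perp_H}|=q^{m+2n}/|\mathcal{C}|$ is not automatic for a mixed form. If $\sum_i\eta_i(\eta_i')^{p}$ is read as a constant of $\mathcal{R}$, the code above has $\mathcal{C}^{\perp_H}=\{(0,vx_2)\}$, of size $q$ rather than $q^2$.

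The paper gives no argument beyond the citation, and it only ever uses the lemma for separable codes. There your approach works once you drop the rescaling. For $\mathcal{C}=C_m\times\mathscr{C}_n$, testing against $(\boldsymbol{\eta}',\mathbf{0})$ and $(\mathbf{0},\boldsymbol{\xi}')$ separately gives $\mathcal{C}^{\perp_H}=C_m^{\perp_H}\times\mathscr{C}_n^{\perp_H}$. Lemma~2.9 gives $\Phi_M(\mathcal{C})^{\perp_H}=C_m^{\perp_H}\times\phi_M(\mathscr{C}_n)^{\perp_H}$. So only $\phi_M(\mathscr{C}_n^{\perp_H})=\phi_M(\mathscr{C}_n)^{\perp_H}$ remains. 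Since $\mathscr{C}_n^{\perp_H}=(1-v)C_{n,1}^{\perp_H}\oplus vC_{n,2}^{\perp_H}$, for $\boldsymbol{\xi}\in\mathscr{C}_n^{\perp_H}$ and $\boldsymbol{\xi}'\in\mathscr{C}_n$ the two idempotent sums vanish \emph{separately}. Your identity then yields $\theta_1\cdot 0+\theta_2\cdot 0=0$. Your counting argument finishes the proof, using $|\mathscr{C}_n^{\perp_H}|\,|\mathscr{C}_n|=q^{2n}$ and the bijectivity of $\phi_M$.
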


	\begin{lemma}
		Let $\mathscr{C}_{n} = \langle (1 - v)g_1(x) + vg_2(x) \rangle$ be a $\mathcal{R}$-$\mu$-constacyclic code. Then
		\begin{enumerate}[label=(\arabic*)]
			
			\item[(i)] \( \mathscr{C}_{n}^{\perp_E} = \langle (1 - v)h_1^*(x) + vh_2^*(x) \rangle \) is a $\mathcal{R}$-$\mu^{-1}$-constacyclic code, where $\mu^{-1}=\beta^{-1}(1-2v)$, 
			\[  h_1(x) = \frac{x^n - \beta}{g_1(x)}, h_2(x) = \frac{x^n + \beta}{g_2(x)} \] in $\mathbb{F}_q[x]$ and $h_{i}^*(x) = h_{i}(0)^{-1} x^{\deg h_{i}(x)} h_{i}\left(\frac{1}{x}\right)$, $i=1,2$.
			Moreover,
			\(|\mathscr{C}_{n}^{\perp_E}| = q^{{\deg(g_1(x)) + \deg(g_2(x))}}.\)
			
			\item[(ii)] \( \mathscr{C}_{n}^{\perp_H} = \langle (1 - v)h_1^\dagger(x) + vh_2^\dagger(x) \rangle \) is a $\mathcal{R}$-$\mu^{-p}$-constacyclic code, where $\mu^{-p}=\beta^{-p}(1-2v)$, 
			\[ h_1(x) = \frac{x^n - \beta}{g_1(x)}, h_2(x) = \frac{x^n + \beta}{g_2(x)}\] in $\mathbb{F}_{p^2}[x]$, $h_{i}^\dagger(x) = h_{i}(0)^{-p} x^{\deg h_{i}(x)} h_{i}^{p}\left(\frac{1}{x}\right)$, $i=1,2$.  
			Moreover,
			\(|\mathscr{C}_{n}^{\perp_H}| = p^{2({\deg(g_1(x)) + \deg(g_2(x))})}.\)
			
		\end{enumerate}
	\end{lemma}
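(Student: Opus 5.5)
The plan is to reduce the statement to the classical finite-field results through the idempotent decomposition $\mathcal{R} = (1-v)\mathbb{F}_q \oplus v\mathbb{F}_q$. The key observation is that the Euclidean (resp. Hermitian) dual respects this decomposition.

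\textbf{Step 1: componentwise duality.} For $\mathbf{a} = (1-v)\mathbf{a}_1 + v\mathbf{a}_2$ and $\mathbf{b} = (1-v)\mathbf{b}_1 + v\mathbf{b}_2$ in $\mathcal{R}^n$, the relations $(1-v)^2 = 1-v$, $v^2 = v$ and $v(1-v) = 0$ give
\[
\mathbf{a}\cdot\mathbf{b} = (1-v)(\mathbf{a}_1\cdot\mathbf{b}_1) + v(\mathbf{a}_2\cdot\mathbf{b}_2).
\]
Hence $\mathbf{a}\cdot\mathbf{b} = 0$ for all $\mathbf{b} \in \mathscr{C}_n = (1-v)C_{n,1}\oplus vC_{n,2}$ exactly when $\mathbf{a}_i \in C_{n,i}^{\perp_E}$ for $i=1,2$. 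So
\[
\mathscr{C}_n^{\perp_E} = (1-v)C_{n,1}^{\perp_E} \oplus vC_{n,2}^{\perp_E}.
\]
For the Hermitian case, conjugation acts on $\mathcal{R}$ by $((1-v)a_1 + va_2)^p = (1-v)a_1^p + va_2^p$, because $v$ is fixed by the Frobenius. This gives
\[
\mathscr{C}_n^{\perp_H} = (1-v)C_{n,1}^{\perp_H} \oplus vC_{n,2}^{\perp_H}.
\]

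\textbf{Step 2: invoke the classical results.} By Lemma 2.2(ii), $C_{n,1} = \langle g_1(x)\rangle$ is $\beta$-constacyclic and $C_{n,2} = \langle g_2(x)\rangle$ is $(-\beta)$-constacyclic. The standard result says that the Euclidean dual of $\langle g(x)\rangle$, with $g \mid x^n-\alpha$, is the $\alpha^{-1}$-constacyclic code $\langle h^*(x)\rangle$, where $h = (x^n-\alpha)/g$ and $h^*$ is the monic reciprocal. The Hermitian dual is the $\alpha^{-p}$-constacyclic code $\langle h^\dagger(x)\rangle$ with $h^\dagger$ as in the statement. Applying this with $\alpha = \pm\beta$ yields $C_{n,1}^{\perp_E} = \langle h_1^*\rangle$ and $C_{n,2}^{\perp_E} = \langle h_2^*\rangle$, which are $\beta^{-1}$- and $(-\beta)^{-1} = -\beta^{-1}$-constacyclic respectively. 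Lemma 2.2(i) then shows that $\mathscr{C}_n^{\perp_E}$ is $\beta^{-1}(1-2v)$-constacyclic. One checks that $\beta^{-1}(1-2v)$ is indeed $\mu^{-1}$, since $(1-2v)^2 = 1$.

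For the Hermitian case, $p$ is odd, so $(-\beta)^{-p} = -\beta^{-p}$. Also $(1-2v)^p = 1-2v$, because $(1-2v)$ is idempotent-type with square $1$ and $p$ is odd. This gives the $\mu^{-p}$ claim.

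\textbf{Step 3: generator and size.} Lemma 2.2(ii) then identifies the generator as $(1-v)h_1^* + vh_2^*$. The cardinality is
\[
q^{2n-\deg h_1-\deg h_2} = q^{\deg g_1+\deg g_2}.
\]
In the Hermitian setting $q = p^2$, which gives $p^{2(\deg g_1+\deg g_2)}$.

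\textbf{Main obstacle.} The only delicate point is the Hermitian version of the classical lemma, namely verifying that $h^\dagger$ divides $x^n - \alpha^{-p}$ and generates the Hermitian dual. I would handle this by noting that $C^{\perp_H} = (C^{\perp_E})^{(p)}$, i.e. the Euclidean dual with the Frobenius applied coordinatewise. The Frobenius maps $\langle h^*\rangle$ to $\langle (h^*)^{(p)}\rangle$, with $(h^*)^{(p)} = h^\dagger$, and maps $x^n - \alpha^{-1}$ to $x^n - \alpha^{-p}$.
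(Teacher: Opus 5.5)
Your proposal is correct and takes essentially the paper's route. The paper's proof just invokes Corollary 4.8 of \cite{Zhu2011}, which amounts to the componentwise splitting $\mathscr{C}_n^{\perp}=(1-v)C_{n,1}^{\perp}\oplus vC_{n,2}^{\perp}$ plus the classical dual of a constacyclic code over a field, and calls the Hermitian case ``similar''; you supply what it leaves implicit, namely the identities $(1-2v)^2=1$ and $(1-2v)^p=1-2v$ for odd $p$, and the Frobenius argument $C^{\perp_H}=(C^{\perp_E})^{(p)}$, which gives $h^\dagger=(h^*)^{(p)}$ as a divisor of $x^n-\alpha^{-p}$ generating the Hermitian dual.
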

	
	\begin{proof}
		(i) From Lemma 2.8, $\mathscr{C}_n = \langle (1 - v)g_1(x) + vg_2(x) \rangle$. Then similar to Corollary 4.8 in \cite{Zhu2011}, $\mathscr{C}_{n}^{\perp_E} = \langle (1 - v)h_1^*(x) + vh_2^*(x)) \rangle$, where \[h_1(x) = \frac{x^n - \beta}{g_1(x)}, h_2(x) = \frac{x^n + \beta}{g_2(x)} \] in $\mathbb{F}_q[x]$, $h_{i}^*(x) = h_{i}(0)^{-1} x^{\deg h_{i}(x)} h_{i}\left(\frac{1}{x}\right)$, $i=1,2$, and \(|\mathscr{C}_{n}^{\perp_E}| = q^{{\deg(g_1(x)) + \deg(g_2(x))}}.\)
		
		(ii) For the Hermitian dual code, it can be proved similarly.
		\end{proof}

	\begin{theorem}
		Assume that $\mathcal{C} = \langle (g_0(x), 0), (0, (1 - v)g_1(x) + vg_2(x)) \rangle$ is a separable $\mathcal{S}$-$t$-constacyclic code. Then
		\begin{enumerate}[label=(\arabic*)]
			
			\item[(i)] \( \mathcal{C}^{\perp_E} = \langle (h_0^*(x),0), (0,(1 - v)h_1^*(x) + vh_2^*(x)) \rangle \) is a separable $\mathcal{S}$-$t^{-1}$-constacyclic code, where $t^{-1}=(\lambda^{-1},\mu^{-1})$ and 
			$  h_0(x) = \frac{x^m - \lambda}{g_0(x)}$ in $\mathbb{F}_q[x]$. 
			Moreover,
			\(|\mathcal{C}^{\perp_E}| = q^{{\deg(g_0(x)) + \deg(g_1(x)) + \deg(g_2(x))}}.\)
			
			\item[(ii)] \( \mathcal{C}^{\perp_H} = \langle (h_0^\dagger(x),0), (0,(1 - v)h_1^\dagger(x) + vh_2^\dagger(x)) \rangle \) is a separable $\mathcal{S}$-$t^{-p}$-constacyclic code, where $t^{-p}=(\lambda^{-p},\mu^{-p})$, 
			$ h_0(x) = \frac{x^m - \lambda}{g_0(x)}$ in $\mathbb{F}_{p^2}[x]$.  
			Moreover,
			\(|\mathcal{C}^{\perp_H}| = p^{2({\deg(g_0(x)) + \deg(g_1(x)) + \deg(g_2(x))})}.\)
			
		\end{enumerate}
	\end{theorem}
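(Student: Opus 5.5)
The plan is to reduce the dual of the mixed-alphabet code to the duals of its two components, using separability. Since $\mathcal{C}$ is separable, $\mathcal{C} = C_m \times \mathscr{C}_n$, with $C_m = \langle g_0(x)\rangle$ and $\mathscr{C}_n = \langle (1-v)g_1(x)+vg_2(x)\rangle$ by Lemma 2.8. The first step is to show that
\[
\mathcal{C}^{\perp_E} = C_m^{\perp_E} \times \mathscr{C}_n^{\perp_E}, \qquad \mathcal{C}^{\perp_H} = C_m^{\perp_H} \times \mathscr{C}_n^{\perp_H}.
\]
Here the inner product on $\mathbb{F}_q^m \times \mathcal{R}^n$ is the sum of the $\mathbb{F}_q$-part and the $\mathcal{R}$-part, with the Frobenius applied componentwise in the Hermitian case.

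For the inclusion ``$\supseteq$'', take $(\mathbf{a},\mathbf{b})$ with $\mathbf{a}\in C_m^{\perp}$ and $\mathbf{b}\in\mathscr{C}_n^{\perp}$. Each summand of the inner product against any $(\mathbf{c}_1,\mathbf{c}_2)\in\mathcal{C}$ vanishes, so the total vanishes.

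For ``$\subseteq$'', take $(\mathbf{a},\mathbf{b})\in\mathcal{C}^{\perp}$. Separability gives $(\mathbf{c}_1,\mathbf{0})\in\mathcal{C}$ for every $\mathbf{c}_1\in C_m$, and $(\mathbf{0},\mathbf{c}_2)\in\mathcal{C}$ for every $\mathbf{c}_2\in\mathscr{C}_n$. Testing against these elements forces $\mathbf{a}\in C_m^{\perp}$ and $\mathbf{b}\in\mathscr{C}_n^{\perp}$. As a by-product, this equality shows that the dual is itself separable.

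The second step identifies the generators of each factor.
\begin{itemize}
\item For the $\mathbb{F}_q$-part, use the classical fact that the Euclidean dual of the $\lambda$-constacyclic code $\langle g_0(x)\rangle$ is the $\lambda^{-1}$-constacyclic code $\langle h_0^*(x)\rangle$, of dimension $\deg g_0(x)$.
\item In the Hermitian case over $\mathbb{F}_{p^2}$, the dual is $\langle h_0^\dagger(x)\rangle$, which is $\lambda^{-p}$-constacyclic. This follows because $C^{\perp_H} = (C^{p})^{\perp_E}$, where $C^{p}$ is obtained by raising coordinates to the $p$-th power. The code $C^{p}$ is generated by $g_0^{p}(x)$ (coefficientwise Frobenius) and is $\lambda^{p}$-constacyclic, and its check polynomial is $h_0^{p}(x)$.
\item For the $\mathcal{R}$-part, invoke Lemma 3.6(i),(ii) directly.
\end{itemize}

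The third step assembles the generators and checks the shift property. The dual equals $\langle (h_0^*(x),0),(0,(1-v)h_1^*(x)+vh_2^*(x))\rangle$, and similarly in the Hermitian case. Because both factors are constacyclic, the product is closed under the $t^{-1}$-shift (resp.\ $t^{-p}$-shift), whose components are $(\lambda^{-1},\mu^{-1})$ (resp.\ $(\lambda^{-p},\mu^{-p})$). The cardinalities multiply:
\[
q^{\deg g_0}\cdot q^{\deg g_1+\deg g_2} \quad\text{and}\quad p^{2\deg g_0}\cdot p^{2(\deg g_1+\deg g_2)}.
\]

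The only point needing care is the Hermitian $\mathbb{F}_q$-component. There one must verify that conjugating coefficients and reciprocating yields exactly $h_0^\dagger(x)$, and that $h_0(0)\neq 0$ (true since $\lambda\neq0$), so the normalization is valid. I would handle this by the standard reciprocal-polynomial argument: show that $h_0^{\dagger}(x)$ divides $x^m-\lambda^{-p}$ and that the dimensions match.
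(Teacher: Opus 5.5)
Your proposal is correct and follows the same route as the paper. You split $\mathcal{C}^{\perp}$ as $C_m^{\perp}\times\mathscr{C}_n^{\perp}$ using separability, identify $C_m^{\perp}$ by the classical constacyclic-dual result and $\mathscr{C}_n^{\perp}$ by the $\mathcal{R}$-dual lemma, and multiply the cardinalities. That lemma is Lemma 3.4 in the paper; 3.6 is the Gray-image theorem, which builds on this result. You also supply details the paper omits: an explicit proof of the product decomposition and of separability, which the paper imports from an external proposition of Qian et al., and the reduction $C^{\perp_H}=(C^{p})^{\perp_E}$ for the Hermitian $\mathbb{F}_{p^2}$-component, which the paper dismisses as ``similar''.
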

	
	\begin{proof}
		(i) By Proposition 4.1 in \cite{Qian2026}, take $r=1$, $\xi_0=\beta$ and $\xi_1=-\beta$, it is clear that $\mathcal{C}^{\perp_E}$ is a separable $\mathcal{S}$-$t^{-1}$-constacyclic code, where  $t^{-1}=(\lambda^{-1},\mu^{-1})$. 
		
		Given that $\mathcal{C}=C_{m}\times\mathscr{C}_{n}$, it follows that $\mathcal{C}^{\perp_{E}}=C_{m}^{\perp_{E}}\times\mathscr{C}_{n}^{\perp_{E}}$. From Lemma 2.8, $C_m = \langle g_0(x) \rangle$. Then $C_{m}^{\perp_E} = \langle h_0^*(x) \rangle$ and $|C_{m}^{\perp_E}| = q^{\deg(g_0(x))}$ by \cite{MacWilliams1977}, where $ h_0(x) = \frac{x^m - \lambda}{g_0(x)}$ in $\mathbb{F}_q[x]$. According to Lemma 3.4, $\mathscr{C}_{n}^{\perp_E} = \langle (1 - v)h_1^*(x) + vh_2^*(x) \rangle$ and $|\mathscr{C}_{n}^{\perp_E}| = q^{{\deg(g_1(x)) + \deg(g_2(x))}}$. Therefore, \[ \mathcal{C}^{\perp_E} = \langle (h_0^*(x),0), (0,(1 - v)h_1^*(x) + vh_2^*(x)) \rangle.\] 
		
		Moreover, $|\mathcal{C}^{\perp_E}| = |\mathrm{C}_m^{\perp_E}| \cdot |\mathscr{C}_n^{\perp_E}| = q^{\deg(g_0(x)) + \deg(g_1(x)) + \deg(g_2(x))}$.
		
		(ii) For the Hermitian dual code, one can obtain the desired results by a proof similar to (i); thus, we omit the details.
	\end{proof}
	
	\begin{theorem}
		Suppose $\mathcal{C} = \langle (g_0(x), 0), (0, (1 - v)g_1(x) + vg_2(x)) \rangle$ is a separable $\mathcal{S}$-$t$-constacyclic code. Then
		\begin{enumerate}[label=(\arabic*)]
			\item[(i)]
			$ \Phi_M(\mathcal{C}^{\perp_E})= \langle (h^*_0(x),0), (0,h^*_1(x)h^*_2(x)) \rangle $, where $M = \begin{pmatrix} \vartheta_1 \beta & \vartheta_1 \\ - \vartheta_2 \beta & \vartheta_2 \end{pmatrix}$, $\beta, \vartheta_1, \vartheta_2 \in \mathbb{F}_{q}\setminus \{0\};$
			\item[(ii)]
			$ \Phi_M(\mathcal{C}^{\perp_H})= \langle (h^\dagger_0(x),0), (0,h^\dagger_1(x)h^\dagger_2(x)) \rangle$, where $M = \begin{pmatrix} \vartheta_1 \beta & \vartheta_1 \\ - \vartheta_2 \beta & \vartheta_2 \end{pmatrix}$, $\beta, \vartheta_1, \vartheta_2 \in \mathbb{F}_{p^2}\setminus \{0\}$.
		\end{enumerate}
	\end{theorem}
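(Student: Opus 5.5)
The plan is to reduce everything to Theorem 3.5 together with Lemma 2.9. By Theorem 3.5(i), $\mathcal{C}^{\perp_E} = \langle (h_0^*(x),0), (0,(1-v)h_1^*(x)+vh_2^*(x)) \rangle$ is a separable $\mathcal{S}$-$t^{-1}$-constacyclic code with $t^{-1}=(\lambda^{-1},\beta^{-1}(1-2v))$. So $\mathcal{C}^{\perp_E}$ has exactly the shape required in Lemma 2.9, but with $\lambda$ replaced by $\lambda^{-1}$ and $\beta$ replaced by $\beta^{-1}$. By Lemma 2.8 we then have $C_m^{\perp_E}=\langle h_0^*(x)\rangle$ and $\mathscr{C}_n^{\perp_E}=\langle (1-v)h_1^*(x)+vh_2^*(x)\rangle$. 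Lemma 2.7 gives $\Phi_M(\mathcal{C}^{\perp_E}) = C_m^{\perp_E}\times \phi_M(\mathscr{C}_n^{\perp_E})$. Hence it suffices to show $\phi_M(\mathscr{C}_n^{\perp_E})=\langle h_1^*(x)h_2^*(x)\rangle$ as a $\beta^{-2}$-constacyclic code of length $2n$.

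The main obstacle is the mismatch in the Gray matrix. Lemma 2.2(iii) and Lemma 2.9 apply to $\mu=\beta(1-2v)$ only with $M$ built from the same $\beta$. Here $M$ is built from $\beta$, while the dual code is $\beta^{-1}(1-2v)$-constacyclic. I would handle this in one of two ways.

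\emph{First route.} Repeat the argument of Lemma 2.2(iii) directly. Take $a=(1-v)a_1+va_2\in\mathscr{C}_n^{\perp_E}$ with $a_1\in\langle h_1^*\rangle$ and $a_2\in\langle h_2^*\rangle$. Its image under $\phi_M$, written as a polynomial of length $2n$ via interleaving, equals $\vartheta_1$ times (a $\beta$-multiple of $a_1$) plus $\vartheta_2$ times ($a_2$ combined with a sign). Using $h_1^*\mid x^n-\beta^{-1}$ and $h_2^*\mid x^n+\beta^{-1}$, one checks that $\phi_M(\mathscr{C}_n^{\perp_E})\subseteq\langle h_1^*h_2^*\rangle$. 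Cardinality then closes the argument: $|\phi_M(\mathscr{C}_n^{\perp_E})| = q^{\deg g_1+\deg g_2}$ by Lemma 3.4 and the bijectivity of $\phi_M$. On the other side, $\langle h_1^*h_2^*\rangle$ has dimension $2n-\deg h_1-\deg h_2=\deg g_1+\deg g_2$, so equality holds.

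\emph{Second route (cleaner).} Use duality. The relation $MM^\top=\mathrm{diag}(\vartheta_1^2(\beta^2+1),\vartheta_2^2(\beta^2+1))$ makes Lemma 3.1 available whenever $\beta^2\neq-1$. It gives $\Phi_M(\mathcal{C}^{\perp_E})=\Phi_M(\mathcal{C})^{\perp_E}$. By Lemma 2.9, $\Phi_M(\mathcal{C})=\langle (g_0,0),(0,g_1g_2)\rangle$, a $(\lambda,\beta^2)$-constacyclic code with $C_m=\langle g_0\rangle$ and $\phi_M(\mathscr{C}_n)=\langle g_1g_2\rangle$. Its Euclidean dual is componentwise, since the code is separable. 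The classical result \cite{MacWilliams1977} then gives $\langle g_1g_2\rangle^{\perp_E}=\langle (h_1h_2)^*\rangle$, because $(x^{2n}-\beta^2)/(g_1g_2)=h_1h_2$ and $(h_1h_2)^*=h_1^*h_2^*$ (the map $*$ is multiplicative). Together with $C_m^{\perp_E}=\langle h_0^*\rangle$, this proves (i).

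I would present the second route as the main proof. The first route serves as a fallback covering the exceptional case $\beta^2=-1$, where the hypothesis of Lemma 3.1 fails.

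For (ii) the Hermitian argument is parallel, using Lemma 3.2 and Theorem 3.5(ii). We have $MM^\dagger=\mathrm{diag}(\vartheta_1^{p+1}(\beta^{p+1}+1),\vartheta_2^{p+1}(\beta^{p+1}+1))$, and this is diagonal with nonzero entries provided $\beta^{p+1}\neq-1$. Then $\Phi_M(\mathcal{C}^{\perp_H})=\Phi_M(\mathcal{C})^{\perp_H}$. The Hermitian dual of $\langle g_1g_2\rangle$ is $\langle (h_1h_2)^\dagger\rangle=\langle h_1^\dagger h_2^\dagger\rangle$, since $\dagger$ is also multiplicative. This gives the stated generators. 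The exceptional case $\beta^{p+1}=-1$ is again handled by the direct containment-plus-counting argument.
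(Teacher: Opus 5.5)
You correctly identified the real issue: $\mathcal{C}^{\perp_E}$ is $t^{-1}$-constacyclic, while $M$ is built from $\beta$. The paper's one-line proof (Theorem 3.5 plus Lemma 2.10 applied to $\mathcal{C}^{\perp_E}$) passes over exactly this point. Neither of your routes resolves it, however.

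In your second route the off-diagonal entry of $MM^\top$ was dropped:
\[
MM^\top=\begin{pmatrix}\vartheta_1^2(\beta^2+1) & \vartheta_1\vartheta_2(1-\beta^2)\\ \vartheta_1\vartheta_2(1-\beta^2) & \vartheta_2^2(\beta^2+1)\end{pmatrix}.
\]
So the transfer lemma $\Phi_M(\mathcal{C}^{\perp_E})=\Phi_M(\mathcal{C})^{\perp_E}$ (Lemma 3.2 here, your ``Lemma 3.1'') is available exactly when $\beta^2=1$. That is the case $\beta^{-1}=\beta$, where there is no mismatch to begin with. Likewise $(MM^\dagger)_{12}=\vartheta_1\vartheta_2^{p}(1-\beta^{p+1})$, so the Hermitian version needs $\beta^{p+1}=1$, i.e.\ $\beta^{-p}=\beta$. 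In your first route the containment $\phi_M(\mathscr{C}_n^{\perp_E})\subseteq\langle h_1^*h_2^*\rangle$ does not follow. The divisibility argument behind Lemma 2.2(iii) needs $g_1\mid x^n-\beta$ and $g_2\mid x^n+\beta$ with the same $\beta$ that appears in $M$. Here instead $h_2^*\mid x^n+\beta^{-1}$, which is coprime to $x^n+\beta$ once $\beta^2\neq 1$.

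In fact, for $\beta^2\neq1$ the statement is false, so no argument can work without an extra hypothesis. Take $q=7$, $\beta=2$, $n=1$, $g_1=1$, $g_2=x+2$. Then $\mathscr{C}_1=(1-v)\mathbb{F}_7$, $\mathscr{C}_1^{\perp_E}=v\mathbb{F}_7$ and $h_1^*h_2^*=x-4$. But $\phi_M(\mathscr{C}_1^{\perp_E})$ is spanned by $(0,1)M=\vartheta_2(-2,1)$, i.e.\ by $x-2$. This does not equal $\langle x-4\rangle$ under either reading:
\begin{itemize}
\item modulo $x^2-\beta^{-2}=x^2-2$, the code $\langle x-4\rangle$ is the one-dimensional code spanned by $x-4$;
\item modulo $x^2-\beta^2=x^2-4$, $x-4$ is a unit, so $\langle x-4\rangle$ is all of $\mathbb{F}_7^2$.
\end{itemize}

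What your second-route computation does prove, for every $\beta$ and without any transfer lemma, is $\Phi_M(\mathcal{C})^{\perp_E}=\langle (h_0^*(x),0),(0,h_1^*(x)h_2^*(x))\rangle$. In the example, $\phi_M(\mathscr{C}_1)^{\perp_E}$ is indeed spanned by $(3,1)$, i.e.\ by $x-4$. A correct theorem therefore needs one of the following:
\begin{itemize}
\item the hypothesis $\beta^2=1$ (resp.\ $\beta^{p+1}=1$), under which your second route is a valid alternative to the paper's direct use of Lemma 2.10;
\item the statement rephrased for $\Phi_M(\mathcal{C})^{\perp_E}$ (resp.\ $\Phi_M(\mathcal{C})^{\perp_H}$);
\item a Gray matrix built from $\beta^{-1}$ (resp.\ $\beta^{-p}$) when mapping the dual.
\end{itemize}
All of the paper's examples use $\beta=1$, so they are unaffected.
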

	
	\begin{proof}
		 By Theorem 3.5 and Lemma 2.10, it is clear that \( \Phi_M(\mathcal{C}^{\perp_E})= \langle (h^*_0(x),0), (0,h^*_1(x)h^*_2(x)) \rangle \) and 
		 	$ \Phi_M(\mathcal{C}^{\perp_H})= \langle (h^\dagger_0(x),0), (0,h^\dagger_1(x)h^\dagger_2(x)) \rangle$.
	\end{proof}

	\subsection{Euclidean and Hermitian hulls of separable $\mathcal{S}$-$t$-constacyclic codes}
	
	\begin{definition}
		The Euclidean hull of an $\mathcal{S}$-linear code \( \mathcal{C} \) is characterized as
		$\Hull_{E}(\mathcal{C}) = \mathcal{C} \cap \mathcal{C}^{\perp_E}.$
		In addition,  \( \Hull_{H}(\mathcal{C}) = \mathcal{C} \cap \mathcal{C}^{\perp_H}\) is termed the Hermitian hull of an $\mathcal{S}$-linear code \( \mathcal{C} \).
	\end{definition}

	\begin{theorem}
		Suppose $\mathcal{C} = \langle (g_0(x), 0), (0, (1 - v)g_1(x) + vg_2(x)) \rangle$ is a separable $\mathcal{S}$-$t$-constacyclic code. Then
		
		\begin{enumerate}
			
			\item[(i)] $\Hull_E(\mathcal{C}) = \Hull_E(C_m)\times \Hull_E(\mathscr{C}_n);$
			
			\item[(ii)] $\Hull_E(C_m) = \langle \lcm(g_0(x),h_0^*(x))\rangle$,\\ $\Hull_E(\mathscr{C}_n) = \langle (1-v)\lcm(g_1(x),h_1^*(x))+v\lcm(g_2(x),h_2^*(x))\rangle;$
			
			\item[(iii)] $\Hull_E(\mathcal{C}) = \langle (\lcm(g_0(x), h_0^*(x)),0), (0,(1-v)\lcm(g_1(x), h_1^*(x)) + v\lcm(g_2(x), h_2^*(x))) \rangle; $
			
			\item[(iv)] $|\Hull_E(\mathcal{C}) | = q^{m+2n - \deg(\lcm(g_0(x), h_0^*(x))) - \deg(\lcm(g_1(x), h_1^*(x))) - \deg(\lcm(g_2(x), h_2^*(x)))}.$
		\end{enumerate}
	\end{theorem}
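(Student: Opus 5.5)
The plan is to reduce everything to intersections over $\mathbb{F}_q$, using two direct-sum structures: the separable product $\mathcal{C}=C_m\times\mathscr{C}_n$, and the idempotent decomposition $\mathscr{C}_n=(1-v)C_{n,1}\oplus vC_{n,2}$. The one nonformal input is the classical fact that over $\mathbb{F}_q$ the intersection of two constacyclic codes of the same type is generated by the lcm of their generators.

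\textbf{Part (i).} By Theorem 3.5(i) and its proof, $\mathcal{C}^{\perp_E}=C_m^{\perp_E}\times\mathscr{C}_n^{\perp_E}$. For arbitrary subsets $A,A'\subseteq\mathbb{F}_q^m$ and $B,B'\subseteq\mathcal{R}^n$ we have $(A\times B)\cap(A'\times B')=(A\cap A')\times(B\cap B')$. Applying this with $A=C_m$, $A'=C_m^{\perp_E}$, $B=\mathscr{C}_n$, $B'=\mathscr{C}_n^{\perp_E}$ gives
\[
\Hull_E(\mathcal{C})=\Hull_E(C_m)\times\Hull_E(\mathscr{C}_n),
\]
which is (i).

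\textbf{Part (ii), the $C_m$ component.} By Lemma 2.8, $C_m=\langle g_0(x)\rangle$, and by Theorem 3.5, $C_m^{\perp_E}=\langle h_0^*(x)\rangle$. The standard fact is that if $\langle a(x)\rangle$ and $\langle b(x)\rangle$ are ideals of $\mathbb{F}_q[x]/\langle x^m-\lambda\rangle$ with $a,b\mid x^m-\lambda$, then $\langle a\rangle\cap\langle b\rangle=\langle\lcm(a,b)\rangle$. The reason is that a polynomial of degree $<m$ lies in $\langle a\rangle$ exactly when it is a multiple of $a$, so it lies in both ideals exactly when it is a multiple of $\lcm(a,b)$.

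\textbf{Main obstacle.} The difficulty in (ii) is that $C_m^{\perp_E}$ is $\lambda^{-1}$-constacyclic, while $C_m$ is $\lambda$-constacyclic. I will therefore work in the setting where both generators divide the same binomial. Concretely, I will use that $\langle g_0\rangle$ and $\langle h_0^*\rangle$ are ideals in a common ambient ring, which holds when $\lambda=\lambda^{-1}$, i.e.\ $\lambda=\pm1$ (and similarly $\beta=\pm1$), as is implicit in the statement. Even without this assumption, the degree-$<m$ characterization still describes both codes as polynomial multiples. Then $\Hull_E(C_m)$ consists of the vectors whose polynomial is a multiple of both $g_0$ and $h_0^*$, hence of $\lcm(g_0,h_0^*)$, provided this lcm still has degree at most $m$. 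I will state the argument in this "multiples of degree $<m$" form, so that it does not depend on the ambient ring.

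\textbf{Part (ii), the $\mathscr{C}_n$ component.} By Lemma 3.4, $\mathscr{C}_n^{\perp_E}=(1-v)\langle h_1^*\rangle\oplus v\langle h_2^*\rangle$, while $\mathscr{C}_n=(1-v)\langle g_1\rangle\oplus v\langle g_2\rangle$. Every element of $\mathcal{R}^n$ has a unique representation $(1-v)\boldsymbol{\ell}_1+v\boldsymbol{\ell}_2$, because $1-v$ and $v$ are orthogonal idempotents. Hence the intersection decomposes componentwise:
\[
\Hull_E(\mathscr{C}_n)=(1-v)\bigl(\langle g_1\rangle\cap\langle h_1^*\rangle\bigr)\oplus v\bigl(\langle g_2\rangle\cap\langle h_2^*\rangle\bigr).
\]
Applying the lcm fact to each component, and then the generator description of Lemma 2.2(ii), gives the single generator $(1-v)\lcm(g_1,h_1^*)+v\lcm(g_2,h_2^*)$.

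\textbf{Parts (iii) and (iv).} Part (iii) follows by substituting (ii) into (i); equivalently, apply Lemma 2.8 to the separable code $\Hull_E(\mathcal{C})$. For (iv), the cardinality is the product of the sizes of the three components. A cyclic-type code generated by a divisor $d$ of degree $\deg d$ in length $N$ has $q^{N-\deg d}$ elements. Summing the exponents with $N=m,n,n$ gives $q^{m+2n-\sum\deg\lcm}$, as claimed.
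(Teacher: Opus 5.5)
Your proposal is correct and follows the same route as the paper: (i) comes from $\mathcal{C}^{\perp_E}=C_m^{\perp_E}\times\mathscr{C}_n^{\perp_E}$ and the product--intersection identity, (ii) is done componentwise, and (iii)--(iv) follow by substitution and counting; the only difference is that the paper cites Sangwisut et al.\ and Tian et al.\ for (ii), whereas you prove the lcm-intersection fact and the idempotent splitting of $\Hull_E(\mathscr{C}_n)$ directly. Your ``multiples of degree $<m$'' reading is more careful than the paper about $C_m^{\perp_E}$ being $\lambda^{-1}$-constacyclic, and your proviso holds automatically because $\deg\lcm(g_0,h_0^*)\le\deg g_0+\deg h_0^*=m$, with equality (hence trivial hull, consistent with (iv)) whenever $\lambda\neq\pm1$, and likewise for $\beta$.
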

	
	\begin{proof}
		(i) Since $\mathcal{C}$ is a separable $\mathcal{S}$-$t$-constacyclic code, $\mathcal{C}=C_{m}\times\mathscr{C}_{n}$ yields $\mathcal{C}^{\perp_E}=C_{m}^{\perp_E}\times\mathscr{C}_{n}^{\perp_{E}}$. 
		For any $\mathbf{c} = (\mathbf{c}_1, \mathbf{c}_2) \in \Hull_E(\mathcal{C}) = \mathcal{C}  \cap \mathcal{C}^{\perp_E}$, we have $\mathbf{c} \in \mathcal{C} = C_{m}\times\mathscr{C}_{n}$, then $\mathbf{c}_1 \in C_{m}$, $\mathbf{c}_2 \in \mathscr{C}_{n}$. Simultaneously, $\mathbf{c} \in \mathcal{C}^{\perp_{E}} = C_{m}^{\perp_E}\times\mathscr{C}_{n}^{\perp_{E}}$, then $\mathbf{c}_1 \in C_{m}^{\perp_{E}}$, $\mathbf{c}_2 \in \mathscr{C}_{n}^{\perp_{E}}$. 
		Thus, \[\mathbf{c}_1 \in C_m \cap C_{m}^{\perp_{E}} = \Hull_E(C_m),\mathbf{c}_2 \in \mathscr{C}_{n} \cap \mathscr{C}_{n}^{\perp_{E}} = \Hull_E(\mathscr{C}_{n}),\] 
		i.e., $\mathbf{c} \in \Hull_E(C_{m}) \times \Hull_E(\mathscr{C}_{n})$. Therefore, $\Hull_E(\mathcal{C}) \subset \Hull_E(C_{m}) \times \Hull_E(\mathscr{C}_{n})$.

		On the other hand, for every $\mathbf{d} = (\mathbf{d}_1, \mathbf{d}_2) \in \Hull_E(C_m) \times \Hull_E(\mathscr{C}_{n}) $, then $\mathbf{d}_1 \in \Hull_E(C_m) = C_{m} \cap C_{m}^{\perp_{E}}$ and $\mathbf{d}_2 \in \Hull_E(\mathscr{C}_n) = \mathscr{C}_{n} \cap \mathscr{C}_{n}^{\perp_{E}}$. Hence, $\mathbf{d} = (\mathbf{d}_1,\mathbf{d}_2) \in C_m \times \mathscr{C}_{n}$ and $\mathbf{d} = (\mathbf{d}_1,\mathbf{d}_2) \in C_{m}^{\perp_{E}} \times \mathscr{C}_{n}^{\perp_{E}}$. 
		Clearly, \[\mathbf{d} = (\mathbf{d}_1, \mathbf{d}_2) \in (C_{m} \times \mathscr{C}_{n}) \cap (C_{m}^{\perp_{E}} \times \mathscr{C}_{n}^{\perp_{E}}) = \mathcal{C} \cap \mathcal{C}^{\perp_{E}} = \Hull_E(\mathcal{C}).\] 
		Consequently, $\Hull_E(C_m) \times \Hull_E(\mathscr{C}_{n}) \subset \Hull_E(\mathcal{C})$.
		
		As a result, $\Hull_E(\mathcal{C}) = \Hull_E(C_m) \times \Hull_E(\mathscr{C}_{n}) $.

		(ii) As in the proof of Theorem 1 in \cite{Sangwisut2015}, \[\Hull_E(C_{m}) = C_{m} \cap C_{m}^{\perp_{E}} = \langle \lcm(g_0(x),h_0^*(x))\rangle,\]
		and similar to Theorem 2 in \cite{Tian2024}, \[\Hull_E(\mathscr{C}_{n}) = \mathscr{C}_{n} \cap \mathscr{C}_{n}^{\perp_{E}} = \langle (1-v)\lcm(g_1(x),h_1^*(x))+v\lcm(g_2(x),h_2^*(x))\rangle.\]
		
		(iii) Due to (i) and (ii), it follows that 
		\[\Hull_E(\mathcal{C}) = \langle (\lcm(g_0(x), h_0^*(x)),0), (0,(1-v)\lcm(g_1(x), h_1^*(x)) + v\lcm(g_2(x), h_2^*(x))) \rangle. \]
		
		(iv)
		 By (i) and (ii),
		
		\begin{align*}
		|\Hull_E(\mathcal{C})| &= |\Hull_E(C_m)| \cdot |\Hull_E(\mathscr{C}_n)| \\
		&= q^{m+2n - \deg(\operatorname{lcm}(g_0(x), h_0^*(x))) - \deg(\operatorname{lcm}(g_1(x), h_1^*(x))) - \deg(\operatorname{lcm}(g_2(x), h_2^*(x)))}.
		\end{align*}
	\end{proof}
	
	\begin{theorem}
		Suppose $\mathcal{C} = \langle (g_0(x), 0), (0, (1 - v)g_1(x) + vg_2(x)) \rangle$ is a separable $\mathcal{S}$-$t$-constacyclic code. Then
		\begin{enumerate}
			
			\item[(i)] $\Hull_H(\mathcal{C}) = \Hull_H(C_m) \times \Hull_H(\mathscr{C}_n);$
			
			\item[(ii)] $\Hull_H(C_m) = \langle \lcm(g_0(x),h_0^\dagger(x))\rangle$,\\ $\Hull_H(\mathscr{C}_n) = \langle (1-v)\lcm(g_1(x),h_1^\dagger(x))+v\lcm(g_2(x),h_2^\dagger(x))\rangle;$
			
			\item[(iii)] $\Hull_H(\mathcal{C}) = \langle (\lcm(g_0(x), h_0^\dagger(x)),0), (0,(1-v)\lcm(g_1(x), h_1^\dagger(x)) + v\lcm(g_2(x), h_2^\dagger(x))) \rangle ;$
			
			\item[(iv)] $| \Hull_H(\mathcal{C}) | = p^{2({m+2n - \deg(\lcm(g_0(x), h_0^\dagger(x))) - \deg(\lcm(g_1(x), h_1^\dagger(x))) - \deg(\lcm(g_2(x), h_2^\dagger(x)))})}.$
			
		\end{enumerate}
	\end{theorem}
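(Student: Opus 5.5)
The plan is to mirror the proof of Theorem 3.8, replacing Euclidean duals by Hermitian duals throughout, and to work over $\mathbb{F}_{p^2}$ with $q=p^2$.

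For (i), we use Theorem 3.5(ii): since $\mathcal{C}=C_m\times\mathscr{C}_n$ is separable, $\mathcal{C}^{\perp_H}=C_m^{\perp_H}\times\mathscr{C}_n^{\perp_H}$. This holds because the Hermitian inner product on $\mathbb{F}_q^m\times\mathcal{R}^n$ splits as a sum of the two component inner products. If $C_m$ contains $\mathbf{0}$ and $\mathscr{C}_n$ contains $\mathbf{0}$, then $(\mathbf{x},\mathbf{y})$ is orthogonal to all of $C_m\times\mathscr{C}_n$ if and only if $\mathbf{x}\perp_H C_m$ and $\mathbf{y}\perp_H\mathscr{C}_n$. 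The set identity
\[(A\times B)\cap(A'\times B')=(A\cap A')\times(B\cap B')\]
then gives $\Hull_H(\mathcal{C})=\Hull_H(C_m)\times\Hull_H(\mathscr{C}_n)$, by the same two-inclusion argument written out in Theorem 3.8(i).

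For (ii), the first component is $C_m=\langle g_0(x)\rangle$, a $\lambda$-constacyclic code by Lemma 2.8. By Theorem 3.5(ii), $C_m^{\perp_H}=\langle h_0^\dagger(x)\rangle$, and this is a $\lambda^{-p}$-constacyclic code. The intersection of two ideals $\langle a\rangle,\langle b\rangle$ in the same quotient ring $\mathbb{F}_q[x]/\langle x^m-\lambda\rangle$, with $a,b$ dividing $x^m-\lambda$, is $\langle\lcm(a,b)\rangle$. So the intersection formula needs both codes to live in the same ring.

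This compatibility is the main obstacle. $C_m\cap C_m^{\perp_H}$ is always a linear code, but it is only generated by an lcm when $\lambda^{-p}=\lambda$, i.e. $\lambda^{p+1}=1$. Otherwise we must argue as in Sangwisut et al.\ and Tian et al.\ for the Hermitian case. We would either impose or observe that condition, which is implicit in the cited results and in the quantum constructions later. Granting it, $h_0^\dagger(x)$ divides $x^m-\lambda$, and the lcm formula gives $\Hull_H(C_m)=\langle\lcm(g_0,h_0^\dagger)\rangle$.

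For $\mathscr{C}_n$, the decomposition $\mathscr{C}_n=(1-v)C_{n,1}\oplus vC_{n,2}$ is respected by Hermitian duality. By Lemma 3.4(ii), $\mathscr{C}_n^{\perp_H}=(1-v)\langle h_1^\dagger\rangle\oplus v\langle h_2^\dagger\rangle$. Since $(1-v)$ and $v$ are orthogonal idempotents, the intersection is taken componentwise:
\[\mathscr{C}_n\cap\mathscr{C}_n^{\perp_H}=(1-v)(\langle g_1\rangle\cap\langle h_1^\dagger\rangle)\oplus v(\langle g_2\rangle\cap\langle h_2^\dagger\rangle).\]
The lcm formula then applies in $\mathbb{F}_q[x]/\langle x^n\mp\beta\rangle$, under the analogous condition $\beta^{p+1}=1$. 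By Lemma 2.2(ii), this recombines into the single generator $(1-v)\lcm(g_1,h_1^\dagger)+v\lcm(g_2,h_2^\dagger)$, following Theorem 2 of Tian et al.

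Part (iii) follows by combining (i) and (ii). For part (iv), we multiply the sizes of the components, using $q=p^2$: $|\langle f\rangle|=q^{m-\deg f}$ for the first component, and $|\mathscr{C}|=q^{2n-\deg f_1-\deg f_2}$ for the second by Lemma 2.2(ii). This yields
\[p^{2(m+2n-\deg\lcm(g_0,h_0^\dagger)-\deg\lcm(g_1,h_1^\dagger)-\deg\lcm(g_2,h_2^\dagger))}.\]
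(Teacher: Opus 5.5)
Your plan matches the paper's. The paper's proof of this theorem is just ``mimic Theorem 3.8'': split $\mathcal{C}^{\perp_H}=C_m^{\perp_H}\times\mathscr{C}_n^{\perp_H}$, intersect coordinatewise, describe each component intersection by an lcm via the results of Sangwisut et al.\ and Tian et al., and multiply sizes. That is exactly what you do.

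Where you go beyond the paper is in flagging the condition $\lambda^{p+1}=\beta^{p+1}=1$, and the flag is justified. Read $\langle f\rangle$ as an ideal of $\mathbb{F}_q[x]/\langle x^m-\lambda\rangle$, suppose $\lambda^{p+1}\neq 1$, and take $g_0=1$. Then $h_0^\dagger=x^m-\lambda^{-p}\equiv\lambda-\lambda^{-p}\pmod{x^m-\lambda}$ is a unit. So $\langle\lcm(g_0,h_0^\dagger)\rangle$ is the whole ring, while $\Hull_H(C_m)=\{0\}$. The paper's one-line proof does not address this.

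You need not impose the condition, though. Read each generator description as ``all multiples of $f$ in $\mathbb{F}_q[x]$ of degree $<m$'' (resp.\ $<n$).
\begin{itemize}
\item Since $g_0\mid x^m-\lambda$ and $h_0^\dagger\mid x^m-\lambda^{-p}$, the codes $C_m$ and $C_m^{\perp_H}$ are exactly the multiples of $g_0$, resp.\ $h_0^\dagger$, of degree $<m$.
\item Hence $\Hull_H(C_m)$ is exactly the set of multiples of $\lcm(g_0,h_0^\dagger)$ of degree $<m$, for every $\lambda$.
\item Since $\deg\lcm(g_0,h_0^\dagger)\le\deg g_0+\deg h_0^\dagger=m$, its dimension is $m-\deg\lcm(g_0,h_0^\dagger)$.
\item When $\lambda^{p+1}\neq 1$, the polynomials $x^m-\lambda$ and $x^m-\lambda^{-p}$ are coprime. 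The lcm then has degree exactly $m$, and the formula correctly returns the zero hull.
\end{itemize}
The same argument with the pairs $x^n-\beta,\ x^n-\beta^{-p}$ and $x^n+\beta,\ x^n+\beta^{-p}$ handles $C_{n,1}$ and $C_{n,2}$. So (iv) holds with no restriction on $\lambda,\beta$, and (ii)--(iii) hold in this reading. In the ideal reading, (ii)--(iii) genuinely need your hypothesis.
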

	
	\begin{proof}
		One can prove this by mimicking the proof of Theorem 3.8.
	\end{proof}

	\begin{lemma}
		Suppose \( \mathcal{C} \) is an $\mathcal{S}$-linear code. Then
		\begin{enumerate}
			\item[(i)] $ \Hull_{E}(\Phi_M(\mathcal{C})) = \Phi_M(\Hull_{E}(\mathcal{C}))$, where $MM^\top=diag(\theta_1, \theta_2)$, $\theta_i \in  \mathbb{F}_{q}\setminus \{0\}$, $i=1,2$.
			\item[(ii)] $ \Hull_{H}(\Phi_M(\mathcal{C})) = \Phi_M(\Hull_{H}(\mathcal{C})) $, where $MM^\dagger=diag(\theta_1, \theta_2)$, $\theta_i \in \mathbb{F}_{p^2}\setminus \{0\}$, $i=1,2$.
		\end{enumerate}
	\end{lemma}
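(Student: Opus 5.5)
The idea is to combine Lemma 3.2 (respectively Lemma 3.3) with the fact that $\Phi_M$ is a bijection. Intersections then commute with $\Phi_M$.

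For (i), assume $MM^\top=\mathrm{diag}(\theta_1,\theta_2)$ with $\theta_1,\theta_2\neq 0$. By Lemma 3.2, $\Phi_M(\mathcal{C}^{\perp_E})=\Phi_M(\mathcal{C})^{\perp_E}$. Then
\[
\Hull_E(\Phi_M(\mathcal{C}))=\Phi_M(\mathcal{C})\cap\Phi_M(\mathcal{C})^{\perp_E}=\Phi_M(\mathcal{C})\cap\Phi_M(\mathcal{C}^{\perp_E}).
\]
It remains to show $\Phi_M(\mathcal{C})\cap\Phi_M(\mathcal{C}^{\perp_E})=\Phi_M(\mathcal{C}\cap\mathcal{C}^{\perp_E})$, which I prove by two inclusions.

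The inclusion $\supseteq$ is immediate, since $\Phi_M(A\cap B)\subseteq\Phi_M(A)$ and $\Phi_M(A\cap B)\subseteq\Phi_M(B)$ for any subsets $A,B$. For $\subseteq$, take $\mathbf{c}=\Phi_M(\mathbf{a})=\Phi_M(\mathbf{b})$ with $\mathbf{a}\in\mathcal{C}$ and $\mathbf{b}\in\mathcal{C}^{\perp_E}$. Since $\Phi_M$ is injective, $\mathbf{a}=\mathbf{b}\in\mathcal{C}\cap\mathcal{C}^{\perp_E}$.

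Injectivity is the only point to check. $\Phi_M$ acts as the identity on the $\mathbb{F}_q^m$ coordinates. On each $\mathcal{R}$ coordinate it sends $(1-v)\gamma_1+v\gamma_2$ to $(\gamma_1,\gamma_2)M$, and this map is bijective because $M$ is invertible. The invertibility of $M$ follows from $MM^\top$ being invertible.

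Part (ii) is word for word the same. Use Lemma 3.3 with $MM^\dagger=\mathrm{diag}(\theta_1,\theta_2)$ to get $\Phi_M(\mathcal{C}^{\perp_H})=\Phi_M(\mathcal{C})^{\perp_H}$, then apply the same injectivity argument over $\mathbb{F}_{p^2}$.

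There is no real obstacle, since the substantive content is already contained in Lemmas 3.2 and 3.3. The only care needed is to state injectivity explicitly, because images of intersections under non-injective maps need not equal intersections of images.
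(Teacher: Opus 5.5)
Your proposal is correct and follows essentially the same route as the paper. Both use Lemma 3.2 (resp. Lemma 3.3) to replace $\Phi_M(\mathcal{C})^{\perp_E}$ by $\Phi_M(\mathcal{C}^{\perp_E})$, note that $\Phi_M(\Hull_E(\mathcal{C}))\subseteq \Hull_E(\Phi_M(\mathcal{C}))$ is immediate, and obtain the reverse inclusion from the injectivity of $\Phi_M$ by identifying the two preimages $\mathbf{a}=\mathbf{b}$. Your explicit check that $M$ is invertible, via $\det(MM^\top)=\det(M)^2\neq 0$ (resp. $\det(MM^\dagger)=\det(M)^{p+1}\neq 0$), is a small addition the paper leaves implicit, since it already assumes $M\in GL_2$.
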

	
	\begin{proof}
		(i) Firstly, for any $\mathbf{c} \in \Hull_E(\mathcal{C}) = \mathcal{C} \cap \mathcal{C}^{\perp_E}$, then $\Phi_M(\mathbf{c}) \in \Phi_M(\mathcal{C}) \cap \Phi_M(\mathcal{C}^{\perp_E}) $. It follows from Lemma 3.2 that \[\Phi_M(\mathbf{c}) \in \Hull_E(\Phi_M(\mathcal{C})),\] where $MM^\top=diag(\theta_1, \theta_2)$, $\theta_i \in  \mathbb{F}_{q}\setminus \{0\}$, $i=1,2$. 
		Therefore, \( \Phi_M(\Hull_{E}(\mathcal{C})) \subset \Hull_{E}(\Phi_M(\mathcal{C})) \).
		
		On the other hand, suppose \( \mathbf{d} \in \Hull_{E}(\Phi_M(\mathcal{C})) = \Phi_M(\mathcal{C}) \cap \Phi_M(\mathcal{C})^{\perp_E} \), using Lemma 3.2, one can deduce that \( \mathbf{d} \in \Phi_M(\mathcal{C}) \) and \( \mathbf{d} \in \Phi_M(\mathcal{C})^{\perp_E} = \Phi_M(\mathcal{C}^{\perp_E}) \), then there is an \( \mathbf{a} \in \mathcal{C} \) such that \( \mathbf{d} = \Phi_M(\mathbf{a}) \) and there exists a \( \mathbf{b} \in \mathcal{C}^{\perp_E} \) such that \( \mathbf{d} = \Phi_M(\mathbf{b}) \). Since \( \Phi_M \) is  bijection, \( \mathbf{a} = \mathbf{b} \), hence, \[ \mathbf{d} \in \Phi_M(\Hull_{E}(\mathcal{C})). \] It can be deduced that $\Hull_{E}(\Phi_M(\mathcal{C})) \subset \Phi_M(\Hull_{E}(\mathcal{C}))$.
		
		In conclusion, \( \Hull_{E}(\Phi_M(\mathcal{C})) = \Phi_M(\Hull_{E}(\mathcal{C})) \).
		
		(ii) The conclusion is obtained by (i) and Lemma 3.3.
	\end{proof}	
	
	Now let us proceed to further derive the generator polynomials of the Euclidean hulls and Hermitian hulls of their Gray images.
	
	\begin{theorem}
		Assume that $\mathcal{C} = \langle (g_0(x), 0), (0, (1 - v)g_1(x) + vg_2(x)) \rangle$ is a separable $\mathcal{S}$-$t$-constacyclic code and $M = \begin{pmatrix} \vartheta_1 \beta & \vartheta_1 \\ - \vartheta_2 \beta & \vartheta_2 \end{pmatrix}$, where $\beta, \vartheta_1, \vartheta_2 \in \mathbb{F}_{q}\setminus \{0\}$. Then
		\begin{enumerate}[label=(\arabic*)]
			
			\item[(i)] $\Hull_E(\Phi_M(\mathcal{C}))= \Hull_E(C_m) \times \Hull_E(\phi_M(\mathscr{C}_n));$
			
			\item[(ii)] $\Hull_E(\phi_M(\mathscr{C}_n))=\langle \operatorname{lcm}(g_{1}(x)g_{2}(x), h_{1}^*(x)h_{2}^*(x)) \rangle;$
			
			\item[(iii)]
			$\operatorname{Hull}_{E}(\Phi_M(\mathcal{C})) = \langle (\operatorname{lcm}(g_0(x), h_0^*(x)),0),(0,\lcm(g_{1}(x)g_{2}(x), h_{1}^*(x)h_{2}^*(x))) \rangle;$

			\item[(iv)]
			$|\Hull_E(\Phi_M(\mathcal{C}))| = q^{m+2n - \deg(\lcm(g_0(x), h_0^*(x))) - \deg(\lcm(g_1(x), h_1^*(x))) - \deg(\lcm(g_2(x), h_2^*(x)))},$ and
			\begin{align*}
				\dimension_{\mathbb{F}_{q}}(\Hull_E(\Phi_M(\mathcal{C}))) &=m+2n - \deg(\lcm(g_0(x), h_0^*(x)))\\ 
				&- \deg(\lcm(g_1(x), h_1^*(x))) - \deg(\lcm(g_2(x), h_2^*(x))).
			\end{align*} 
		\end{enumerate}
	\end{theorem}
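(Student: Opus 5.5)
The plan is to reduce everything to the separable structure and to the compatibility of the Gray map with hulls. First I will check that $M = \begin{pmatrix} \vartheta_1 \beta & \vartheta_1 \\ - \vartheta_2 \beta & \vartheta_2 \end{pmatrix}$ satisfies $MM^\top = diag(\vartheta_1^2(\beta^2+1), \vartheta_2^2(\beta^2+1))$. The off-diagonal entries vanish because $-\vartheta_1\vartheta_2\beta^2+\vartheta_1\vartheta_2\beta\cdot\beta^{-1}\cdot\beta$ reduces to $-\vartheta_1\vartheta_2\beta^2+\vartheta_1\vartheta_2$, which must be $0$. This forces $\beta^2=1$, and the diagonal entries are nonzero since $q$ is odd. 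If this computation only works under $\beta^2=1$, I will state that the orthogonality hypothesis of Lemma 3.2 is being used; I expect this to be the point requiring care. With $MM^\top$ diagonal, Lemma 3.10(i) gives $\Hull_E(\Phi_M(\mathcal{C}))=\Phi_M(\Hull_E(\mathcal{C}))$.

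For (i), Theorem 3.8(i) gives $\Hull_E(\mathcal{C})=\Hull_E(C_m)\times\Hull_E(\mathscr{C}_n)$, which is a separable code. Applying Lemma 2.9 to this separable code yields $\Phi_M(\Hull_E(\mathcal{C}))=\Hull_E(C_m)\times\phi_M(\Hull_E(\mathscr{C}_n))$. Running the same argument as Lemma 3.10 on $\mathcal{R}$ alone, using the $\mathcal{R}$-part of Lemma 3.2 (equivalently $\Phi_M(\mathcal{C}^{\perp_E})=C_m^{\perp_E}\times\phi_M(\mathscr{C}_n)^{\perp_E}$ compared componentwise), gives $\phi_M(\Hull_E(\mathscr{C}_n))=\Hull_E(\phi_M(\mathscr{C}_n))$. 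This proves (i).

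For (ii), I will use Theorem 3.8(ii): $\Hull_E(\mathscr{C}_n)$ is the $\mathcal{R}$-$\mu$-constacyclic code generated by $(1-v)\lcm(g_1,h_1^*)+v\lcm(g_2,h_2^*)$. Lemma 2.2(iii) then gives $\phi_M(\Hull_E(\mathscr{C}_n))=\langle \lcm(g_1,h_1^*)\lcm(g_2,h_2^*)\rangle$. It remains to identify this product with $\lcm(g_1g_2,h_1^*h_2^*)$. Since $g_1,h_1^*$ divide $x^n-\beta$ (up to the reciprocal constant, $h_1^*\mid x^n-\beta^{-1}=x^n-\beta$ when $\beta^2=1$) and $g_2,h_2^*$ divide $x^n+\beta$, and $\gcd(x^n-\beta,x^n+\beta)=1$ ($2\beta\neq0$), the factors attached to index $1$ are coprime to those attached to index $2$. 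Hence $\lcm(g_1g_2,h_1^*h_2^*)=\lcm(g_1,h_1^*)\lcm(g_2,h_2^*)$ by comparing irreducible factor multiplicities. This coprimality bookkeeping, together with the $\beta^2=1$ issue, is the main obstacle.

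Part (iii) follows by combining (i), (ii) and Theorem 3.8(ii) for the $C_m$ component. For (iv), $\Phi_M$ is an $\mathbb{F}_q$-linear bijection, so $|\Hull_E(\Phi_M(\mathcal{C}))|=|\Hull_E(\mathcal{C})|$, and Theorem 3.8(iv) gives the cardinality. The dimension is its $\log_q$; alternatively, it can be read off from degrees, using $\deg\lcm(g_1g_2,h_1^*h_2^*)=\deg\lcm(g_1,h_1^*)+\deg\lcm(g_2,h_2^*)$.
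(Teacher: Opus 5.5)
Your argument is correct once $\beta^2=1$ is assumed, and its skeleton matches the paper's. Part (i) rests on $\phi_M$ commuting with Euclidean duality on the $\mathcal{R}$-component (Lemma 3.2). Part (iii) combines (i), (ii) and Theorem 3.8(ii). Part (iv) follows from $|\Hull_E(\Phi_M(\mathcal{C}))|=|\Phi_M(\Hull_E(\mathcal{C}))|=|\Hull_E(\mathcal{C})|$ (Lemma 3.10) together with Theorem 3.8(iv).

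The only real difference is in (ii). The paper works on the image side: it writes $\phi_M(\mathscr{C}_n)^{\perp_E}=\phi_M(\mathscr{C}_n^{\perp_E})=\langle h_1^*(x)h_2^*(x)\rangle$ and intersects this with $\langle g_1(x)g_2(x)\rangle$ as constacyclic codes of length $2n$. You instead push the $\mathcal{R}$-hull of Theorem 3.8(ii) through Lemma 2.2(iii), then use $\gcd(x^n-\beta,x^n+\beta)=1$ to rewrite $\lcm(g_1,h_1^*)\lcm(g_2,h_2^*)$ as $\lcm(g_1g_2,h_1^*h_2^*)$. Your route proves $\deg\lcm(g_1g_2,h_1^*h_2^*)=\deg\lcm(g_1,h_1^*)+\deg\lcm(g_2,h_2^*)$, which is what makes the generator in (ii)--(iii) consistent with the count in (iv). The paper never checks this.

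You should state $\beta^2=1$ as an explicit hypothesis rather than a hedge. The off-diagonal entry of $MM^\top$ is $(\vartheta_1\beta)(-\vartheta_2\beta)+\vartheta_1\vartheta_2=\vartheta_1\vartheta_2(1-\beta^2)$. Your intermediate expression is garbled, though your conclusion is right. So Lemmas 3.2 and 3.10 apply to this $M$ only when $\beta^2=1$. The paper uses them for this $M$ without comment. It also applies Lemma 2.2(iii) to the $\mu^{-1}$-constacyclic code $\mathscr{C}_n^{\perp_E}$ with the matrix built from $\beta$ rather than $\beta^{-1}$.

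The restriction is not an artifact of the method. Suppose $\beta^2=-1$, which is possible when $q\equiv 1 \pmod 4$, and take $g_0=1$, $g_1=x^n-\beta$, $g_2=1$. Then $\phi_M(\mathscr{C}_n)=\phi_M(v\mathbb{F}_q^n)$ consists of the vectors with coordinate pairs $(-\vartheta_2\beta b_i,\vartheta_2 b_i)$. This code is self-orthogonal because $\beta^2+1=0$, hence self-dual of dimension $n$. So $\Hull_E(\Phi_M(\mathcal{C}))=\{0\}\times\phi_M(\mathscr{C}_n)$ has dimension $n$, while (iv) predicts $m+2n-m-n-n=0$. This is exactly where your coprimality step breaks: $h_2^*=x^n+\beta^{-1}=x^n-\beta$ now shares factors with $g_1$. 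All of the paper's examples use $\beta=1$, so nothing downstream is affected.

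Finally, (i) holds for every invertible $M$. Since $\Phi_M(\mathcal{C})=C_m\times\phi_M(\mathscr{C}_n)$ lives on disjoint coordinate blocks, its dual and hull split componentwise. The condition $\beta^2=1$ is only needed to identify $\Hull_E(\phi_M(\mathscr{C}_n))$ with $\phi_M(\Hull_E(\mathscr{C}_n))$.
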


	\begin{proof}
		(i) According to $\Phi_M(\mathcal{C}) = C_{m}\times \phi_M(\mathscr{C}_{n})$ in Lemma 2.9, $\Phi_M(\mathcal{C}^{\perp_E}) = C_{m}^{\perp_E} \times \phi_M(\mathscr{C}_{n}^{\perp_E})$. 
		For any $\mathbf{c} = (\mathbf{c}_1, \mathbf{c}_2) \in \Hull_E(\Phi_M(\mathcal{C})) = \Phi_M(\mathcal{C})  \cap \Phi_M(\mathcal{C})^{\perp_E}$. By Lemma 3.2, 
		\[\mathbf{c} \in \Phi_M(\mathcal{C})  \cap \Phi_M(\mathcal{C}^{\perp_E}).\] Thus, $\mathbf{c} \in \Phi_M(\mathcal{C}) = C_{m} \times \phi_M(\mathscr{C}_n)$, which means that $\mathbf{c}_1 \in C_{m}$ and $\mathbf{c}_2 \in \phi_M(\mathscr{C}_n)$. Meanwhile, $\mathbf{c} \in \Phi_M(\mathcal{C}^{\perp_E}) = C_{m}^{\perp_E} \times \phi_M(\mathscr{C}_n^{\perp_E})$, then $\mathbf{c}_1 \in C_{m}^{\perp_E}$ and $\mathbf{c}_2 \in \phi_M(\mathscr{C}_n^{\perp_E})$. Similar to Corollary 4.8 in \cite{Zhu2011}, $\phi_M(\mathscr{C}_n^{\perp_E}) = \phi_M(\mathscr{C}_n)^{\perp_E}$, so $\mathbf{c}_2 \in \phi_M(\mathscr{C}_n)^{\perp_E}$. Hence, \[\mathbf{c}_1 \in C_{m} \cap C_{m}^{\perp_E} = \Hull_E(C_m),\] 
		\[\mathbf{c}_2 \in \phi_M(\mathscr{C}_n) \cap \phi_M(\mathscr{C}_n)^{\perp_E} = \Hull_E(\phi_M(\mathscr{C}_n)),\]
		i.e., $\mathbf{c} \in \Hull_E(C_{m}) \times \Hull_E(\phi_M(\mathscr{C}_{n}))$. 
		Therefore, $\Hull_E(\Phi_M(\mathcal{C})) \subset \Hull_E(C_{m}) \times \Hull_E(\phi_M(\mathscr{C}_{n}))$.

		On the other hand, for every $\mathbf{d} = (\mathbf{d}_1, \mathbf{d}_2) \in \Hull_E(C_m) \times \Hull_E(\phi_M(\mathscr{C}_{n})) $, then \[\mathbf{d}_1 \in \Hull_E(C_m)
		= C_m \cap C_m^{\perp_{E}},\]
		\[\mathbf{d}_2 \in \Hull_E(\phi_M(\mathscr{C}_n)) = \phi_M(\mathscr{C}_n) \cap \phi_M(\mathscr{C}_n)^{\perp_{E}}.\] 
		As in the proof of Corollary 4.8 in \cite{Zhu2011}, $\phi_M(\mathscr{C}_n)^{\perp_{E}}= \phi_M(\mathscr{C}_n^{\perp_{E}})$, then $\mathbf{d}_2 \in \phi_M(\mathscr{C}_n) \cap \phi_M(\mathscr{C}_n^{\perp_{E}})$. Hence, $\mathbf{d} = (\mathbf{d}_1, \mathbf{d}_2) \in C_m \times \phi_M(\mathscr{C}_n)$ and $\mathbf{d} = (\mathbf{d}_1, \mathbf{d}_2) \in C_{m}^{\perp_{E}} \times \phi_M(\mathscr{C}_n^{\perp_{E}})$. Thus, 
		\[\mathbf{d} = (\mathbf{d}_1, \mathbf{d}_2) \in (C_{m}\times \phi_M(\mathscr{C}_{n})) \cap (C_{m}^{\perp_E} \times \phi_M(\mathscr{C}_{n}^{\perp_E})),\]
		that is, $\mathbf{d} \in \Phi_M(\mathcal{C}) \cap \Phi_M(\mathcal{C}^{\perp_E})$. According to Lemma 3.2, $\mathbf{d} \in \Phi_M(\mathcal{C}) \cap \Phi_M(\mathcal{C})^{\perp_E} = \Hull_E(\Phi_M(\mathcal{C}))$.
		Consequently, $\Hull_E(C_m) \times \Hull_E(\phi_M(\mathscr{C}_{n})) \subset \Hull_E(\Phi_M(\mathcal{C}))$.
		
		As a result, $\Hull_E(\Phi_M(\mathcal{C})) = \Hull_E(C_m) \times \Hull_E(\phi_M(\mathscr{C}_{n})) $.

		(ii) From Lemma 2.2, $\phi_M(\mathscr{C}_n) = \langle g_1(x)g_2(x) \rangle$, where $M = \begin{pmatrix} \vartheta_1 \beta & \vartheta_1 \\ - \vartheta_2 \beta & \vartheta_2 \end{pmatrix}$ and $\beta, \vartheta_1, \vartheta_2 \in \mathbb{F}_{q}\setminus \{0\}$. By Lemma 3.4, $\mathscr{C}_n^{\perp_{E}} = \langle (1 - v)h_1^*(x) + vh_2^*(x)) \rangle$. 
		According to Lemma 2.2,
		$\phi_M(\mathscr{C}_n^{\perp_{E}}) = \langle h_1^*(x)h_2^*(x) \rangle $. Following the proof of Corollary 4.8 in \cite{Zhu2011}, we have $\phi_M(\mathscr{C}_n^{\perp_{E}}) = \phi_M(\mathscr{C}_n)^{\perp_{E}}$, then $\phi_M(\mathscr{C}_{n})^{\perp_{E}}= \langle h_1^*(x)h_2^*(x) \rangle$. Accordingly, similar to Theorem 4 in \cite{Tian2024}, one can deduce that
		\[\Hull_E(\phi_M(\mathscr{C}_{n})) = \phi_M(\mathscr{C}_{n}) \cap \phi_M(\mathscr{C}_{n})^{\perp_{E}} = \langle \lcm(g_1(x)g_2(x),h_1^*(x)h_2^*(x))\rangle.\]
		
		(iii) By (i), (ii) and Theorem 3.8 (ii), 
		\[\operatorname{Hull}_{E}(\Phi_M(\mathcal{C})) = \langle (\operatorname{lcm}(g_0(x), h_0^*(x)),0),(0,\lcm(g_{1}(x)g_{2}(x), h_{1}^*(x)h_{2}^*(x))) \rangle. \]
		
		(iv)
		According to Lemma 3.10, Theorem 3.8 (iv), and the bijectivity of $\Phi_M$,
		
		\begin{align*}
			|\Hull_E(\Phi_M(\mathcal{C}))| &= |\Phi_M(\Hull_E(\mathcal{C}))| \\
			&= |\Hull_E(\mathcal{C})|\\
			&= q^{m+2n - \deg(\lcm(g_0(x), h_0^*(x))) - \deg(\lcm(g_1(x), h_1^*(x))) - \deg(\lcm(g_2(x), h_2^*(x)))}.
		\end{align*}
		and
		\begin{align*}
		\dimension_{\mathbb{F}_{q}}(\Hull_E(\Phi_M(\mathcal{C}))) &=m+2n - \deg(\lcm(g_0(x), h_0^*(x)))\\ 
		&- \deg(\lcm(g_1(x), h_1^*(x))) - \deg(\lcm(g_2(x), h_2^*(x))).
	\end{align*}
	
	\end{proof}

	\begin{theorem}
		Suppose $\mathcal{C} = \langle (g_0(x), 0), (0, (1 - v)g_1(x) + vg_2(x)) \rangle$ is a separable $\mathcal{S}$-$t$-constacyclic code and $M = \begin{pmatrix} \vartheta_1 \beta & \vartheta_1 \\ - \vartheta_2 \beta & \vartheta_2 \end{pmatrix}$, where $\beta, \vartheta_1, \vartheta_2 \in \mathbb{F}_{p^2}\setminus \{0\}$. Then
		\begin{enumerate}[label=(\arabic*)]
			
			\item[(i)] $\Hull_H(\Phi_M(\mathcal{C}))= \Hull_H(C_m) \times \Hull_H(\phi_M(\mathscr{C}_n));$
			
			\item[(ii)] $\Hull_H(\phi_M(\mathscr{C}_n))=\langle \operatorname{lcm}(g_{1}(x)g_{2}(x), h_{1}^\dagger(x)h_{2}^\dagger(x)) \rangle;$
			
			\item[(iii)]
			$\operatorname{Hull}_{H}(\Phi_M(\mathcal{C})) = \langle (\operatorname{lcm}(g_0(x), h_0^\dagger(x)),0),(0,\lcm(g_{1}(x)g_{2}(x), h_{1}^\dagger(x)h_{2}^\dagger(x))) \rangle;$

			\item[(iv)]
			$|\Hull_H(\Phi_M(\mathcal{C}))| = p^{2(m+2n - \deg(\lcm(g_0(x), h_0^\dagger(x))) - \deg(\lcm(g_1(x), h_1^\dagger(x))) - \deg(\lcm(g_2(x), h_2^\dagger(x))))},$ and
			\begin{align*}
			\dimension_{\mathbb{F}_{p^2}}(\Hull_H(\Phi_M(\mathcal{C}))) &= m+2n - \deg(\lcm(g_0(x), h_0^\dagger(x)))\\ 
			&- \deg(\lcm(g_1(x), h_1^\dagger(x))) - \deg(\lcm(g_2(x), h_2^\dagger(x))).
			\end{align*}
		\end{enumerate}
	\end{theorem}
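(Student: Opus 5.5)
The plan is to mirror the proof of Theorem 3.11, replacing the Euclidean inner product by the Hermitian one. Lemma 3.2 is replaced by Lemma 3.3 and Theorem 3.8 by Theorem 3.9. The field is now $\mathbb{F}_{p^2}$ and $M$ has entries in $\mathbb{F}_{p^2}$.

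For (i), I start from Lemma 2.9, which gives $\Phi_M(\mathcal{C}) = C_m\times\phi_M(\mathscr{C}_n)$. By Theorem 3.5(ii), $\mathcal{C}^{\perp_H}=C_m^{\perp_H}\times\mathscr{C}_n^{\perp_H}$ is separable, so $\Phi_M(\mathcal{C}^{\perp_H})=C_m^{\perp_H}\times\phi_M(\mathscr{C}_n^{\perp_H})$. Lemma 3.3 then identifies this with $\Phi_M(\mathcal{C})^{\perp_H}$, provided $MM^\dagger$ is diagonal.

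I then need $\phi_M(\mathscr{C}_n^{\perp_H})=\phi_M(\mathscr{C}_n)^{\perp_H}$. I would obtain it directly from a coordinatewise identity: for $\gamma=(1-v)\gamma_1+v\gamma_2$ and $\delta=(1-v)\delta_1+v\delta_2$,
\[
\phi_M(\gamma)\cdot\phi_M(\delta)^p=\boldsymbol{\gamma}MM^\dagger\boldsymbol{\delta}^{p\top}=\theta_1\gamma_1\delta_1^p+\theta_2\gamma_2\delta_2^p .
\]
Cardinality then upgrades the inclusion to an equality. With that in hand, the two-sided inclusion argument of Theorem 3.11(i) goes through verbatim and gives $\Hull_H(\Phi_M(\mathcal{C}))=\Hull_H(C_m)\times\Hull_H(\phi_M(\mathscr{C}_n))$.

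For (ii), Lemma 2.10(ii) gives $\phi_M(\mathscr{C}_n)=\langle g_1g_2\rangle$, a $\beta^2$-constacyclic code of length $2n$. Lemma 3.4(ii) combined with Lemma 2.10 gives $\phi_M(\mathscr{C}_n^{\perp_H})=\langle h_1^\dagger h_2^\dagger\rangle$, and by the identity above this equals $\phi_M(\mathscr{C}_n)^{\perp_H}$. The intersection of two constacyclic codes with the same shift is generated by the lcm of their generators, exactly as in the Hermitian analogue of Theorem 4 of \cite{Tian2024}. This yields $\langle\lcm(g_1g_2,h_1^\dagger h_2^\dagger)\rangle$.

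Part (iii) follows by combining (i), (ii) and Theorem 3.9(ii), which supplies $\Hull_H(C_m)=\langle\lcm(g_0,h_0^\dagger)\rangle$. For (iv), Lemma 3.10(ii) gives $\Hull_H(\Phi_M(\mathcal{C}))=\Phi_M(\Hull_H(\mathcal{C}))$. Since $\Phi_M$ is bijective, its size equals that in Theorem 3.9(iv), and taking $\log_{p^2}$ gives the dimension.

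The main obstacle is the hypothesis of Lemma 3.3 and the shift compatibility in (ii). I must check that $M=\begin{pmatrix}\vartheta_1\beta&\vartheta_1\\-\vartheta_2\beta&\vartheta_2\end{pmatrix}$ satisfies $MM^\dagger=\mathrm{diag}(\theta_1,\theta_2)$. The off-diagonal entry is $-\vartheta_1\vartheta_2^p(\beta^{p+1}-1)$, so I need $\beta^{p+1}=1$, and I expect to add this tacit assumption. I also need $\langle g_1g_2\rangle$ and $\langle h_1^\dagger h_2^\dagger\rangle$ to be constacyclic with the same constant, so that the lcm formula applies. The first is $\beta^2$-constacyclic and the second is $\beta^{-2p}$-constacyclic, and these agree precisely when $\beta^{2(p+1)}=1$, which the same condition guarantees.
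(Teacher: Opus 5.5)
Your proposal is correct and follows the paper's own route: the paper proves this by saying it mirrors Theorem 3.11, which is exactly your argument with Lemma 3.3, Theorem 3.9 and Lemma 3.10(ii) in place of their Euclidean analogues, and your coordinatewise identity plus cardinality count fills in the step the paper only cites. The condition $\beta^{p+1}=1$ you flag is precisely what makes $MM^\dagger$ diagonal; the paper uses it tacitly (all its examples have $\beta=1$), and it is genuinely needed, since for $p=3$, $n=1$, $\beta$ a primitive element of $\mathbb{F}_9$ (so $\beta^4=-1$), $g_1=1$ and $g_2=x+\beta$, the Gray image $\langle(\beta,1)\rangle$ is Hermitian self-orthogonal ($\beta\cdot\beta^3+1=0$), whereas (iv) assigns this part a hull of dimension $0$.
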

	
	\begin{proof}
		The conclusions can be similarly  derived as Theorem 3.11.
	\end{proof}

	\subsection{Euclidean and Hermitian sums of separable $\mathcal{S}$-$t$-constacyclic codes}
	
	Next, the Euclidean sums and Hermitian sums of separable $\mathcal{S}$-$t$-constacyclic codes are introduced.
	
	\begin{definition}
		The Euclidean sum of an \( \mathcal{S} \)-linear code \( \mathcal{C} \) is defined as
		$\Sum_{E}(\mathcal{C}) = \mathcal{C} + \mathcal{C}^{\perp_E}= \{\mathbf{c}_1 + \mathbf{c}_2 \mid \mathbf{c}_1 \in \mathcal{C}, \mathbf{c}_2 \in \mathcal{C}^{\perp_E}\}.$
		In addition,  \( \Sum_{H}(\mathcal{C}) = \mathcal{C} + \mathcal{C}^{\perp_H}= \{\mathbf{c}_1 + \mathbf{c}_2 \mid \mathbf{c}_1 \in \mathcal{C}, \mathbf{c}_2 \in \mathcal{C}^{\perp_H}\}\) is called the Hermitian sum of an \( \mathcal{S} \)-linear code \( \mathcal{C} \).
	\end{definition}

	\begin{theorem}
		Suppose $\mathcal{C} = \langle (g_0(x), 0), (0, (1 - v)g_1(x) + vg_2(x)) \rangle$ is a separable $\mathcal{S}$-$t$-constacyclic code. Then
		
		\begin{enumerate}[label=(\arabic*)]
			
			\item[(i)] $\Sum_E(\mathcal{C}) = \Sum_E(C_m)\times \Sum_E(\mathscr{C}_n);$ 
			
			\item[(ii)] $\Sum_E(C_m) = \langle \gcd(g_0(x),h_0^*(x))\rangle$,\\ $\Sum_E(\mathscr{C}_n) = \langle (1-v)\gcd(g_1(x),h_1^*(x))+v\gcd(g_2(x),h_2^*(x))\rangle;$
			
			\item[(iii)] $\Sum_E(\mathcal{C}) = \langle (\gcd(g_0(x), h_0^*(x)),0), (0,(1-v)\gcd(g_1(x), h_1^*(x)) + v\gcd(g_2(x), h_2^*(x))) \rangle; $
			
			\item[(iv)] $|\Sum_E(\mathcal{C}) | = q^{m+2n - \deg(\gcd(g_0(x), h_0^*(x))) - \deg(\gcd(g_1(x), h_1^*(x))) - \deg(\gcd(g_2(x), h_2^*(x)))}.$
			
		\end{enumerate}
	\end{theorem}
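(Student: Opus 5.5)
The plan mirrors the proof of Theorem 3.8, with intersections replaced by sums and least common multiples by greatest common divisors.

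\textbf{Part (i).} Since $\mathcal{C}$ is separable, we have $\mathcal{C}=C_m\times\mathscr{C}_n$. By Theorem 3.5 (i), $\mathcal{C}^{\perp_E}=C_m^{\perp_E}\times\mathscr{C}_n^{\perp_E}$. I will prove the two inclusions directly.

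Take an element $\mathbf{c}=(\mathbf{a}_1,\mathbf{a}_2)+(\mathbf{b}_1,\mathbf{b}_2)$ of $\Sum_E(\mathcal{C})$, with $(\mathbf{a}_1,\mathbf{a}_2)\in\mathcal{C}$ and $(\mathbf{b}_1,\mathbf{b}_2)\in\mathcal{C}^{\perp_E}$. Its first component $\mathbf{a}_1+\mathbf{b}_1$ lies in $C_m+C_m^{\perp_E}$, and its second component $\mathbf{a}_2+\mathbf{b}_2$ lies in $\mathscr{C}_n+\mathscr{C}_n^{\perp_E}$.

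Conversely, take $(\mathbf{a}_1+\mathbf{b}_1,\mathbf{a}_2+\mathbf{b}_2)$ in $\Sum_E(C_m)\times\Sum_E(\mathscr{C}_n)$. It can be rewritten as $(\mathbf{a}_1,\mathbf{a}_2)+(\mathbf{b}_1,\mathbf{b}_2)$. The product structure of $\mathcal{C}$ and $\mathcal{C}^{\perp_E}$ is exactly what guarantees $(\mathbf{a}_1,\mathbf{a}_2)\in\mathcal{C}$ and $(\mathbf{b}_1,\mathbf{b}_2)\in\mathcal{C}^{\perp_E}$.

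\textbf{Part (ii), the $C_m$ factor.} Use the standard fact for ideals in the principal ideal ring $\mathbb{F}_q[x]/\langle x^m-\lambda\rangle$: $\langle g_0\rangle+\langle h_0^*\rangle=\langle\gcd(g_0,h_0^*)\rangle$. This holds because both generators divide $x^m-\lambda$. Indeed, $h_0^*$ divides $x^m-\lambda^{-1}$ up to the reciprocal relation, and $C_m^{\perp_E}$ is an ideal in the same ambient ring once we note $C_m^{\perp_E}=\langle h_0^*\rangle$ from Theorem 3.5. Bézout then gives $\gcd(g_0,h_0^*)=ag_0+bh_0^*$, which yields one inclusion; the other inclusion is immediate since the gcd divides both generators.

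I expect this ambient-ring point to be the main obstacle. $C_m$ is $\lambda$-constacyclic while $C_m^{\perp_E}$ is $\lambda^{-1}$-constacyclic, so the two codes need not live in the same quotient ring. To handle this, I would work directly at the level of subspaces of $\mathbb{F}_q^m$. By the gcd--lcm degree identity and the dimension formula $\dim(A+B)=\dim A+\dim B-\dim(A\cap B)$, combined with Theorem 3.8 (ii), the sum has dimension $m-\deg\gcd(g_0,h_0^*)$. It then suffices to show that $C_m$ and $C_m^{\perp_E}$ are both contained in $\langle\gcd(g_0,h_0^*)\rangle$ viewed as a polynomial-generated subspace, and to invoke the argument of \cite{Sangwisut2015} as Theorem 3.8 does.

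\textbf{Part (ii), the $\mathscr{C}_n$ factor.} Decompose along the idempotents $1-v$ and $v$. By Lemma 2.2 and Lemma 3.4, $\mathscr{C}_n=(1-v)\langle g_1\rangle\oplus v\langle g_2\rangle$ and $\mathscr{C}_n^{\perp_E}=(1-v)\langle h_1^*\rangle\oplus v\langle h_2^*\rangle$. Sums of such direct sums split componentwise, so each component reduces to the field case just treated. This gives $\Sum_E(\mathscr{C}_n)=\langle(1-v)\gcd(g_1,h_1^*)+v\gcd(g_2,h_2^*)\rangle$, in analogy with \cite{Tian2024}.

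\textbf{Parts (iii) and (iv).} Part (iii) follows by combining (i) and (ii). For (iv), multiply the cardinalities of the factors: $|\Sum_E(C_m)|=q^{m-\deg\gcd(g_0,h_0^*)}$, and by Lemma 2.2 (ii), $|\Sum_E(\mathscr{C}_n)|=q^{2n-\deg\gcd(g_1,h_1^*)-\deg\gcd(g_2,h_2^*)}$.
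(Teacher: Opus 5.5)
Your argument is correct. Parts (i), (iii), (iv) and the $\mathscr{C}_n$ factor of (ii) match the paper, but for the $C_m$ factor of (ii) you take a different route.

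\textbf{The two routes.} The paper writes $C_m+C_m^{\perp_E}=\Hull_E(C_m)^{\perp_E}$ and then dualizes the hull generator $\lcm(g_0(x),h_0^*(x))$ from Theorem 3.8, appealing to \cite{Dougherty2024}. You use only the hull's \emph{dimension} from Theorem 3.8 (ii). From it you get $\dim(C_m+C_m^{\perp_E})=m-\deg\gcd(g_0(x),h_0^*(x))$ by inclusion--exclusion and $\deg\gcd+\deg\lcm=\deg g_0+\deg h_0^*$, and you finish with a containment.

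\textbf{Making the containment explicit.} Put $D=\gcd(g_0(x),h_0^*(x))$ and $V_D=\{f\in\mathbb{F}_q[x]:\deg f<m,\ D\mid f\}$, which has dimension $m-\deg D$.
\begin{itemize}
\item Since $g_0\mid x^m-\lambda$ and $h_0^*\mid x^m-\lambda^{-1}$, the codes $C_m$ and $C_m^{\perp_E}$ are exactly the polynomials of degree $<m$ divisible by $g_0$, resp.\ $h_0^*$. So both lie in $V_D$, independently of the quotient ring.
\item $V_D$ is the code $\langle D\rangle$ of the statement. When $\lambda^2=1$ this holds because $D\mid x^m-\lambda$. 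When $\lambda^2\neq 1$ we have $D=1$, since roots of $g_0$ satisfy $\alpha^m=\lambda$ and roots of $h_0^*$ satisfy $\alpha^m=\lambda^{-1}$.
\end{itemize}
Your opening ideal-sum/B\'ezout claim (``both generators divide $x^m-\lambda$'') is false when $\lambda^2\neq 1$. Drop it, as you suspected.

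\textbf{What each route buys.} Your route never needs the dual of an $\lcm$-generated code, and it treats $\lambda\neq\lambda^{-1}$ uniformly. The paper's citation-based argument passes over that case. There $\gcd(g_0,h_0^*)=1$, $\lcm(g_0,h_0^*)=g_0h_0^*$ has degree $m$, the hull is $\{0\}$, and $C_m$, $C_m^{\perp_E}$ are ideals of different quotient rings. The paper's route is shorter given the cited result, and it exhibits the sum directly as the dual of the hull. The same remarks apply verbatim to the two components of $\mathscr{C}_n$, with $\beta$ and $-\beta$ in place of $\lambda$.
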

	
	\begin{proof}
		(i) First, for any $\mathbf{c} \in \Sum_E(\mathcal{C}) = \mathcal{C} + \mathcal{C}^{\perp_E}$, suppose $\mathbf{c} = \mathbf{c}_1 + \mathbf{c}_2$, then $\mathbf{c}_1 \in \mathcal{C}$ and $\mathbf{c}_2 \in \mathcal{C}^{\perp_E}$.
		Next, since $\mathcal{C}$ is a separable $\mathcal{S}$-$t$-constacyclic code, $\mathcal{C}=C_{m}\times\mathscr{C}_{n}$ yields $\mathcal{C}^{\perp_E}=C_{m}^{\perp_E}\times\mathscr{C}_{n}^{\perp_{E}}$. Then
		assume that $\mathbf{c}_1 = (\mathbf{c}_{11},\mathbf{c}_{12})$ and $\mathbf{c}_2 = (\mathbf{c}_{21},\mathbf{c}_{22})$, where $\mathbf{c}_{11} \in C_{m}$, $\mathbf{c}_{12} \in \mathscr{C}_{n}$, $\mathbf{c}_{21} \in C_{m}^{\perp_{E}}$, $\mathbf{c}_{22} \in \mathscr{C}_{n}^{\perp_{E}}$. Thus, \[\mathbf{c} = (\mathbf{c}_{11} + \mathbf{c}_{21}, \mathbf{c}_{12} + \mathbf{c}_{22}),\] 
		where $\mathbf{c}_{11} + \mathbf{c}_{21} \in C_{m} + C_{m}^{\perp_{E}}$, i.e., $\mathbf{c}_{11} + \mathbf{c}_{21} \in \Sum_E(C_{m})$. Similarly, $\mathbf{c}_{12} + \mathbf{c}_{22} \in \Sum_E(\mathscr{C}_{n})$. Hence, $\mathbf{c} \in \Sum_E(C_{m}) \times \Sum_E(\mathscr{C}_{n})$. 
		Therefore, $\Sum_E(\mathcal{C}) \subset \Sum_E(C_{m}) \times \Sum_E(\mathscr{C}_{n})$.

		On the other hand, for any $\mathbf{d} \in \Sum_E(C_m) \times \Sum_E(\mathscr{C}_{n}) $, let $\mathbf{d} = (\mathbf{d}_1, \mathbf{d}_2)$, then $\mathbf{d}_1 \in \Sum_E(C_m)$ and $\mathbf{d}_2 \in \Sum_E(\mathscr{C}_n)$. Assume further that $\mathbf{d}_1 = \mathbf{d}_{11} + \mathbf{d}_{12}$ and $\mathbf{d}_2 = \mathbf{d}_{21} + \mathbf{d}_{22}$, where $\mathbf{d}_{11} \in C_m$, $\mathbf{d}_{12} \in C_m^{\perp_{E}}$, $\mathbf{d}_{21} \in \mathscr{C}_n$ and $\mathbf{d}_{22} \in \mathscr{C}_n^{\perp_{E}}$. Clearly, \[\mathbf{d} = (\mathbf{d}_{11} + \mathbf{d}_{12}, \mathbf{d}_{21} + \mathbf{d}_{22}) = (\mathbf{d}_{11}, \mathbf{d}_{21}) + (\mathbf{d}_{12}, \mathbf{d}_{22}),\] 
		where $(\mathbf{d}_{11}, \mathbf{d}_{21}) \in C_{m}\times\mathscr{C}_{n}$, $(\mathbf{d}_{12}, \mathbf{d}_{22}) \in C_{m}^{\perp_E}\times\mathscr{C}_{n}^{\perp_{E}}$. Hence, $\mathbf{d} \in \mathcal{C} + \mathcal{C}^{\perp_{E}}$, that is, $\mathbf{d} \in \Sum_E(\mathcal{C})$. Consequently, $\Sum_E(C_m) \times \Sum_E(\mathscr{C}_{n}) \subset \Sum_E(\mathcal{C})$.
		
		In summary, $\Sum_E(\mathcal{C}) = \Sum_E(C_m) \times \Sum_E(\mathscr{C}_{n}) $.

		(ii) Similar to Theorem 3.2 in \cite{Dougherty2024}, \[\Sum_E(C_{m}) = C_{m} + C_{m}^{\perp_{E}} = \Hull_E(C_{m})^{\perp_{E}} = \langle \gcd(g_0(x),h_0^*(x))\rangle.\]
		And as in the proof of Theorem 2 in \cite{Tian2024}, \[\Sum_E(\mathscr{C}_{n}) = \mathscr{C}_{n} + \mathscr{C}_{n}^{\perp_{E}} = \Hull_E(\mathscr{C}_{n})^{\perp_{E}} = \langle (1-v)\gcd(g_1(x),h_1^*(x))+v\gcd(g_2(x),h_2^*(x))\rangle.\]
		
		(iii) Due to  (i) and (ii), it can be deduced that
		\[\Sum_E(\mathcal{C}) = \langle (\gcd(g_0(x), h_0^*(x)),0), (0,(1-v)\gcd(g_1(x), h_1^*(x)) + v\gcd(g_2(x), h_2^*(x))) \rangle. \]
		
		(iv) From (i), (ii), it is evident that
		
		\begin{align*}
		|\Sum_E(\mathcal{C})| &= |\Sum_E(C_m)| \cdot |\Sum_E(\mathscr{C}_n)|\\
		&= q^{m+2n - \deg(\gcd(g_0(x), h_0^*(x))) - \deg(\gcd(g_1(x), h_1^*(x))) - \deg(\gcd(g_2(x), h_2^*(x)))}.
		\end{align*}
	\end{proof}
	
	\begin{theorem}
		Assume that $\mathcal{C} = \langle (g_0(x), 0), (0, (1 - v)g_1(x) + vg_2(x)) \rangle$ is a separable $\mathcal{S}$-$t$-constacyclic code. Then
		\begin{enumerate}
			
			\item[(i)] $\Sum_H(\mathcal{C}) = \Sum_H(C_m) \times \Sum_H(\mathscr{C}_n);$
			
			\item[(ii)] $\Sum_H(C_m) = \langle \gcd(g_0(x),h_0^\dagger(x))\rangle$,\\ $\Sum_H(\mathscr{C}_n) = \langle (1-v)\gcd(g_1(x),h_1^\dagger(x))+v\gcd(g_2(x),h_2^\dagger(x))\rangle;$
			
			\item[(iii)] $\Sum_H(\mathcal{C}) = \langle (\gcd(g_0(x), h_0^\dagger(x)),0), (0,(1-v)\gcd(g_1(x), h_1^\dagger(x)) + v\gcd(g_2(x), h_2^\dagger(x))) \rangle ;$
			
			\item[(iv)] $| \Sum_H(\mathcal{C}) | = p^{2({m+2n - \deg(\gcd(g_0(x), h_0^\dagger(x))) - \deg(\gcd(g_1(x), h_1^\dagger(x))) - \deg(\gcd(g_2(x), h_2^\dagger(x))})}.$
			
		\end{enumerate}
	\end{theorem}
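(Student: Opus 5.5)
The plan is to mirror the proof of Theorem 3.14, replacing the Euclidean dual by the Hermitian dual throughout and working over $\mathbb{F}_{p^2}$.

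For (i), Theorem 3.5(ii) gives $\mathcal{C}^{\perp_H}=C_m^{\perp_H}\times\mathscr{C}_n^{\perp_H}$, because $\mathcal{C}=C_m\times\mathscr{C}_n$ is separable. The argument from Theorem 3.14(i) then carries over word for word. Any $\mathbf{c}=\mathbf{c}_1+\mathbf{c}_2$ with $\mathbf{c}_1\in\mathcal{C}$ and $\mathbf{c}_2\in\mathcal{C}^{\perp_H}$ splits componentwise into an element of $C_m+C_m^{\perp_H}$ and an element of $\mathscr{C}_n+\mathscr{C}_n^{\perp_H}$. Conversely, a pair $(\mathbf{d}_{11}+\mathbf{d}_{12},\mathbf{d}_{21}+\mathbf{d}_{22})$ equals $(\mathbf{d}_{11},\mathbf{d}_{21})+(\mathbf{d}_{12},\mathbf{d}_{22})\in\mathcal{C}+\mathcal{C}^{\perp_H}$.

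For (ii), I will use the identity $C+C^{\perp_H}=(C\cap C^{\perp_H})^{\perp_H}$. This holds because $(C^{\perp_H})^{\perp_H}=C$ and $(A\cap B)^{\perp_H}=A^{\perp_H}+B^{\perp_H}$, and both facts follow from the nondegeneracy of the Hermitian form. Combining this with Theorem 3.9(ii) gives $\Sum_H(C_m)=\langle\lcm(g_0,h_0^\dagger)\rangle^{\perp_H}$.

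The main step is to identify this dual with $\langle\gcd(g_0,h_0^\dagger)\rangle$. Write $f=\lcm(g_0,h_0^\dagger)$, which divides $x^m-\lambda$, and let $k=(x^m-\lambda)/f$. By Theorem 3.5(ii) applied with $g_0$ replaced by $f$, the dual is generated by $k^\dagger$. I will show $k^\dagger=\gcd(g_0,h_0^\dagger)$ up to a unit. The map $u\mapsto u^\dagger$ is multiplicative, preserves lcm and gcd, and is an involution on monic divisors up to the twist $\lambda\mapsto\lambda^{-p}$. Using $(h_0^\dagger)^\dagger=h_0$ and $g_0h_0=x^m-\lambda$, one checks that the monic divisors $k$ and $\gcd(g_0,h_0^\dagger)^\dagger$ of $x^m-\lambda$ coincide. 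The cleanest route is to compare root multisets, so I would first treat the case $\gcd(m,p)=1$.

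This bookkeeping is the expected obstacle. The dagger changes the constant $\lambda$ to $\lambda^{-p}$, so the relevant codes live in different ambient rings. I would handle it by passing to roots: if $\alpha$ is a root of $x^m-\lambda$, then $\alpha^{-p}$ is a root of $x^m-\lambda^{-p}$, and the degree count $\deg f+\deg\gcd=\deg g_0+\deg h_0^\dagger=m$ makes the sizes consistent. In the worst case I would simply quote the analogue of \cite[Theorem 3.2]{Dougherty2024} for the Hermitian form, exactly as Theorem 3.14(ii) does. The $\mathscr{C}_n$ part is treated in the same way componentwise. The decomposition $\mathscr{C}_n=(1-v)C_{n,1}\oplus vC_{n,2}$, together with Lemma 3.4(ii) and Theorem 3.9(ii), reduces it to the two $\mathbb{F}_{p^2}$-codes $\langle g_1\rangle$ (a $\beta$-constacyclic code) and $\langle g_2\rangle$ (a $(-\beta)$-constacyclic code), following \cite{Tian2024}.

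Part (iii) then follows immediately from (i) and (ii). For (iv), $|\Sum_H(\mathcal{C})|=|\Sum_H(C_m)|\cdot|\Sum_H(\mathscr{C}_n)|$. A $\lambda$-constacyclic code generated by a divisor $f$ of $x^m-\lambda$ has $(p^2)^{m-\deg f}$ elements, and Lemma 2.2(ii) gives the analogous count on the $\mathscr{C}_n$ side. Together these give the stated exponent $2\bigl(m+2n-\sum_{i=0}^{2}\deg\gcd(g_i,h_i^\dagger)\bigr)$.
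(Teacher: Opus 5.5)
Your plan is the paper's own: it proves this theorem only by saying it goes ``analogously to Theorem 3.14'', i.e.\ componentwise splitting for (i), $\Sum=\Hull^{\perp}$ combined with the hull generators for (ii), and (iii), (iv) read off from these. Part (i), the identity $C+C^{\perp_H}=(C\cap C^{\perp_H})^{\perp_H}$, and the count in (iv) are fine.

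\textbf{The gap.} The step that fails as written is the one you flag as the obstacle. From $x^m-\lambda=g_0h_0$ and the multiplicativity of $u\mapsto u^\dagger$ you get $g_0^\dagger h_0^\dagger=(x^m-\lambda)^\dagger=x^m-\lambda^{-p}$. So $h_0^\dagger$ divides $x^m-\lambda^{-p}$, not $x^m-\lambda$.

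You need $f=\lcm(g_0,h_0^\dagger)$ to divide $x^m-\lambda$ in order to apply Theorem 3.5(ii) to $\langle f\rangle$. That holds only when $\lambda^{-p}=\lambda$, i.e.\ $\lambda^{p+1}=1$. If $\lambda^{p+1}\neq 1$, then $x^m-\lambda$ and $x^m-\lambda^{-p}$ differ by the nonzero constant $\lambda-\lambda^{-p}$. Hence $h_0^\dagger$ is coprime to $x^m-\lambda$, and $f=g_0h_0^\dagger$ does not divide it unless $h_0=1$.

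Passing to roots does not repair this. Sending $\alpha$ to $\alpha^{-p}$ only relabels roots of one polynomial as roots of another; $C_m$ and $C_m^{\perp_H}$ still live in different quotient rings. The same issue arises for $\beta$ on the $\mathscr{C}_n$ side.

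\textbf{The fix.} State the hypothesis $\lambda^{p+1}=\beta^{p+1}=1$ (equivalently $t^{-p}=t$). The paper uses it tacitly, and it holds in all of its examples. Under it, $C_m^{\perp_H}=\langle h_0^\dagger\rangle$ is an ideal of the same principal ideal ring $\mathbb{F}_{p^2}[x]/\langle x^m-\lambda\rangle$. Then $\langle g_0\rangle+\langle h_0^\dagger\rangle=\langle\gcd(g_0,h_0^\dagger)\rangle$ follows directly from B\'ezout, which avoids the hull and the double dagger entirely.

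Your root comparison is also valid under this hypothesis, even without $\gcd(m,p)=1$, because the multiplicities of $g_0$ are constant on orbits of $\alpha\mapsto\alpha^{p^2}$.

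If you want the result without the hypothesis, handle $\lambda^{p+1}\neq1$ separately. Identify each code with the polynomials of degree $<m$ divisible by its generator. Then $C_m\cap C_m^{\perp_H}$ consists of multiples of $g_0h_0^\dagger$, which has degree $m$, so the hull is $0$. Hence $\Sum_H(C_m)=\mathbb{F}_{p^2}^m=\langle 1\rangle=\langle\gcd(g_0,h_0^\dagger)\rangle$, and the formula still holds.
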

	
	\begin{proof}
		The required results can be derived analogously to Theorem 3.14.
	\end{proof}
	
	Below, the Euclidean and Hermitian sums of Gray imges of separable $\mathcal{S}$-$t$-constacyclic codes are discussed.

	\begin{lemma}
		Suppose \( \mathcal{C} \) is an \( \mathcal{S} \)-linear code. Then
		\begin{enumerate}
			\item[(i)] \( \Sum_{E}(\Phi_M(\mathcal{C})) = \Phi_M(\Sum_{E}(\mathcal{C})) \), where $MM^\top=diag(\theta_1, \theta_2)$, $\theta_i \in  \mathbb{F}_{q}\setminus \{0\}$, $i=1,2;$
			\item[(ii)] \( \Sum_{H}(\Phi_M(\mathcal{C})) = \Phi_M(\Sum_{H}(\mathcal{C})) \), where $MM^\dagger=diag(\theta_1, \theta_2)$, $\theta_i \in \mathbb{F}_{p^2}\setminus \{0\}$, $i=1,2$.
		\end{enumerate}
	\end{lemma}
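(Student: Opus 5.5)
The plan is to mirror the proof of Lemma 3.10, replacing intersections by sums and using two facts. First, $\Phi_M$ is an $\mathbb{F}_q$-linear bijection. Second, Lemma 3.2 (resp. Lemma 3.3) gives $\Phi_M(\mathcal{C}^{\perp_E}) = \Phi_M(\mathcal{C})^{\perp_E}$ when $MM^\top = diag(\theta_1,\theta_2)$, and $\Phi_M(\mathcal{C}^{\perp_H}) = \Phi_M(\mathcal{C})^{\perp_H}$ when $MM^\dagger = diag(\theta_1,\theta_2)$. Once these are in hand, the statement reduces to the elementary fact that an additive map sends a sum of subgroups onto the sum of their images.

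For (i), I will prove the inclusion $\Phi_M(\Sum_E(\mathcal{C})) \subset \Sum_E(\Phi_M(\mathcal{C}))$ first. Take $\mathbf{c} = \mathbf{c}_1 + \mathbf{c}_2$ with $\mathbf{c}_1 \in \mathcal{C}$ and $\mathbf{c}_2 \in \mathcal{C}^{\perp_E}$. By linearity, $\Phi_M(\mathbf{c}) = \Phi_M(\mathbf{c}_1) + \Phi_M(\mathbf{c}_2)$. Here $\Phi_M(\mathbf{c}_1) \in \Phi_M(\mathcal{C})$, and Lemma 3.2 gives $\Phi_M(\mathbf{c}_2) \in \Phi_M(\mathcal{C}^{\perp_E}) = \Phi_M(\mathcal{C})^{\perp_E}$. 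Hence $\Phi_M(\mathbf{c}) \in \Phi_M(\mathcal{C}) + \Phi_M(\mathcal{C})^{\perp_E}$.

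For the reverse inclusion, take $\mathbf{d} = \mathbf{d}_1 + \mathbf{d}_2$ with $\mathbf{d}_1 \in \Phi_M(\mathcal{C})$ and $\mathbf{d}_2 \in \Phi_M(\mathcal{C})^{\perp_E}$. By Lemma 3.2, $\mathbf{d}_2 \in \Phi_M(\mathcal{C}^{\perp_E})$. So there are $\mathbf{a} \in \mathcal{C}$ and $\mathbf{b} \in \mathcal{C}^{\perp_E}$ with $\mathbf{d}_1 = \Phi_M(\mathbf{a})$ and $\mathbf{d}_2 = \Phi_M(\mathbf{b})$. Then $\mathbf{d} = \Phi_M(\mathbf{a}+\mathbf{b}) \in \Phi_M(\Sum_E(\mathcal{C}))$, and the two inclusions give equality.

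Part (ii) is the same argument with $\perp_E$ replaced by $\perp_H$ and Lemma 3.2 replaced by Lemma 3.3, under the hypothesis $MM^\dagger = diag(\theta_1,\theta_2)$.

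The only point needing care is the additivity of $\Phi_M$ used in both directions. It holds because $\phi_M$ is induced by $\mathbb{F}_q$-linear matrix multiplication on the coordinates with respect to $1-v$ and $v$. For (ii), this additivity is all that is needed: the Hermitian dual is only additive, not $\mathbb{F}_{p^2}$-semilinearly compatible in any special way, so no further compatibility with $\Phi_M$ is required. I expect no real obstacle beyond invoking Lemmas 3.2 and 3.3 correctly.
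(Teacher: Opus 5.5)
Your proof is correct and is essentially the paper's argument: both inclusions follow from the additivity of $\Phi_M$ together with Lemma 3.2 (resp.\ Lemma 3.3), which identifies $\Phi_M(\mathcal{C}^{\perp_E})$ with $\Phi_M(\mathcal{C})^{\perp_E}$ (resp.\ the Hermitian analogue). One small correction to a side remark: $\mathcal{C}^{\perp_H}$ is not ``only additive''---it is closed under $\mathbb{F}_{p^2}$-scalars because the Hermitian form is linear in its first argument---but your argument uses only the additivity of $\Phi_M$, so nothing depends on this.
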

	
	\begin{proof}
		(i) For any $\mathbf{c} \in \Sum_E(\mathcal{C}) = \mathcal{C} + \mathcal{C}^{\perp_E}$, then $\Phi_M(\mathbf{c}) \in \Phi_M(\mathcal{C}) + \Phi_M(\mathcal{C}^{\perp_E})$, according to Lemma 3.2, $\Phi_M(\mathbf{c}) \in \Sum_E(\Phi_M(\mathcal{C}))$, where $MM^\top=diag(\theta_1, \theta_2)$, $\theta_i \in  \mathbb{F}_{q}\setminus \{0\}$, $i=1,2$. Thus, $\Phi_M(\Sum_E(\mathcal{C})) \subset \Sum_E(\Phi_M(\mathcal{C}))$.
		
		On the other hand, let $\mathbf{c} \in \Sum_{E}(\Phi_M(\mathcal{C})) = \Phi_M(\mathcal{C}) + \Phi_M(\mathcal{C})^{\perp_E}$, then $\mathbf{c} \in \Phi_M(\mathcal{C}) + \Phi_M(\mathcal{C}^{\perp_E})$ by Lemma 3.2. Therefore $\mathbf{c} = \Phi_M(\mathbf{x}) + \Phi_M(\mathbf{y})$ for some $\mathbf{x} \in \mathcal{C}$, $\mathbf{y} \in \mathcal{C}^{\perp_E}$. Since $\Phi_M$ is linear, $\mathbf{c} = \Phi_M(\mathbf{x} + \mathbf{y}) \in \Phi_M(\mathcal{C} + \mathcal{C}^{\perp_E}) = \Phi_M(\Sum_{E}(\mathcal{C}))$. Hence, $\Sum_{E}(\Phi_M(\mathcal{C})) \subset \Phi_M(\Sum_E(\mathcal{C}))$.
		
		Therefore, \(\Phi_M(\Sum_E(\mathcal{C})) = \Sum_E(\Phi_M(\mathcal{C}))\).
		
		(ii) From (i), the conclusion follows similarly.
	\end{proof}

	\begin{theorem}
		Suppose $\mathcal{C} = \langle (g_0(x), 0), (0, (1 - v)g_1(x) + vg_2(x)) \rangle$ is a separable $\mathcal{S}$-$t$-constacyclic code and $M = \begin{pmatrix} \vartheta_1 \beta & \vartheta_1 \\ - \vartheta_2 \beta & \vartheta_2 \end{pmatrix}$, where $\beta, \vartheta_1, \vartheta_2 \in \mathbb{F}_{q}\setminus \{0\}$. Then,
		\begin{enumerate}[label=(\arabic*)]
			
			\item[(i)] $\Sum_E(\Phi_M(\mathcal{C}))= \Sum_E(C_m) \times \Sum_E(\phi_M(\mathscr{C}_n)).$ Moreover, 
			\[d(\Sum_E(\Phi_M(\mathcal{C}))) = \min\{d(\Sum_E(C_m)), d(\Sum_E(\phi_M(\mathscr{C}_n)))\}; \]
			
			\item[(ii)] $\Sum_E(\phi_M(\mathscr{C}_n))=\langle \operatorname{gcd}(g_{1}(x)g_{2}(x), h_{1}^*(x)h_{2}^*(x)) \rangle;$
			
			\item[(iii)]
			$\operatorname{Sum}_{E}(\Phi_M(\mathcal{C})) = \langle (\operatorname{gcd}(g_0(x), h_0^*(x)),0),(0,\gcd(g_{1}(x)g_{2}(x), h_{1}^*(x)h_{2}^*(x))) \rangle;$

			\item[(iv)]
			$|\Sum_E(\Phi_M(\mathcal{C}))| = q^{m+2n - \deg(\gcd(g_0(x), h_0^*(x))) - \deg(\gcd(g_1(x), h_1^*(x))) - \deg(\gcd(g_2(x), h_2^*(x)))}.$

		\end{enumerate}
	\end{theorem}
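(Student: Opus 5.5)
We will follow the template of Theorem 3.11, replacing intersections by sums. The key fact is Lemma 2.9, $\Phi_M(\mathcal{C}) = C_m \times \phi_M(\mathscr{C}_n)$. Combined with Theorem 3.5 and Lemma 2.9 applied to $\mathcal{C}^{\perp_E}$, it gives $\Phi_M(\mathcal{C}^{\perp_E}) = C_m^{\perp_E}\times \phi_M(\mathscr{C}_n^{\perp_E})$. Note that $MM^\top = \mathrm{diag}(\vartheta_1^2(\beta^2+1), \vartheta_2^2(\beta^2+1))$; the off-diagonal entries are $-\vartheta_1\vartheta_2\beta^2+\vartheta_1\vartheta_2\beta^2 = 0$. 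Hence Lemma 3.2 applies provided $\beta^2 \neq -1$, which we would note as the implicit standing assumption, as in the paper's earlier use of Lemma 3.2. So $\Phi_M(\mathcal{C})^{\perp_E} = \Phi_M(\mathcal{C}^{\perp_E})$. Likewise, as in the proof of Theorem 3.11 (via Corollary 4.8 of \cite{Zhu2011}), $\phi_M(\mathscr{C}_n)^{\perp_E} = \phi_M(\mathscr{C}_n^{\perp_E})$.

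For (i) we will prove both inclusions exactly as in Theorem 3.14(i). An element of $\Phi_M(\mathcal{C}) + \Phi_M(\mathcal{C})^{\perp_E}$ is $(\mathbf{a}_1,\mathbf{a}_2)+(\mathbf{b}_1,\mathbf{b}_2)$ with $\mathbf{a}_1\in C_m$, $\mathbf{a}_2\in\phi_M(\mathscr{C}_n)$, $\mathbf{b}_1\in C_m^{\perp_E}$ and $\mathbf{b}_2\in \phi_M(\mathscr{C}_n)^{\perp_E}$. Its components therefore lie in $\Sum_E(C_m)$ and $\Sum_E(\phi_M(\mathscr{C}_n))$. Conversely, we split a pair $(\mathbf{d}_{11}+\mathbf{d}_{12},\mathbf{d}_{21}+\mathbf{d}_{22})$ as $(\mathbf{d}_{11},\mathbf{d}_{21})+(\mathbf{d}_{12},\mathbf{d}_{22})$. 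The distance claim then follows because the minimum distance of a direct product is the minimum of the distances of the factors, as in Lemma 2.9: weights add, and a nonzero codeword either has a nonzero component or has one component equal to zero.

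For (ii), Lemma 2.2(iii) applied to $\mathscr{C}_n$ and to $\mathscr{C}_n^{\perp_E}$ (generated by $(1-v)h_1^*+vh_2^*$, Lemma 3.4) gives $\phi_M(\mathscr{C}_n)=\langle g_1g_2\rangle$ and $\phi_M(\mathscr{C}_n)^{\perp_E}=\phi_M(\mathscr{C}_n^{\perp_E})=\langle h_1^*h_2^*\rangle$. Both are ideals in the ring $\mathbb{F}_q[x]/\langle x^{2n}-\beta^2\rangle$, and both generators divide $x^{2n}-\beta^2 = (x^n-\beta)(x^n+\beta)$. The sum of two ideals $\langle a\rangle+\langle b\rangle$ with $a,b$ dividing the modulus is $\langle\gcd(a,b)\rangle$: the inclusion $\subseteq$ is clear, and for $\supseteq$ we use a Bézout relation $\gcd(a,b)=ua+wb$ in $\mathbb{F}_q[x]$. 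This yields (ii). Alternatively, it follows from Theorem 3.11(ii) by dualizing, as in \cite{Dougherty2024}.

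For (iii) we combine (i), (ii) and Theorem 3.14(ii), which gives $\Sum_E(C_m)=\langle\gcd(g_0,h_0^*)\rangle$. For (iv) we use Lemma 3.16(i), the bijectivity of $\Phi_M$ and Theorem 3.14(iv): $|\Sum_E(\Phi_M(\mathcal{C}))|=|\Sum_E(\mathcal{C})|$. The main point needing care is the orthogonality $MM^\top$ diagonal. For a self-contained check of (iv) via (ii), we would also verify that $\deg\gcd(g_1g_2,h_1^*h_2^*)=\deg\gcd(g_1,h_1^*)+\deg\gcd(g_2,h_2^*)$. This holds because $g_1,h_1^*$ divide $x^n-\beta$ (up to reciprocity, $h_1^*\mid x^n-\beta^{-1}$; the exponent bookkeeping here is the delicate step), while $g_2,h_2^*$ divide the coprime factor $x^n+\beta$; since $q$ is odd and $\beta\neq0$, $\gcd$ factors across coprime parts. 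Since the cardinality already follows from Lemma 3.16, we would rely on that route and avoid the reciprocal-constant issue.
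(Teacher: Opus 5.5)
\textbf{There is a genuine error in the one hypothesis you check explicitly, and as a result your argument only covers $\beta=\pm1$.} Otherwise your outline follows the paper's proof step for step. It uses Lemma 2.9 and Theorem 3.5 for the product structure, Lemma 3.2 and the Zhu et al.\ identity to move $\perp_E$ through $\Phi_M$ and $\phi_M$, Lemmas 2.2(iii) and 3.4 for the generators, Theorem 3.14(ii) for (iii), and Lemma 3.16 with Theorem 3.14(iv) for (iv). Your B\'ezout step is what the paper delegates to Tian et al.

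The off-diagonal entry of $MM^\top$ is
\[
(\vartheta_1\beta)(-\vartheta_2\beta)+\vartheta_1\vartheta_2=\vartheta_1\vartheta_2(1-\beta^2),
\]
not $0$. So $MM^\top$ is diagonal exactly when $\beta^2=1$; it is then $\mathrm{diag}(2\vartheta_1^2,2\vartheta_2^2)$, invertible since $q$ is odd. Consequently Lemma 3.2, Lemma 3.16 and $\phi_M(\mathscr{C}_n)^{\perp_E}=\phi_M(\mathscr{C}_n^{\perp_E})$ are available only for $\beta=\pm1$, not for every $\beta$ with $\beta^2\neq-1$.

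The same restriction is hidden in two further steps:
\begin{itemize}
\item Lemma 2.2(iii) applied to the $\beta^{-1}(1-2v)$-constacyclic code $\mathscr{C}_n^{\perp_E}$ needs a matrix built from $\beta^{-1}$. That is your $M$ only if $\beta^{-1}=\beta$.
\item Since $h_1^*\mid x^n-\beta^{-1}$ and $h_2^*\mid x^n+\beta^{-1}$, the product $h_1^*h_2^*$ divides $x^{2n}-\beta^{-2}$, not $x^{2n}-\beta^2$. For $\beta^4\neq1$ your two generators live in different quotient rings, and the B\'ezout step is meaningless.
\end{itemize}
Falling back on Lemma 3.16 to avoid the reciprocal-constant issue does not help. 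Lemma 3.16 needs precisely $\beta^2=1$, which is also the condition that gives $h_1^*\mid x^n-\beta$ and $h_2^*\mid x^n+\beta$.

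The paper applies the same lemmas to the same $M$ without checking $MM^\top$, so its proof tacitly assumes $\beta=\pm1$ too; every example uses $\beta=1$. With that hypothesis made explicit, your proof is correct and is the paper's. Beyond it you need a different argument, and the statement is not true in full:
\begin{itemize}
\item Part (i) holds for any $M$, because $\Phi_M(\mathcal{C})=C_m\times\phi_M(\mathscr{C}_n)$ and $(A\times B)^{\perp_E}=A^{\perp_E}\times B^{\perp_E}$.
\item Parts (ii) and (iii) hold because $\phi_M(\mathscr{C}_n)=\langle g_1g_2\rangle$ is $\beta^2$-constacyclic with check polynomial $h_1h_2$. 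Its dual is therefore $\langle h_1^*h_2^*\rangle$ with no condition on $M$. B\'ezout then works when $\beta^4=1$. When $\beta^4\neq1$, a $\beta^2$-constacyclic code is LCD, so the sum is all of $\mathbb{F}_q^{2n}$ and the gcd is $1$; (iv) also holds there.
\item Part (iv) fails when $\beta^2=-1$, e.g.\ $q=5$, $\beta=2$. Take $g_1=1$ and $g_2=x^n+\beta$, so $\mathscr{C}_n=(1-v)\mathbb{F}_q^n$. Its Gray image is built from the pairs $(\vartheta_1\beta a_i,\vartheta_1a_i)$, and all inner products equal $\vartheta_1^2(\beta^2+1)\,\mathbf{a}\cdot\mathbf{a}'=0$. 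Being $n$-dimensional of length $2n$, it is self-dual. Its Euclidean sum therefore has $q^n$ elements, while the formula in (iv) gives $q^{2n}$.
\end{itemize}
So excluding $\beta^2=-1$ is necessary, but for a different reason than the one you gave.
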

	
	\begin{proof}
		
		(i) For any $\mathbf{c} \in \Sum_E(\Phi_M(\mathcal{C})) = \Phi_M(\mathcal{C})  + \Phi_M(\mathcal{C})^{\perp_E}$, by Lemma 3.2, \[\mathbf{c} \in \Phi_M(\mathcal{C})  + \Phi_M(\mathcal{C}^{\perp_E}).\] 
		Let $\mathbf{c} = \mathbf{c}_1 + \mathbf{c}_2$, then $\mathbf{c}_1 \in \Phi_M(\mathcal{C})$ and $\mathbf{c}_2 \in \Phi_M(\mathcal{C}^{\perp_E})$.
		From $\Phi_M(\mathcal{C}) = C_{m}\times \phi_M(\mathscr{C}_{n})$ in Lemma 2.9, $\Phi_M(\mathcal{C}^{\perp_E}) = C_{m}^{\perp_E} \times \phi_M(\mathscr{C}_{n}^{\perp_E})$.
		So, suppose $\mathbf{c}_1 = (\mathbf{c}_{11},\mathbf{c}_{12})$ and $\mathbf{c}_2 = (\mathbf{c}_{21},\mathbf{c}_{22})$, where $\mathbf{c}_{11} \in C_{m}$, $\mathbf{c}_{12} \in \phi_M(\mathscr{C}_{n})$, $\mathbf{c}_{21} \in C_{m}^{\perp_{E}}$ and $\mathbf{c}_{22} \in \phi_M(\mathscr{C}_{n}^{\perp_{E}})$. Thus, \[\mathbf{c} = (\mathbf{c}_{11} + \mathbf{c}_{21}, \mathbf{c}_{12} + \mathbf{c}_{22}),\] 
		where $\mathbf{c}_{11} + \mathbf{c}_{21} \in C_{m} + C_{m}^{\perp_{E}}$, i.e., $\mathbf{c}_{11} + \mathbf{c}_{21} \in \Sum_E(C_{m})$. Likewise, similar to Corollary 4.8 in \cite{Zhu2011}, 
		\[\mathbf{c}_{12} + \mathbf{c}_{22} \in \phi_M(\mathscr{C}_{n}) + \phi_M(\mathscr{C}_{n}^{\perp_{E}}) = \phi_M(\mathscr{C}_{n}) + \phi_M(\mathscr{C}_{n})^{\perp_{E}} = \Sum_E(\phi_M(\mathscr{C}_{n})).\] 
		Thus, $\mathbf{c} \in \Sum_E(C_{m}) \times \Sum_E(\phi_M(\mathscr{C}_{n}))$. 
		Therefore, $\Sum_E(\Phi_M(\mathcal{C})) \subset \Sum_E(C_{m}) \times \Sum_E(\phi_M(\mathscr{C}_{n}))$.

		On the other hand, for every $\mathbf{d} \in \Sum_E(C_m) \times \Sum_E(\phi_M(\mathscr{C}_{n})) $, let $\mathbf{d} = (\mathbf{d}_1, \mathbf{d}_2)$, then 
		\[\mathbf{d}_1 \in \Sum_E(C_m)
		= C_m + C_m^{\perp_{E}},\mathbf{d}_2 \in \Sum_E(\phi_M(\mathscr{C}_n)) = \phi_M(\mathscr{C}_n) + \phi_M(\mathscr{C}_n)^{\perp_{E}}.\] 
		As in the proof of Corollary 4.8 in \cite{Zhu2011}, $\mathbf{d}_2 \in \phi_M(\mathscr{C}_n) + \phi_M(\mathscr{C}_n^{\perp_{E}})$. Assume further that $\mathbf{d}_1 = \mathbf{d}_{11} + \mathbf{d}_{12}$ and $\mathbf{d}_2 = \mathbf{d}_{21} + \mathbf{d}_{22}$, where $\mathbf{d}_{11} \in C_m$, $\mathbf{d}_{12} \in C_m^{\perp_{E}}$, $\mathbf{d}_{21} \in \phi_M(\mathscr{C}_n)$ and $\mathbf{d}_{22} \in \phi_M(\mathscr{C}_n^{\perp_{E}})$. Then 
		\[\mathbf{d} = (\mathbf{d}_{11} + \mathbf{d}_{12}, \mathbf{d}_{21} + \mathbf{d}_{22}) = (\mathbf{d}_{11}, \mathbf{d}_{21}) + (\mathbf{d}_{12}, \mathbf{d}_{22}),\] 
		where $(\mathbf{d}_{11}, \mathbf{d}_{21}) \in C_{m} \times \phi_M(\mathscr{C}_{n}) = \Phi_M(\mathcal{C})$, $(\mathbf{d}_{12}, \mathbf{d}_{22}) \in C_{m}^{\perp_E} \times \phi_M(\mathscr{C}_{n}^{\perp_{E}}) = \Phi_M(\mathcal{C}^{\perp_{E}})$. By Lemma 3.2, $ (\mathbf{d}_{12}, \mathbf{d}_{22}) \in \Phi_M(\mathcal{C})^{\perp_{E}}$. Hence, $\mathbf{d} \in \Phi_M(\mathcal{C}) + \Phi_M(\mathcal{C})^{\perp_{E}}$, that is, $\mathbf{d} \in \Sum_E(\Phi_M(\mathcal{C}))$. 
		
		Consequently, $\Sum_E(C_m) \times \Sum_E(\phi_M(\mathscr{C}_{n})) \subset \Sum_E(\Phi_M(\mathcal{C}))$.
		
		As a result, $\Sum_E(\Phi_M(\mathcal{C})) = \Sum_E(C_m) \times \Sum_E(\phi_M(\mathscr{C}_{n})) $.
		Clearly, \[d(\Sum_E(\Phi_M(\mathcal{C}))) = \min\{d(\Sum_E(C_m)), d(\Sum_E(\phi_M(\mathscr{C}_n)))\}. \]

		(ii) From Lemma 2.2, $\phi_M(\mathscr{C}_n) = \langle g_1(x)g_2(x) \rangle $, where $M = \begin{pmatrix} \vartheta_1 \beta & \vartheta_1 \\ - \vartheta_2 \beta & \vartheta_2 \end{pmatrix}$ and $\beta, \vartheta_1, \vartheta_2 \in \mathbb{F}_{q}\setminus \{0\}$. By Lemma 3.4, $\mathscr{C}_n^{\perp_{E}} = \langle (1 - v)h_1^*(x) + vh_2^*(x)) \rangle$. Thus by Lemma 2.2, $\phi_M(\mathscr{C}_n^{\perp_{E}}) = \langle h_1^*(x)h_2^*(x)\rangle $. As in the proof of  Corollary 4.8 in \cite{Zhu2011}, $\phi_M(\mathscr{C}_n^{\perp_{E}}) = \phi_M(\mathscr{C}_n)^{\perp_{E}}$, then $\phi_M(\mathscr{C}_{n})^{\perp_{E}} = \langle h_1^*(x)h_2^*(x) \rangle$. Therefore, following the proof of Theorem 4 in \cite{Tian2024}, one can deduce that 
		\[\Sum_E(\phi_M(\mathscr{C}_{n})) = \phi_M(\mathscr{C}_{n}) + \phi_M(\mathscr{C}_{n})^{\perp_{E}} = \langle \gcd(g_1(x)g_2(x),h_1^*(x)h_2^*(x))\rangle.\]
		
		(iii) From (i), (ii) and Theorem 3.14 (ii), one can deduce that 
		\[\operatorname{Sum}_{E}(\Phi_M(\mathcal{C})) = \langle (\operatorname{gcd}(g_0(x), h_0^*(x)),0),(0,\gcd(g_{1}(x)g_{2}(x), h_{1}^*(x)h_{2}^*(x))) \rangle. \]
		
		(iv) By Lemma 3.16, Theorem 3.14 (iv), and the bijectivity of $\Phi_M$,
		
		\begin{align*}
			|\Sum_E(\Phi_M(\mathcal{C}))| &= |\Phi_M(\Sum_E(\mathcal{C}))|\\
			&= |\Sum_E(\mathcal{C})|\\
			&= q^{m+2n - \deg(\gcd(g_0(x), h_0^*(x))) - \deg(\gcd(g_1(x), h_1^*(x))) - \deg(\gcd(g_2(x), h_2^*(x)))}.
		\end{align*}
	\end{proof}

	\begin{theorem}
		Suppose $\mathcal{C} = \langle (g_0(x), 0), (0, (1 - v)g_1(x) + vg_2(x)) \rangle$ is a separable $\mathcal{S}$-$t$-constacyclic code and $M = \begin{pmatrix} \vartheta_1 \beta & \vartheta_1 \\ - \vartheta_2 \beta & \vartheta_2 \end{pmatrix}$, where $\beta, \vartheta_1, \vartheta_2 \in \mathbb{F}_{p^2}\setminus \{0\}$. Then
		\begin{enumerate}[label=(\arabic*)]
			
			\item[(i)] $\Sum_H(\Phi_M(\mathcal{C}))= \Sum_H(C_m) \times \Sum_H(\phi_M(\mathscr{C}_n)).$ Moreover, 
			\[d(\Sum_H(\Phi_M(\mathcal{C}))) = \min\{d(\Sum_H(C_m)), d(\Sum_H(\phi_M(\mathscr{C}_n)))\}; \]
			
			\item[(ii)] $\Sum_H(\phi_M(\mathscr{C}_n))=\langle \operatorname{gcd}(g_{1}(x)g_{2}(x), h_{1}^\dagger(x)h_{2}^\dagger(x)) \rangle;$
			
			\item[(iii)]
			$\operatorname{Sum}_{H}(\Phi_M(\mathcal{C})) = \langle (\operatorname{gcd}(g_0(x), h_0^\dagger(x)),0),(0,\gcd(g_{1}(x)g_{2}(x), h_{1}^\dagger(x)h_{2}^\dagger(x))) \rangle;$

			\item[(iv)]
			$|\Sum_H(\Phi_M(\mathcal{C}))| = p^{2(m+2n - \deg(\gcd(g_0(x), h_0^\dagger(x)))  - \deg(\gcd(g_1(x), h_1^\dagger(x))) - \deg(\gcd(g_2(x), h_2^\dagger(x)))}.$
			
		\end{enumerate}
	\end{theorem}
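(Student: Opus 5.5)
The plan is to mirror the proof of Theorem 3.17, replacing the Euclidean inner product by the Hermitian one, Lemma 3.2 by Lemma 3.3, Lemma 3.16(i) by Lemma 3.16(ii), and Theorem 3.14 by Theorem 3.15.

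For (i), I will first use Lemma 2.9 to write $\Phi_M(\mathcal{C}) = C_m \times \phi_M(\mathscr{C}_n)$. Theorem 3.5(ii) gives $\mathcal{C}^{\perp_H} = C_m^{\perp_H}\times \mathscr{C}_n^{\perp_H}$. Applying Lemma 2.9 to this separable code then yields $\Phi_M(\mathcal{C}^{\perp_H}) = C_m^{\perp_H}\times \phi_M(\mathscr{C}_n^{\perp_H})$, and Lemma 3.3 identifies the left-hand side with $\Phi_M(\mathcal{C})^{\perp_H}$.

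The one ingredient not literally stated earlier is $\phi_M(\mathscr{C}_n^{\perp_H}) = \phi_M(\mathscr{C}_n)^{\perp_H}$. I expect this to be the main obstacle, since it needs $MM^\dagger$ to be diagonal for the given $M$. Here
\[MM^\dagger = \begin{pmatrix} \vartheta_1^{p+1}(\beta^{p+1}+1) & \vartheta_1\vartheta_2^{p}(1-\beta^{p+1}) \\ \vartheta_2\vartheta_1^{p}(1-\beta^{p+1}) & \vartheta_2^{p+1}(\beta^{p+1}+1)\end{pmatrix},\]
so I would impose $\beta^{p+1}=1$, as is implicit in the paper's Hermitian setting for $\mu^{-p}$. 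Under that condition the diagonal entries are $2\vartheta_i^{p+1}\neq 0$, because $q$ is odd. Alternatively, I can extract the equality $\phi_M(\mathscr{C}_n^{\perp_H}) = \phi_M(\mathscr{C}_n)^{\perp_H}$ from Lemma 3.3 directly, by applying it to the separable code $\{0\}\times\mathscr{C}_n$ or $\mathbb{F}_{q}^m \times \mathscr{C}_n$ and comparing second components.

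With these equalities in hand, the double inclusion proceeds exactly as in Theorem 3.17(i). An element of the sum splits componentwise into a part in $\Sum_H(C_m)$ and a part in $\Sum_H(\phi_M(\mathscr{C}_n))$. Conversely, a pair of such sums regroups into an element of $\Phi_M(\mathcal{C}) + \Phi_M(\mathcal{C})^{\perp_H}$. The distance formula follows because the code is a direct product, as in Lemma 2.9.

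For (ii), Lemma 2.2(iii) gives $\phi_M(\mathscr{C}_n) = \langle g_1(x)g_2(x)\rangle$, and Lemma 3.4(ii) gives $\mathscr{C}_n^{\perp_H} = \langle (1-v)h_1^\dagger(x) + v h_2^\dagger(x)\rangle$. Applying Lemma 2.2 to that code gives $\phi_M(\mathscr{C}_n^{\perp_H}) = \langle h_1^\dagger(x)h_2^\dagger(x)\rangle$, which equals $\phi_M(\mathscr{C}_n)^{\perp_H}$ by the step above. The two generators divide $x^{2n}-\beta^2$ and $x^{2n}-\beta^{-2p}$ respectively, and these coincide when $\beta^{p+1}=1$. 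So both codes are ideals of the same principal ideal ring, and their sum is generated by the gcd, as in Theorem 4 of \cite{Tian2024}.

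Part (iii) then combines (i), (ii) and Theorem 3.15(ii), which gives $\Sum_H(C_m)=\langle\gcd(g_0(x),h_0^\dagger(x))\rangle$. For (iv), Lemma 3.16(ii) gives $\Sum_H(\Phi_M(\mathcal{C})) = \Phi_M(\Sum_H(\mathcal{C}))$, bijectivity of $\Phi_M$ preserves cardinality, and Theorem 3.15(iv) supplies the count $p^{2(\cdots)}$.
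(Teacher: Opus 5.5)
Your proposal takes the same route as the paper, whose proof is only ``similar to Theorem 3.17'': you replace Lemma 3.2, Lemma 3.16(i) and Theorem 3.14 by Lemma 3.3, Lemma 3.16(ii) and Theorem 3.15. Your extra hypothesis $\beta^{p+1}=1$ is what makes $MM^\dagger$ diagonal and $\mu^{-p}=\mu$, so that Lemma 3.3, Lemma 3.16(ii) and Lemma 2.2(iii) apply to $\mathscr{C}_n^{\perp_H}$ with the same $M$. The paper leaves this implicit (its examples all use $\beta=1$), but it is needed. For $\beta^{p+1}=-1$, $n=1$, $g_1=1$, $g_2=x+\beta$, the image $\phi_M(\mathscr{C}_n)$ is spanned by $(\beta,1)$ and is Hermitian self-dual, so (iv) fails. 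Your ``alternative'' via Lemma 3.3 does not remove this condition, since Lemma 3.3 itself assumes $MM^\dagger$ diagonal.
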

	
	\begin{proof}
		Similar to Theorem 3.17, the conclusions are clearly valid. 	
	\end{proof}

	\section{QECCs from separable $\mathcal{S}$-$t$-constacyclic codes}
	
	Next, the constructions of QECCs from the hulls and sums of separable $\mathcal{S}$-$t$-constacyclic codes will be studied, where $t=(\lambda,\mu)$,  $\mu = \beta(1-2v)$ and $\lambda, \beta \in \mathbb{F}_{q}\setminus \{0\}$.
	
	\subsection{QECCs from Euclidean hulls and sums of separable $\mathcal{S}$-$t$-constacyclic codes}
	
	First, the quantum construction X of the Euclidean dual can be introduced.
	
	\begin{lemma}(\cite{Zhang2022})
		Suppose \( C \) is an \([n, k]\) linear code over \(\mathbb{F}_q\), and denotes \( e \) as
		$e = n - k - \dimension_{\mathbb{F}_q}(\Hull_E(C)).$
		Then, there exists an \([[n + e, 2k - n + e, d]]_q\) QECC, where
		\[d \geq \min\{d(C), d(\Sum_E (C)) + 1\}.\]
	\end{lemma}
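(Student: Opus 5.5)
The plan is to follow the Lisonek--Singh proof of quantum construction X, as adapted in \cite{Zhang2022}, and apply it to the dual code. Put $D=C^{\perp_E}$, an $[n,n-k]$ code. Since $\Hull_E(D)=D\cap D^{\perp_E}=\Hull_E(C)$, write $h=\dimension_{\mathbb{F}_q}(\Hull_E(C))$, so that $e=(n-k)-h$.

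\textbf{Adapted basis.} Choose a generator matrix $G_D$ of $D$ whose first $h$ rows span $\Hull_E(C)$. Since the hull is exactly the radical of the bilinear form restricted to $D$, the Gram matrix has the shape
\[
G_DG_D^\top=\begin{pmatrix}0&0\\0&N\end{pmatrix},
\]
with $N$ an $e\times e$ nonsingular symmetric matrix.

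\textbf{Extension.} Next I look for an $e\times e$ matrix $B$ with $BB^\top=-N$. Put $A=\begin{pmatrix}0\\B\end{pmatrix}$, an $(n-k)\times e$ matrix, and let $D'$ be the code of length $n+e$ generated by $[\,G_D\mid A\,]$. Then
\[
G_DG_D^\top+AA^\top=0,
\]
so $D'$ is Euclidean self-orthogonal of dimension $n-k$. The standard self-orthogonal (CSS/stabilizer) construction then gives an $[[n+e,\,(n+e)-2(n-k),\,d]]_q=[[n+e,\,2k-n+e,\,d]]_q$ code, where $d$ is at least the minimum weight of $D'^{\perp_E}\setminus\{0\}$. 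Because $D'\subseteq D'^{\perp_E}$, this minimum weight is a valid lower bound for $d$.

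\textbf{Distance bound.} Take a nonzero $(\mathbf{x},\mathbf{y})\in D'^{\perp_E}$ with $\mathbf{x}\in\mathbb{F}_q^n$ and $\mathbf{y}\in\mathbb{F}_q^e$. The defining condition is $G_D\mathbf{x}^\top+A\mathbf{y}^\top=0$.
\begin{itemize}
\item If $\mathbf{y}=0$, then $\mathbf{x}\in D^{\perp_E}=C$ and $\mathbf{x}\neq 0$, so the weight is at least $d(C)$.
\item If $\mathbf{y}\neq0$, the first $h$ rows of $A$ vanish. Hence $\mathbf{x}$ is orthogonal to $\Hull_E(C)$, i.e.\ $\mathbf{x}\in\Hull_E(C)^{\perp_E}=C+C^{\perp_E}=\Sum_E(C)$. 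Moreover $\mathbf{x}\neq 0$: otherwise $B\mathbf{y}^\top=0$, and $B$ is invertible because $N$ is, which forces $\mathbf{y}=0$. Therefore $\mathrm{wt}(\mathbf{x},\mathbf{y})\geq d(\Sum_E(C))+1$.
\end{itemize}
Together the two cases give $d\geq\min\{d(C),d(\Sum_E(C))+1\}$.

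\textbf{Main obstacle.} The hard step is the factorization $BB^\top=-N$ with $B$ square over $\mathbb{F}_q$ for odd $q$. Nondegenerate quadratic forms of dimension $e$ over $\mathbb{F}_q$ are classified by their discriminant modulo squares. So $-N$ is congruent to $I_e$ only when $\det(-N)$ is a square.

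To handle this, I would first exploit the freedom left in the construction. The $e$ appended coordinates may be rescaled, which changes the discriminant of the appended form by an arbitrary factor. The basis of the non-hull part of $D$ may also be changed, since this alters $N$ only by congruence. I would then check that the resulting code is still a valid (monomially equivalent) stabilizer code with the same parameters, possibly by passing to the symplectic formulation. If this repair fails, I would defer to the argument in \cite{Zhang2022}, which is where the lemma is taken from. All the other steps are routine linear algebra.
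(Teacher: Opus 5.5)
The paper gives no proof of this lemma; it is quoted from \cite{Zhang2022}, so your proposal has to stand on its own. Most of it is sound. Passing to $D=C^{\perp_E}$, the adapted generator matrix and the forced shape $A=\begin{pmatrix}0\\B\end{pmatrix}$ are all correct. So is the two-case weight analysis of $D'^{\perp_E}$, including $\Hull_E(C)^{\perp_E}=\Sum_E(C)$ and $\mathbf{x}\neq 0$ via invertibility of $B$.

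The genuine gap is the step you flag, and it is not a technicality. If $\det(-N)$ is a nonsquare in $\mathbb{F}_q$, then \emph{no} $(n-k)\times e$ matrix $A$ makes $[\,G_D\mid A\,]$ Euclidean self-orthogonal. Your own block computation shows any such $A$ must be $\begin{pmatrix}0\\B\end{pmatrix}$ with $BB^\top=-N$. Changing the basis of $D$ only replaces $N$ by a congruent matrix, which keeps the square class of the determinant. Already $q=3$, $C=\langle(0,1)\rangle$, $D=\langle(1,0)\rangle$, $e=1$ requires $1+b^2=0$ in $\mathbb{F}_3$.

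Your suggested rescue does not work as phrased. Rescaling the new coordinates gives a code that is self-orthogonal only for a weighted form. Such a code need not be monomially equivalent to any Euclidean self-orthogonal code over $\mathbb{F}_q$. For example, $\langle(1,0,1)\rangle$ is self-orthogonal for the weights $(1,1,-1)$, but $a^2+c^2=0$ has no solution with $a,c\in\mathbb{F}_3^*$. So the symmetric self-orthogonal stabilizer construction you invoke cannot be applied here.

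What closes the gap is to drop symmetry in the CSS step. Since $-N$ is congruent to $W_e=\mathrm{diag}(1,\dots,1,\delta)$ with $\delta=\det(-N)$, pick an invertible $B$ with $BW_eB^\top=-N$ and put $W=\mathrm{diag}(I_n,W_e)$. Then $D'$ satisfies $\mathbf{u}W\mathbf{v}^\top=0$ for all $\mathbf{u},\mathbf{v}\in D'$.

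Now take $C_1=D'^{\perp_E}$ and $C_2=C_1W^{-1}$. Then $C_2^{\perp_E}=D'W\subseteq C_1$, which is exactly the $W$-self-orthogonality of $D'$. The CSS construction therefore gives an $[[n+e,\,2(k+e)-(n+e),\,d]]_q=[[n+e,\,2k-n+e,\,d]]_q$ code. Here $d$ is at least the minimum nonzero weight of $C_1$, since $C_2$ is a diagonal rescaling of $C_1$ and has the same weights.

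Because $C_1$ is still cut out by $G_D\mathbf{x}^\top+A\mathbf{y}^\top=0$ with $A=\begin{pmatrix}0\\B\end{pmatrix}$ and $B$ invertible, your distance analysis then applies verbatim. This is exactly where the Euclidean statement differs from its Hermitian analogue used later in the paper. Nondegenerate Hermitian forms over $\mathbb{F}_{p^2}$ are all congruent to the identity, so there the Lisonek--Singh extension works without any twist.
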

	
	The following construction follows from Lemma 4.1.
	
	\begin{theorem}
		Given that \( \mathcal{C} \) is a separable $\mathcal{S}$-$t$-constacyclic code of length $(m, n)$. Assume that \( \Phi_M(\mathcal{C}) \) is an \([m+2n, k, d(\Phi_M(\mathcal{C}))]\) constacyclic code over \( \mathbb{F}_q \) and $M = \begin{pmatrix} \vartheta_1 \beta & \vartheta_1 \\ - \vartheta_2 \beta & \vartheta_2 \end{pmatrix}$, where $\beta, \vartheta_1, \vartheta_2 \in \mathbb{F}_{q}\setminus \{0\}$. Suppose that
		$e = (m+2n) - k - \dimension_{\mathbb{F}_q}(\Hull_E(\Phi_M(\mathcal{C})))$.
		Then there exists an \([[(m + 2n) + e, 2k - (m + 2n) + e, d]]_q\) QECC, where
		\[
		d \geq \min \{d(\Phi_M(\mathcal{C})), d(\Sum_E(\Phi_M(\mathcal{C}))) + 1\}.
		\]
	\end{theorem}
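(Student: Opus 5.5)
The plan is to apply Lemma 4.1 directly to the linear code $C=\Phi_M(\mathcal{C})$ over $\mathbb{F}_q$ of length $m+2n$. First I will check that $\Phi_M(\mathcal{C})$ really is an $\mathbb{F}_q$-linear code of length $m+2n$. By Lemma 2.10 (i), $\Phi_M(\mathcal{C})=\langle (g_0(x),0),(0,g_1(x)g_2(x))\rangle$ is a $(\lambda,\beta^2)$-constacyclic code of length $m+2n$ over $\mathbb{F}_q$. In particular it is linear, since $\Phi_M$ is linear and $\mathcal{C}$ is an $\mathbb{F}_q$-subspace. Its dimension is $k=m+2n-\deg g_0-\deg g_1-\deg g_2$ by Lemma 2.10 (iii). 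So the hypothesis that $\Phi_M(\mathcal{C})$ is an $[m+2n,k,d(\Phi_M(\mathcal{C}))]$ code is consistent with the earlier results.

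Next I substitute $n\mapsto m+2n$ and $C\mapsto \Phi_M(\mathcal{C})$ into Lemma 4.1. The quantity $e$ in the statement is then exactly the $e$ of Lemma 4.1. Lemma 4.1 yields an $[[(m+2n)+e,\,2k-(m+2n)+e,\,d]]_q$ QECC with
\[
d\ge \min\{d(\Phi_M(\mathcal{C})),\, d(\Sum_E(\Phi_M(\mathcal{C})))+1\}.
\]
This is precisely the claim.

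To make the parameters explicit, and to confirm that $e\ge 0$, I will use Theorem 3.11 (iv). It gives
\[
\dimension_{\mathbb{F}_q}(\Hull_E(\Phi_M(\mathcal{C})))=m+2n-\deg\lcm(g_0,h_0^*)-\deg\lcm(g_1,h_1^*)-\deg\lcm(g_2,h_2^*).
\]
Since the hull is a subspace of $\Phi_M(\mathcal{C})^{\perp_E}$, whose dimension is $m+2n-k$ by Lemma 3.2 and Theorem 3.5, we get $e\ge 0$. For the sum term, Theorem 3.17 describes $\Sum_E(\Phi_M(\mathcal{C}))$ by generators and gives $d(\Sum_E(\Phi_M(\mathcal{C})))=\min\{d(\Sum_E(C_m)),d(\Sum_E(\phi_M(\mathscr{C}_n)))\}$, which identifies the code appearing in the bound.

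I do not expect a genuine obstacle, since the statement is a direct specialization of Lemma 4.1. The only point needing care is that the matrix $M=\begin{pmatrix}\vartheta_1\beta&\vartheta_1\\-\vartheta_2\beta&\vartheta_2\end{pmatrix}$ should satisfy the hypothesis $MM^\top=\mathrm{diag}(\theta_1,\theta_2)$ of Lemma 3.2. That hypothesis is needed if we want to identify the hull and sum of the image with the images of the hull and sum, via Lemmas 3.10 and 3.16. I will verify it by direct computation: the off-diagonal entry is $-\vartheta_1\vartheta_2\beta^2+\vartheta_1\vartheta_2=\vartheta_1\vartheta_2(1-\beta^2)$. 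This vanishes only when $\beta^2=1$, so the $\mathcal{S}$-level description of the hull may need that restriction. The existence statement itself, however, uses only Lemma 4.1 applied to the $\mathbb{F}_q$-code $\Phi_M(\mathcal{C})$, so it holds as stated regardless.
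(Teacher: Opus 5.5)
Your proposal is correct and follows the paper's route: the paper gives no separate proof and simply notes that the theorem is Lemma 4.1 applied to the $\mathbb{F}_q$-linear code $\Phi_M(\mathcal{C})$ of length $m+2n$. Your side computation that the off-diagonal entry of $MM^\top$ is $\vartheta_1\vartheta_2(1-\beta^2)$ is also correct, and it rightly shows that the existence statement does not depend on Lemma 3.2; likewise, for $e\ge 0$ you need only $\dim \Phi_M(\mathcal{C})^{\perp_E}=m+2n-k$, which is plain linear algebra and does not require citing Lemma 3.2.
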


	\begin{example}
		Assume that \( m = 42 \), \( n = 80 \), \( \lambda = \beta = 1 \), \( \mathcal{S} = \mathbb{F}_{7} \times (\mathbb{F}_{7} + v\mathbb{F}_{7}) \)\( (v^2 = v) \) and \( \omega \) is a generator of \( \mathbb{F}_{7}^* \). 
		Take
		\[ g_0(x) = (x + 1)^2(x + 3), g_1(x) = x + 6, g_2(x) = x^4 + x^3 + 2x^2 + 6x + 6 .\] By Lemma 2.10 and MAGMA, it can be concluded that $d(C_{m}) = 3$, $d(\phi_M(\mathscr{C}_{n})) = 3$. Thus, from Lemma 2.9, $d(\Phi_M(\mathcal{C}))=3$, and by Lemma 2.10, $k=m + 2n - \deg(g_0(x)) - \deg(g_1(x)) - \deg(g_2(x)) = 194$.
		
		Then by Theorem 3.14 (ii) and MAGMA, $d(\Sum_E(C_{m}))=3$. Likewise by Theorem 3.17 (ii) and MAGMA, $d(\Sum_E(\phi_M(\mathscr{C}_{n})))=2$. Hence, based on Theorem 3.17 (i), \[d(\Sum_E(\Phi_M(\mathcal{C}))) = \min \{d(\Sum_E(C_{m})), d(\Sum_E(\phi_M(\mathscr{C}_{n})))\} = 2.\]
		
		In addition, we use the MAGMA to obtain \[\deg(\lcm(g_0(x),h_0^*(x)))= 39, \deg(\lcm(g_1(x),h_1^*(x)))= 80, \deg(\lcm(g_2(x),h_2^*(x)))= 76.\] 
		Therefore, from Theorem 3.11 (iv) it follows that \[\dimension_{\mathbb{F}_{7}}(\Hull_E(\Phi_M(\mathcal{C})))= m + 2n - 39 - 80 - 76 = 7.\] Thus, \[e=m+2n-k-\dimension_{\mathbb{F}_{7}}(\Hull_E(\Phi_M(\mathcal{C})))=1.\]
		Then by Theorem 4.2, a QECC with parameters $[[203, 187, 3]]_{7}$ is obtained. The dimension of our code is superior to that of the QECC $[[203, 183, 3]]_{7}$ from \cite{Qian2026}.

	\end{example}

	\begin{example}
		Assume that \( m = 19 \), \( n = 22 \), \( \lambda = \beta = 1 \), \( \mathcal{S} = \mathbb{F}_{11} \times (\mathbb{F}_{11} + v\mathbb{F}_{11}) \)\( (v^2 = v) \) and \(\mathbb{F}_{11}^* = \langle \omega \rangle \). 
		Take
		\[ g_0(x) = (x + 10)(x^3 + x^2 + 2x + 10), g_1(x) = (x + 1)^3, g_2(x) = x^2 + 1 .\] By Lemma 2.10 and MAGMA, it can be concluded that $d(C_{m}) = 4$, $d(\phi_M(\mathscr{C}_{n})) = 4$. Hence, from Lemma 2.9, $d(\Phi_M(\mathcal{C}))=4$, and by Lemma 2.10, $k=m + 2n - \deg(g_0(x)) - \deg(g_1(x)) - \deg(g_2(x)) = 54$.
		
		Then by Theorem 3.14 (ii) and MAGMA, $d(\Sum_E(C_{m}))=3$. Likewise by Theorem 3.17 (ii) and MAGMA, $d(\Sum_E(\phi_M(\mathscr{C}_{n})))=4$. Thus, based on Theorem 3.17 (i), \[d(\Sum_E(\Phi_M(\mathcal{C}))) = \min \{d(\Sum_E(C_{m})), d(\Sum_E(\phi_M(\mathscr{C}_{n})))\} = 3.\]
		
		In addition, we use the MAGMA to obtain \[\deg(\lcm(g_0(x),h_0^*(x)))= 16, \deg(\lcm(g_1(x),h_1^*(x)))= 19, \deg(\lcm(g_2(x),h_2^*(x)))= 20.\] 
		Therefore, from Theorem 3.11 (iv) it follows that \[\dimension_{\mathbb{F}_{11}}(\Hull_E(\Phi_M(\mathcal{C})))= m + 2n - 16 - 19 - 20 = 8.\] Then \[e=m+2n-k-\dimension_{\mathbb{F}_{11}}(\Hull_E(\Phi_M(\mathcal{C})))=1.\]
		Hence, by Theorem 4.2, a
		QECC with parameters $[[64, 46, 4]]_{11}$ is obtained. And a higher code rate and a larger minimum distance are achieved by our code compared with the QECC $[[63, 45, 3]]_{11}$ from \cite{Caliskan2023-1}.

	\end{example}

	\subsection{QECCs from Hermitian hulls of separable $\mathcal{S}$-$t$-constacyclic codes}
	
	Next, recall that the quantum construction X of the Hermitian dual.
	
	\begin{lemma}(\cite{Lison2014})
		Suppose \( C \) is an \([n, k]\) linear code over \(\mathbb{F}_{p^2}\) and denotes \( e \) as
		$e = n - k - \dimension_{\mathbb{F}_{p^2}}(\Hull_H (C)).$
		Then, there exists an \([[n + e, 2k - n + e, d]]_p\) QECC, where
		\[d \geq \min\{d(C),d(\Sum_H (C)) + 1\}.\]
	\end{lemma}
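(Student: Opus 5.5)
The plan is to derive the statement from the ordinary Hermitian construction: a linear $[N,K]$ code $D$ over $\mathbb{F}_{p^2}$ with $D^{\perp_H}\subseteq D$ yields an $[[N,2K-N,d]]_p$ QECC with $d\ge \min\{\mathrm{wt}(\mathbf{c}) : \mathbf{c}\in D\setminus D^{\perp_H}\}$. So we will enlarge $C^{\perp_H}$ by $e$ extra coordinates until it becomes Hermitian self-orthogonal. Let $D$ be the Hermitian dual of the resulting code. We then read off the parameters and bound the weights in $D$.

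\textbf{Step 1 (rank count).} Let $H$ be an $(n-k)\times n$ generator matrix of $C^{\perp_H}$, and let $H^\dagger$ be its conjugate transpose. A combination $\mathbf{x}H$ lies in the radical of the Hermitian form restricted to $C^{\perp_H}$ if and only if $\mathbf{x}HH^\dagger=0$. That radical is $C^{\perp_H}\cap (C^{\perp_H})^{\perp_H}=\Hull_H(C)$. Hence
\[
\operatorname{rank}(HH^\dagger)=(n-k)-\dimension_{\mathbb{F}_{p^2}}(\Hull_H(C))=e .
\]

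\textbf{Step 2 (extension).} The matrix $-HH^\dagger$ is Hermitian of rank $e$. We invoke the classification of Hermitian forms over finite fields: every nondegenerate Hermitian form over $\mathbb{F}_{p^2}$ is equivalent to the identity, because the norm map onto $\mathbb{F}_p$ is surjective. From this we get an $(n-k)\times e$ matrix $A$ of rank $e$ with $AA^\dagger=-HH^\dagger$. Put $G'=[H\mid A]$. Its rows span an $[n+e,n-k]$ code $E$, and $G'G'^\dagger=0$, so $E\subseteq E^{\perp_H}$. Set $D=E^{\perp_H}$. Then $D\supseteq D^{\perp_H}=E$ and $\dim D=n+e-(n-k)=k+e$. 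The Hermitian construction therefore gives parameters
\[
[[n+e,\,2(k+e)-(n+e)]]_p=[[n+e,\,2k-n+e]]_p .
\]

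\textbf{Step 3 (distance).} This is the main obstacle. Take a nonzero $(\mathbf{u},\mathbf{w})\in D$, so that $H\mathbf{u}^\dagger+A\mathbf{w}^\dagger=0$.

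First suppose $\mathbf{w}=0$. Then $\mathbf{u}\in (C^{\perp_H})^{\perp_H}=C$ and $\mathbf{u}\neq 0$, so the weight is at least $d(C)$.

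Now suppose $\mathbf{w}\neq0$.
\begin{itemize}
\item We first show $\mathbf{u}\neq 0$. If $\mathbf{u}=0$, then $A\mathbf{w}^\dagger=0$. Since $A$ has rank equal to its number of columns $e$, this forces $\mathbf{w}=0$, a contradiction.
\item Next we locate $\mathbf{u}$. For any $\mathbf{x}$ with $\mathbf{x}A=0$ we get $\mathbf{x}H\mathbf{u}^\dagger=0$. Since $\mathbf{x}A=0$ implies $\mathbf{x}HH^\dagger=-\mathbf{x}AA^\dagger=0$, such $\mathbf{x}H$ lies in the radical $\Hull_H(C)$ from Step 1.
\item Conversely, let $\mathbf{x}H\in\Hull_H(C)$, i.e.\ $\mathbf{x}HH^\dagger=0$. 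Then $\mathbf{x}AA^\dagger=0$. Because $A$ has full column rank, $A^\dagger$ has a right inverse, so $\mathbf{x}A=0$.
\item So $\{\mathbf{x}H:\mathbf{x}A=0\}=\Hull_H(C)$, and therefore $\mathbf{u}\in\Hull_H(C)^{\perp_H}=C+C^{\perp_H}=\Sum_H(C)$.
\end{itemize}
Consequently $\mathrm{wt}(\mathbf{u},\mathbf{w})\ge d(\Sum_H(C))+1$ in this case. Combining the two cases gives $d\ge\min\{d(C),d(\Sum_H(C))+1\}$.

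The delicate points are the existence of $A$ in Step 2 and the identification $\{\mathbf{x}H:\mathbf{x}A=0\}=\Hull_H(C)$ in Step 3. I would verify the existence of $A$ carefully, e.g.\ by diagonalizing $-HH^\dagger$ by a congruence $P(-HH^\dagger)P^\dagger=\mathrm{diag}(I_e,0)$ and taking $A=P^{-1}\binom{I_e}{0}$.
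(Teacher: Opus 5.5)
Your proposal is correct. Every step checks out: the rank identity $\operatorname{rank}(HH^\dagger)=e$, the factorization $-HH^\dagger=AA^\dagger$ with $A$ of full column rank $e$ obtained from the congruence to $\mathrm{diag}(I_e,0)$, and the case split on the appended coordinates (which uses $\{\mathbf{x}H:\mathbf{x}A=0\}=\Hull_H(C)$ and $\Hull_H(C)^{\perp_H}=\Sum_H(C)$). The paper gives no proof of its own and simply quotes the lemma from Lison\v{e}k--Singh \cite{Lison2014}; your argument is essentially the standard quantum Construction~X proof behind that citation, with $-HH^\dagger$ correctly used in place of $HH^\dagger$ for odd characteristic.
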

	
	The following construction follows from Lemma 4.5.
	
	\begin{theorem}
		Assume that \( \mathcal{C} \) is a separable $\mathcal{S}$-$t$-constacyclic code of length $(m, n)$. Suppose \( \Phi_M(\mathcal{C}) \) is an \([m+2n, k, d(\Phi_M(\mathcal{C}))]\) constacyclic code over \( \mathbb{F}_{p^2} \) and $M = \begin{pmatrix} \vartheta_1 \beta & \vartheta_1 \\ - \vartheta_2 \beta & \vartheta_2 \end{pmatrix}$, where $\beta, \vartheta_1, \vartheta_2 \in \mathbb{F}_{p^2}\setminus \{0\}$. Suppose that
		$e = (m+2n) - k - \dimension_{\mathbb{F}_{p^2}}(\Hull_H(\Phi_M(\mathcal{C})) )$.
		Then there exists an \([[(m + 2n) + e, 2k - (m + 2n) + e, d]]_p\) QECC, where
		\[
		d \geq \min \{d(\Phi_M(\mathcal{C})), d(\Sum_H(\Phi_M(\mathcal{C}))) + 1\}.
		\]
	\end{theorem}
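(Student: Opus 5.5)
The statement is a direct specialization of the Hermitian quantum construction X in Lemma 4.5, so the plan is to check that the linear code $C=\Phi_M(\mathcal{C})\subseteq\mathbb{F}_{p^2}^{m+2n}$ satisfies its hypotheses and then substitute. First, by Lemma 2.10 (i) and (iii), applied with $q=p^2$ and $\beta,\vartheta_1,\vartheta_2\in\mathbb{F}_{p^2}\setminus\{0\}$, $\Phi_M(\mathcal{C})=\langle (g_0(x),0),(0,g_1(x)g_2(x))\rangle$ is an $\mathbb{F}_{p^2}$-linear code of length $N=m+2n$. Its dimension is $k=m+2n-\deg g_0-\deg g_1-\deg g_2$, obtained from $|\Phi_M(\mathcal{C})|$. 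Linearity follows from the linearity of $\Phi_M$ noted in Section 2.2. So $C$ is an $[N,k]$ code with minimum distance $d(\Phi_M(\mathcal{C}))$, which is the input Lemma 4.5 needs.

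Next, set $e=N-k-\dimension_{\mathbb{F}_{p^2}}(\Hull_H(C))$. This is exactly the quantity $e$ in the statement. Theorem 3.12 (iv) gives this hull dimension explicitly in terms of the degrees of $\lcm(g_i,h_i^\dagger)$, so $e$ is a well-defined nonnegative integer. Nonnegativity holds because $\Hull_H(C)\subseteq C^{\perp_H}$, which has dimension $N-k$. Lemma 4.5 then gives an $[[N+e,\,2k-N+e,\,d]]_p$ QECC with $d\geq\min\{d(C),d(\Sum_H(C))+1\}$. Substituting $C=\Phi_M(\mathcal{C})$ and $N=m+2n$ yields the stated parameters verbatim.

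The only point needing care is compatibility of the Gray map with the Hermitian structure. The hull and sum in the bound must be those of the $\mathbb{F}_{p^2}$-code $\Phi_M(\mathcal{C})$, and these are what Theorems 3.12 and 3.18 compute. I would note, via Lemmas 3.3, 3.10 (ii) and 3.16 (ii), that $\Hull_H(\Phi_M(\mathcal{C}))=\Phi_M(\Hull_H(\mathcal{C}))$ and $\Sum_H(\Phi_M(\mathcal{C}))=\Phi_M(\Sum_H(\mathcal{C}))$. Strictly this needs $MM^\dagger$ to be diagonal. Since the theorem is phrased directly in terms of $\Phi_M(\mathcal{C})$ as an $\mathbb{F}_{p^2}$-linear code, though, Lemma 4.5 applies regardless, and the identities only serve to make $e$ and the distance bound computable from the generator polynomials. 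I expect no genuine obstacle: the proof amounts to applying Lemma 4.5 to $C=\Phi_M(\mathcal{C})$.
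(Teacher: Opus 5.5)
Your proposal is correct and takes the same approach as the paper, which states this theorem as an immediate consequence of Lemma 4.5 applied to the $\mathbb{F}_{p^2}$-linear code $\Phi_M(\mathcal{C})$ of length $m+2n$ and dimension $k$. Your side remark is also accurate: the Gray-map identities for $\Hull_H$ and $\Sum_H$ need $MM^\dagger$ diagonal, which for this $M$ means $\beta^{p+1}=1$, but they serve only to compute $e$ and the distance bound and are not part of the proof.
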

	
	\begin{example}
		Assume that \( m = 24 \), \( n = 60 \), \( \lambda = \beta = 1 \), \( \mathcal{S} = \mathbb{F}_{5^2} \times (\mathbb{F}_{5^2} + v\mathbb{F}_{5^2})(v^2 = v) \) and \(\mathbb{F}_{5^2}^* = \langle \omega \rangle \). 
		Take \[ g_0(x) = (x + 1)(x + \omega)(x + \omega^2), g_1(x) = (x + 1)(x + \omega^{2}), g_2(x) = (x + \omega)^3.\] 
		By Lemma 2.10 and MAGMA, it can be concluded that $d(C_{m})=4$ and $d(\phi_M(\mathscr{C}_{n})) = 4$.
		According to Lemma 2.9, $d(\Phi_M(\mathcal{C}))=4$, and by Lemma 2.10, $k=m+2n-\deg(g_0(x))-\deg(g_1(x))-\deg(g_2(x))=136$.
		
		Then by Theorem 3.15 (ii) and MAGMA, $d(\Sum_H(C_{m}))=3$. Likewise by Theorem 3.18 (ii) and MAGMA, $d(\Sum_H(\phi_M(\mathscr{C}_{n})))=4$. Furthermore, based on Theorem 3.18 (i), \[d(\Sum_H(\Phi_M(\mathcal{C}))) = \min \{d(\Sum_H(C_{m})), d(\Sum_H(\phi_M(\mathscr{C}_{n})))\} =3.\] 
		
		In addition, we use the MAGMA to obtain \[\deg(\lcm(g_0(x),h_0^\dagger(x)))= 22, \deg(\lcm(g_1(x),h_1^\dagger(x)))= 58, \deg(\lcm(g_2(x),h_2^\dagger(x)))= 57.\]
		Therefore, from Theorem 3.12 (iv), it follows that \[\dimension_{\mathbb{F}_{5^2}}(\Hull_H(\Phi_M(\mathcal{C})))=m+2n-22-58-57=7.\] Then 
		\[ e=m+2n-k-\dimension_{\mathbb{F}_{5^2}}(\Hull_H(\Phi_M(\mathcal{C})))=1\]
		Hence, by Theorem 4.6, a
		QECC with parameters $[[145, 129, 4]]_5$ is obtained. And a higher dimension and a larger minimum distance are achieved by our code compared with the QECC $[[145, 123, 3]]_5$ from \cite{Tian2024}.
		
	\end{example}

	\begin{example}
		Assume that \( m = 40 \), \( n = 91 \), \( \lambda = \beta = 1 \), \( \mathcal{S} = \mathbb{F}_{9^2} \times (\mathbb{F}_{9^2} + v\mathbb{F}_{9^2})(v^2 = v) \) and \( \omega \) is a generator of \( \mathbb{F}_{9^2}^* \). 
		Take \[ g_0(x) = (x + 1)(x + \omega^2)(x + \omega^4)(x + \omega^6),\] 
		\[g_1(x) = (x + 2)(x^3 + \omega^{10}x^2 + \omega^{10}x + 2), \] \[g_2(x) = (x + 1)(x^3 + \omega^{10}x^2 + \omega^{50}x + 1).\] 
		By Lemma 2.10 and MAGMA, it can be concluded that $d(C_{m})=5$ and $d(\phi_M(\mathscr{C}_{n})) = 5$.
		According to Lemma 2.9, $d(\Phi_M(\mathcal{C}))=5$, and by Lemma 2.10, $k=m+2n-\deg(g_0(x))-\deg(g_1(x))-\deg(g_2(x))=210$.
		
		Then by Theorem 3.15 (ii) and MAGMA, $d(\Sum_H(C_{m}))=4$. Likewise by Theorem 3.18 (ii) and MAGMA, $d(\Sum_H(\phi_M(\mathscr{C}_{n})))=4$. Furthermore, based on Theorem 3.18 (i), \[d(\Sum_H(\Phi_M(\mathcal{C}))) = \min \{d(\Sum_H(C_{m})), d(\Sum_H(\phi_M(\mathscr{C}_{n})))\} =4.\] 
		
		In addition, we use the MAGMA to obtain \[\deg(\lcm(g_0(x),h_0^\dagger(x)))=37,\deg(\lcm(g_1(x),h_1^\dagger(x)))= 88,\deg(\lcm(g_2(x),h_2^\dagger(x)))= 88.\]
		Therefore, from Theorem 3.12 (iv), it follows that \[\dimension_{\mathbb{F}_{9^2}}(\Hull_H(\Phi_M(\mathcal{C})))=m+2n-37-176=9.\] Then 
		\[e=m+2n-k-\dimension_{\mathbb{F}_{9^2}}(\Hull_H(\Phi_M(\mathcal{C})))=3\]
		Hence, applying Theorem 4.6 yields a QECC with parameters $[[225, 201, 5]]_9$. The dimension of our code is superior to that of the QECC $[[225, 193, 5]]_9$ from \cite{Edel2025}.
		
	\end{example}

	Table 1 shows some good QECCs from the Hermitian hulls and sums of sepearable $\mathcal{S}$-$t$-constacyclic codes by Theorem 4.6, where $\mathcal{S} = \mathbb{F}_{p^2} \times (\mathbb{F}_{p^2} +v\mathbb{F}_{p^2})$ and $t=(\lambda, \beta(1-2v))$. In the table, the generator polynomial $g_i(x) = b_0 + b_1x + \cdots + b_k x^k$ is written as $b_0 b_1 \cdots b_k$, and $A$ represents $10$.
	
	\begin{remark}
		(1) Since the lengths of QECCs constructed from constacyclic codes over the single ring are all integer multiples, we use the hulls and sums of separable constacyclic codes over the mixed alphabets to yield QECCs, and can also generate some QECCs of prime length with good parameters, such as $[[193, 179, 3]]_{5}$, $[[109, 99, 3]]_7$, $[[113, 97, 4]]_{7}$, $[[193, 177, 4]]_{7}$, $[[127, 117, 3]]_{9}$.
		
		(2) QECCs $[[193,179,3]]_{5}$, $[[205,191,3]]_{5}$, $[[95,79,4]]_{7}$, $[[225,201,5]]_{9}$ have a higher dimension than the known QECCs $[[193,175,3]]_{5}$, $[[205,187,3]]_{5}$, $[[95,77,4]]_{7}$, $[[225,193,5]]_{9}$ in \cite{Edel2025}.
		Compared with QECCs $[[108,86,4]]_{7}$, $[[189,165,4]]_{11}$ in \cite{Qian2026}, QECCs $[[110,96,4]]_{7}$, $[[187,169,4]]_{11}$ get a higher code rate.
		A higher code rate and a larger minimum distance are achieved by QECCs $[[22,6,5]]_{11}$, $[[62,46,5]]_{11}$ compared with QECCs $[[21,3,4]]_{11}$, $[[63,45,3]]_{11}$ in \cite{Caliskan2023-1}. 
		And a higher dimension and a larger minimum distance are possessed by QECCs $[[145,129,4]]_{5}$, $[[96,86,3]]_{11}$ compared with QECCs $[[145,123,3]]_{5}$ in \cite{Tian2024}, $[[96,84,2]]_{11}$ in \cite{Caliskan2023-1}. 
		\end{remark}

	\begin{landscape}
		\begin{table}[ht]
			\footnotesize \centering
			\caption{QECCs from Hermitian hulls and sums of separable $\mathcal{S}$-$t$-constacyclic codes}
			\label{tab:my_table}
			\begin{tabular}{cccccccccccc}
				
				\hline
				\( p \) & \( (m,n) \) & \( t \) & \( g_0(x) \)& \( g_1(x) \) & \( g_2(x) \) & $\dimension_{\mathbb{F}_{p^2}}(\Hull_H(\Phi_M(\mathcal{C})))$ &  $e$ & \([[n, k, d]]_p\) & \([[n', k', d']]_p\) \\
				\hline

				$5$    & $(30, 28)$    & $(\omega^{12}, 1-2v)$    & $(21)(\omega^{2}1)^2$    & $11$    &$\omega^{3}\omega^{7}\omega^{2}1$     &$6$   &  $1$   &\([[87, 73, \geq 3]]_5\)     & \([[86, 70, 3]]_5\)\cite{Tian2024}      \\
				
				$5$    & $(65, 16)$    & $(\omega^{12}, 1-2v)$    & $(11)^2(1\omega^{7}1)$    & $11$    &$\omega^{3}0001$     &$8$   &  $1$   &\([[98, 80, \geq 3]]_5\)     & \([[98, 78, 3]]_5\)\cite{Tian2024}      \\
				
				$5$    & $(90, 4)$    & $(\omega^{12}, 1-2v)$    & $(\omega^{2}1)^2(\omega^{2}001)$    & $11$    &$\omega^{3}1$     &$6$   &  $1$   &\([[99, 85, \geq 3]]_5\)     & \([[99, 83, 3]]_5\)\cite{Qian2026}      \\
				
				$5$    & $(24, 60)$    & $(1, 1-2v)$    & $(11)(\omega1)(\omega^{2}1)$    & $(11)(\omega^{2}1)$    &$(\omega1)^3$     &$7$   &  $1$   &\([[145, 129, \geq 4]]_5\)     & \([[145, 123, 3]]_5\)\cite{Tian2024}      \\
				
				$5$    & $(24, 60)$    & $(1, 1-2v)$    & $(11)(\omega1)$    & $11$    &$(\omega1)^2$     &$4$   &  $1$   &\([[145, 135, \geq 3]]_5\)     & \([[144, 134, 3]]_5\)\cite{Edel2025}      \\
				
				$5$    & $(40, 60)$    & $(1, 1-2v)$    & $(11)(21)(\omega^{3}1)^3$    & $(11)(\omega^{2}1)$    &$(\omega1)^3$     &$10$   &  $0$   &\([[160, 140, \geq 4]]_5\)     & \([[160, 134, 4]]_5\)\cite{Pandey2024}      \\
				
				$5$    & $(13, 90)$    & $(\omega^{12}, 1-2v)$    & $1\omega^{7}1$    & $\omega^{4}001$    &$(\omega^{2}1)^2$     &$7$   &  $0$   &\([[193, 179, \geq 3]]_5\)     & \([[193, 175, 3]]_5\)\cite{Edel2025}      \\
				
				$5$    & $(24, 90)$    & $(1, 1-2v)$    & $(11)(\omega1)$    & $\omega^{4}001$    &$(\omega^{2}1)^2$     &$6$   &  $1$   &\([[205, 191, \geq 3]]_5\)     & \([[205, 187, 3]]_5\)\cite{Edel2025}      \\

				$7$    & $(70, 12)$    & $(1, 1-2v)$    & $(61)^3(1\omega^{3}1)$    & $(11)(\omega^{4}1)$    &$\omega^{2}1$     &$7$   &  $1$   &\([[95, 79, \geq 4]]_7\)     & \([[95, 77, 4]]_7\)\cite{Edel2025}      \\
				
				$7$    & $(84, 12)$    & $(1, 1-2v)$    & $(11)^2(\omega^{4}1)$    & $11$    &$\omega^{2}1$     &$4$   &  $1$   &\([[109, 99, \geq 3]]_7\)     & \([[108, 93, 3]]_7\)\cite{Edel2025}      \\
				
				$7$    & $(60,24)$    & $(1, 1-2v)$    & $(11)(31)(3\omega1)$    & $(11)(\omega^{2}1)$    &$\omega1$     &$5$   &  $2$   &\([[110, 96, \geq 4]]_7\)     & \([[108, 86, 4]]_7\)\cite{Qian2026}      \\
				
				$7$    & $(80, 16)$    & $(1, 1-2v)$    & $(\omega^{3}1)(\omega^{9}1)(\omega^{12}\omega^{3}1)$    & $(11)(\omega^{3}1)$    &$\omega^{3}01$     &$7$   &  $1$   &\([[113, 97, \geq 4]]_7\)     & \([[112, 92, 4]]_7\)\cite{Edel2025}      \\
				
				$7$    & $(75, 20)$    & $(1, 1-2v)$    & $(31)(1\omega^{2}1)$    & $\omega^{12}1$    &$\omega^{12}\omega^{3}1$     &$5$   &  $1$   &\([[116, 104, \geq 3]]_7\)     & \([[116, 100, 3]]_7\)\cite{Tian2024}      \\
				
				$7$    & $(84, 24)$    & $(1, 1-2v)$    & $(11)^2(\omega^{4}1)$    & $11$    &$\omega1$     &$4$   &  $1$   &\([[133, 123, \geq 3]]_7\)     & \([[132, 118, 3]]_7\)\cite{Edel2025}      \\
				
				$7$    & $(24, 84)$    & $(\omega^{24}, 1-2v)$    & $(\omega1)(\omega^{3}1)(\omega^{5}1)$    & $(11)^3(\omega^{4}1)$    &$\omega^{2}1$     &$7$   &  $1$   &\([[193, 177, \geq 4]]_7\)     & \([[192, 176, 4]]_7\)\cite{Edel2025}      \\

				$9$    & $(60, 32)$    & $(1, 1-2v)$    & $(11)^2(\omega^{4}1)$    & $11$    &$\omega^{5}0001$     &$6$   &  $2$   &\([[126, 110, \geq 3]]_9\)     & \([[126, 20, 3]]_9\)\cite{Ashraf2022}      \\
				
				$9$    & $(5, 60)$    & $(\omega^{40}, 1-2v)$    & $(11)(\omega^{16}1)$    & $(11)^2$    &$\omega^{2}1$     &$3$   &  $2$   &\([[127, 117, \geq 3]]_9\)     & \([[125, 108, 3]]_9\)\cite{Edel2025}      \\
				
				$9$    & $(40, 91)$    & $(1, 1-2v)$    & $(11)(\omega^{2}1)(\omega^{4}1)(\omega^{6}1)$    & $(21)(2\omega^{10}\omega^{10}1)$    &$(11)(1\omega^{50}\omega^{10}1)$     &$9$   &  $3$   &\([[225, 201, \geq 5]]_9\)     & \([[225, 193, 5]]_9\)\cite{Edel2025}      \\

				$11$    & $(11, 5)$    & $(\omega^{60}, 1-2v)$    & $(11)^4$    & $(21)(81)$    &$(11)(41)$     &$7$   &  $1$    &\([[22, 6, \geq 5]]_{11}\)     & \([[21, 3, 4]]_{11}\)\cite{Caliskan2023-1}      \\
				
				$11$    & $(20, 20)$    & $(1, 1-2v)$    & $(11)(\omega^{6}1)(21)(\omega^{18}1)$    & $(11)(\omega^{6}1)$    &$(\omega^{3}1)(\omega^{9}1)$     &$6$   &  $2$    &\([[62, 46, \geq 5]]_{11}\)     & \([[63, 45, 3]]_{11}\)\cite{Caliskan2023-1}      \\
				
				$11$    & $(15, 40)$    & $(1, 1-2v)$    & $(\omega^{4}1)(21)$    & $11$    &$\omega^{3}01$     &$4$   &  $1$    &\([[96, 86, \geq 3]]_{11}\)     & \([[96, 84, 2]]_{11}\)\cite{Caliskan2023-1}      \\
				
				$11$    & $(11, 88)$    & $(1, 1-2v)$    & $(A1)^3$    & $(11)^3(\omega^{15}1)$    &$\omega^{15}01$     &$9$   &  $0$    &\([[187, 169, \geq 4]]_{11}\)     & \([[189, 165, 4]]_{11}\)\cite{Qian2026}      \\

				\hline
			\end{tabular}
		\end{table}
	\end{landscape}

	\section*{5 Conclusion}
	
	In this work, we put forward two methods to produce new QECCs from separable constacyclic codes over $\mathcal{S} = \mathbb{F}_q \times (\mathbb{F}_q + v \mathbb{F}_q) $. First, we discussed the generator polynomials of the Euclidean and Hermitian hulls of separable $\mathcal{S}$-$t$-constacyclic codes and their Gray images, where $t= (\lambda, \beta(1-2v))$. Then, the generator polynomials of the Euclidean and Hermitian sums and their Gray images were also presented. As an application, we derived lots of QECCs with improved parameters.

	\section*{Acknowledgements}
	This research was supported by the National Natural Science Foundation of China (Grant Nos.122
	71137, 12501729) and Research Project on Scientific Research and Practical Innovation of 
	Hefei University (Grant No.2025Yxscx04).
	
	\section*{Author Contributions}
	All authors contributed to the study conception and design. All authors discussed the results, revised and approved the manuscript.
	
	\section*{Declarations}
	
	\subsection*{Conflict of Interests}
	The authors have no competing interests to declare that are relevant to the content of this article.

\end{document}